\documentclass[aps,prl,reprint,nobalancelastpage,nofootinbib]{revtex4-2}
\pdfoutput=1

\setcounter{secnumdepth}{3}

\usepackage[utf8]{inputenc}

\usepackage{amsmath}
\usepackage{amssymb}
\usepackage[all]{xy}
\usepackage{tikz}
\usetikzlibrary{calc}

\usepackage[toc,page]{appendix}
\usepackage{enumerate}
\usepackage{tensor}
\usepackage{booktabs}
\usepackage{bbm}
\usepackage{youngtab}
\usepackage{slashed}
\usepackage{multirow}
\usepackage{makecell}
\usepackage{float}
\usepackage{comment}
\usepackage{verbatim}
\usepackage{xcolor}
\usepackage[framemethod=tikz]{mdframed}
\usepackage{soul}
\usepackage{mathtools}
\usepackage[inline]{enumitem}
\usepackage{alphabeta}
\usepackage{accents}

\usepackage{amsthm}
\newtheorem{theorem}{Theorem}
\newtheorem{lemma}{Lemma}
\newtheorem{corollary}{Corollary}[lemma]
\newtheorem{remark}{Remark}[lemma]

\newcommand{\ab}{|}
\newcommand{\der}{\partial}
\newcommand{\de}{\mathrm{d}}

\newcommand{\x}{\mathrm{x}}
\newcommand{\w}{\mathrm{w}}
\newcommand{\y}{\mathrm{y}}

\newcommand{\rmxi}{\text{\xi}}
\newcommand{\rmvarphi}{\text{\varphi}}
\newcommand{\bx}{\overline{\x}}
\newcommand{\bw}{\overline{\w}}
\newcommand{\by}{\overline{\y}}

\newcommand{\hx}{\hat{\x}}
\newcommand{\hw}{\hat{\w}}
\newcommand{\hy}{\hat{\y}}

\newcommand{\sdil}{\phi}

\newcommand{\tsdil}{\tilde{\phi}}
\newcommand{\tomega}{\tilde{\omega}}
\newcommand{\tdelta}{\tilde{\delta}}

\newcommand{\tsigma}{\tilde{\sigma}}

\newcommand{\e}{\mathrm{e}}
\newcommand{\I}{\mathrm{i}}

\newcommand{\NS}{\mathrm{NS}}
\newcommand{\R}{\mathrm{R}}

\allowdisplaybreaks

\usepackage{hyperref}
\hypersetup{
    colorlinks=true,
    linkcolor=magenta,
    citecolor=magenta,
    filecolor=magenta,      
    urlcolor=magenta,
}

\begin{document}
\numberwithin{equation}{section}

\title{Analytic bounds on late-time axion-scalar cosmologies}
\author{Gary Shiu}
\email{shiu@physics.wisc.edu}
\affiliation{Department of Physics, University of Wisconsin-Madison, 1150 University Avenue, Madison, WI 53706, USA}
\author{Flavio Tonioni}
\email{flavio.tonioni@kuleuven.be}
\affiliation{Instituut voor Theoretische Fysica, KU Leuven, Celestijnenlaan 200D, B-3001 Leuven, Belgium}
\author{Hung V. Tran}
\email{hung@math.wisc.edu}
\affiliation{Department of Mathematics, University of Wisconsin-Madison, 480
Lincoln Drive, Madison, WI 53706, USA}

\begin{abstract}
The cosmological dynamics of multiple scalar/pseudoscalar fields are difficult to solve, especially when the field-space metric is curved. This presents a challenge in determining whether a given model can support cosmic acceleration, without solving for the on-shell solution. In this work, we present bounds on late-time FLRW-cosmologies in classes of theories that involve arbitrary numbers of scalar and pseudoscalar fields coupled both kinetically (leading to a curved field space metric) and through scalar potentials. Such bounds are proven analytically, independently of initial conditions, with no approximation in the field equations and without referring to explicit solutions. Besides their broad applications to cosmological model building, our bounds can be applied to studying asymptotic cosmologies of certain classes of string compactifications.
\end{abstract}

\maketitle

\section{Introduction}

Since the first direct experimental evidence for dark energy, the origin of cosmic acceleration from a fundamental theory of gravity has been a riveting puzzle.
The simplest source of dark energy is a positive cosmological constant.
Yet, cosmic acceleration only requires the parameter $w$ of the equation of state of dark energy to be $w<-1/3$, or equivalently $\epsilon \equiv - \dot{H}/{H^2} < 1$, where $H$ is the Hubble parameter.
Except for the case of a cosmological constant ($w=-1$), dark energy evolves with time -- a scenario often known as ``quintessence'' \cite{Ratra:1987rm,Wetterich:1987fm,Caldwell:1997ii}.
Recently, the first-year observations of the Dark Energy Spectroscopic Instrument (DESI) give a tantalizing hint of an evolving dark-energy equation of state \cite{DESI:2024mwx}.
While it is too early to tell with the current statistical significance whether the $\Lambda$CDM-model is disfavored, there is enough of a motivation to revisit the possibility of a varying dark energy; see e.g. refs.~\cite{Tada:2024znt, Payeur:2024kyy, Berghaus:2024kra, Andriot:2024jsh, Bhattacharya:2024hep, Ramadan:2024kmn} for recent works on quantifying the DESI results in terms of the dark-energy potential.
In fact, prior to this recent experimental development, a number of independent works, inspired in part by Swampland considerations \cite{Obied:2018sgi, Ooguri:2018wrx}, studied accelerating cosmologies with rolling scalar fields; see e.g. refs. \cite{Agrawal:2018own, Agrawal:2018rcg, Hebecker:2019csg, Cicoli:2020cfj, Rudelius:2021azq, Cicoli:2021fsd, Rudelius:2022gbz, Andriot:2022xjh, Marconnet:2022fmx, Shiu:2023nph, Shiu:2023fhb, VanRiet:2023cca, Andriot:2023wvg}.

In particular, in refs.~\cite{Shiu:2023nph, Shiu:2023fhb}, the present authors derived theoretical late-time bounds on a class of quintessence models.
For flat $d$-dimensional FLRW-cosmologies, we showed that the late-time $\epsilon$-parameter of a theory of canonical scalars coupled through multi-exponential potentials is bounded from below as $\smash{\epsilon \geq [(d-2)/4] \, (\gamma_\infty)^2}$, where $\gamma_\infty$ is the minimal-length vector joining the coupling convex hull to the origin \cite{Shiu:2023nph}.
For positive-definite potentials, and up to a few assumptions on the shape of the coupling convex hull, we then showed that there is a unique late-time attractor solution, which furthermore saturates the bound \cite{Shiu:2023fhb}.
Such analyses provided non-perturbative proofs of the late-time behavior of the cosmological solutions, going beyond the linear-stability analyses dating back to refs.~\cite{Copeland:1997et, Collinucci:2004iw, Hartong:2006rt}.\footnote{Throughout this article, by ``non-perturbative'' proofs and convergence, we mean that we describe any possible general solution of the classical equations of motion; we do not merely perform perturbative linear stability analyses of exact solutions. Of course, for string-theoretic applications, the classical equations we consider here originate perturbatively in the string-coupling and $\alpha'$-expansions.}
Despite this progress, there are some important extensions that would enhance the reach of our approach.
This is because the scalars in string theory (and supergravity) that have an unlimited field range typically come with their axionic superpartners.

The bounds in refs.~\cite{Shiu:2023fhb,Shiu:2023nph} were derived under the assumption that all the axions that are pseudoscalars with a compact field space are stabilized.
This can be achieved in string compactifications, yet one should also classify cases where some of the axions remain dynamical.
If the axions partake in the dynamics, the cosmological equations governing the multi-field system are generally formidably difficult to analyze due to curvature in the field space, which implies kinetic couplings.
The aim of the present work is to cover a wider class of quintessence models that can arise from a fundamental theory of gravity, extending our earlier analysis to theories with multiple scalars and axions, in which the field space metric is negatively curved.
There exist large classes of string compactifications where this is the case in asymptotic limits \cite{Ooguri:2006in}, although exceptions exist \cite{trenner2010asymptotic, Marchesano:2023thx, Raman:2024fcv}.
We believe that our current study is a necessary step towards a fully complete characterization of the late-time dynamics governing multi-field cosmologies in string theory.

We will present general classes of theories with fields coupled exponentially both in the kinetic and potential energy terms in which the late-time bounds for canonical scalars are unaltered, providing geometric and physical arguments in the main text, substantiated by rigorous mathematical proofs in app. \ref{app: late-time scalar-axion cosmologies}.
In the derivation of all our conclusions, we do not make any approximation on the field equations, we do not need to find an explicit solution to such equations, and we do not make reference to the initial conditions: the bounds apply to all possible fully-fledged time-dependent solutions of the cosmological equations.
Across the literature, this is unique to our approach \cite{Shiu:2023nph, Shiu:2023fhb}.
Studies of analogous theories based on the linear stability of exact solutions and numerical checks appear e.g. in refs.~\cite{Sonner:2006yn, Cicoli:2020cfj, Cicoli:2020noz, Russo:2022pgo, Brinkmann:2022oxy, Revello:2023hro, Seo:2024qzf}.\footnote{See also refs.~\cite{Copeland:1997et, Collinucci:2004iw, Hartong:2006rt}.
For further recent studies, see e.g. refs.~\cite{Conlon:2022pnx, Rudelius:2022gbz, Marconnet:2022fmx, Apers:2022cyl, Hebecker:2023qke, VanRiet:2023cca, Andriot:2023wvg, Apers:2024ffe}.}
As far as our current investigation is concerned, these earlier studies provide valuable examples to test our bounds.
Another aspect in which our results go beyond the existing literature \cite{Sonner:2006yn, Cicoli:2020cfj, Cicoli:2020noz, Russo:2022pgo, Brinkmann:2022oxy, Revello:2023hro, Seo:2024qzf} lies in the fact that, for the cases we constrain, they apply to non-diagonal kinetic couplings and to arbitrary numbers of potential terms.
Finally, we stress that our methods are also applicable to the study of cosmic contraction, as shown in ref.~\cite{Shiu:2023yzt}.
Models involving exponential couplings in kinetic and negative potential terms appear e.g. in refs.~\cite{Li:2013hga, Fertig:2013kwa, Li:2014qwa, Levy:2015awa, Ijjas:2021zyf, Quintin:2024boj}.

Although our results are independent of the stability of critical-point solutions, we discuss the critical points (known and new ones) as well, and speculate on which one might presumably be the late-time attractor.
A fully general list of the critical points is in app. \ref{app: late-time scalar-axion cosmologies}.
In app. \ref{app.: axion couplings in string-theoretic models}, we report the scalings of the kinetic couplings of RR- and NSNS-axions to the dilaton and the universal radion.
This can serve as a testing ground for our results in string compactifications; for complex-structure moduli, more sophisticated tools are needed and we refer to the literature; see e.g. refs.~\cite{Grimm:2004uq, Grimm:2004ua, Grimm:2019ixq}.

\section{Bounds on cosmic acceleration} \label{sec.: bounds cosmic acceleration}
In this section, we present and discuss three general scenarios for axion-scalar cosmologies in which, at arbitrarily late times, we can bound analytically the $\epsilon$-parameter.
Before going into the details, we contextualize the setup we consider and provide definitions that we use to formulate our results.

Let the non-compact spacetime be described by the FLRW-metric
\begin{equation} \label{FLRW-metric}
    d \tilde{s}_{1,d-1}^2 = - \de t^2 + a^2(t) \, \de l_{\mathbb{R}^{d-1}}^2.
\end{equation}
Here, $a$ is the scale factor, which defines the Hubble parameter $\smash{H = \dot{a}/a}$, and $\mathbb{R}^{d-1}$ is the flat $(d-1)$-dimensional Euclidean space.
In this article, we study cosmologies involving $n$ canonical scalars $\phi^a$ and $p$ axions $\zeta^r$, for $a=1,\dots,n$ and $r=1,\dots,p$.\footnote{This terminology is motivated by the fact that string-theoretic axions typically feature kinetic couplings of this kind. There is no additional meaning to the word ``axions'' in this article.}
Assuming for simplicity that all variables only depend on cosmological time, we consider the total kinetic terms of the form
\begin{equation} \label{axion-scalar kinetic energy}
    T[\phi, \zeta] = \dfrac{1}{2} \sum_{a=1}^n (\dot{\phi^a})^2 + \dfrac{1}{2} \sum_{r=1}^p \e^{-\kappa_d \sum_{a} \lambda_{r a} \phi^a} \, (\dot{\zeta}^r)^2,
\end{equation}
i.e. the scalars are kinetically-coupled to the axions as parameterized by the couplings $\lambda_{r a}$.
Here, we consider multi-field multi-exponential scalar potentials of the form
\begin{equation} \label{generic exponential potential}
    V[\phi] = \sum_{i = 1}^m \Lambda_i \, \e^{- \kappa_d \sum_a \gamma_{i a} \phi^a},
\end{equation}
where the $\Lambda_i$-terms are positive constants. Cosmologies with a (pseudo)scalar sector of the form in eqs.~(\ref{axion-scalar kinetic energy}, \ref{generic exponential potential}) are of interest to cosmological model building.\footnote{For the roles of axions/saxions in constructing particle physics models from string theory, see the recent review ref.~\cite{Marchesano:2024gul}.}
Furthermore, there exist classes of string compactifications whose effective field theories take this form.
In fact, it was argued that, in the asymptotic regions of the field space, this form of the potential is indeed expected \cite{Dine:1985he, Ooguri:2006in, Ooguri:2018wrx, Hebecker:2018vxz}.
Axions, which fulfill perturbative shift symmetries, appear only derivatively in the action and hence they do not contribute to the potential.
However, the couplings of the axion kinetic terms to the scalars are not forbidden by the shift symmetries.
The form of axion kinetic terms in eq.~(\ref{axion-scalar kinetic energy}) often appears in string theory and supergravity settings.
Indeed, string compactifications naturally involve a plethora of pseudoscalars whose action is invariant under a perturbative shift symmetry.
Such fields then appear in the effective action only through exponential kinetic couplings to the canonical scalars $\phi^a$.
This is a direct consequence of their origin as Kaluza-Klein zero-modes of higher-dimensional antisymmetric tensor fields, such as the NSNS- and RR-forms, and the shift symmetry is a manifestation of the gauge invariance of the original fields.
Effects such as couplings to localized sources and background fluxes might stabilize these pseudoscalars,\footnote{This allows us to consider the $\Lambda_i$-terms to be constants. In general, they can depend polynomially on such pseudoscalars, but once in the minimum, such fields are constant.
Multi-branched flux potentials for axions appeared in the context of M-theory compactifications on $G_2$-manifolds \cite{Beasley:2002db}. They were later used to generate potentials for axion monodromy inflation \cite{Marchesano:2014mla, Blumenhagen:2014gta, Hebecker:2014eua, McAllister:2014mpa}.} but not all of them. These latter fields are the axions $\zeta^r$ we study in this paper.
In general, however, the kinetic terms may not be diagonal in the original basis of the moduli.
Nonetheless, eqs.~(\ref{axion-scalar kinetic energy}, \ref{generic exponential potential}) cover a sufficiently general class of theories that warrant our detailed study {and a necessary starting point}.

The complete cosmological equations read
\begin{subequations}
\begin{align}
    & \ddot{\zeta}^r - \kappa_d \dot{\zeta}^r \sum_{a=1}^m \lambda_{ra} \dot{\phi}^a + (d-1) H \dot{\zeta}^r = 0, \label{axion-scalar axion FRW-KG eq.} \\
    & \begin{aligned}[b]
    \ddot{\phi}^a + \dfrac{1}{2} \kappa_d \sum_{r=1}^p \lambda_{ra} \, \e^{-\kappa_d \sum_{b} \lambda_{r b} \phi^b} \, (\dot{\zeta}^r)^2 & \\[-1.25ex]
    + (d-1) H \dot{\phi}^a + \dfrac{\der V}{\der \phi^a} & = 0, \end{aligned} \label{axion-scalar scalar FRW-KG eq.} \\[-0.5ex]
    & H^2 = \dfrac{2 \kappa_d^2}{(d-1) (d-2)} \bigl[ T[\phi, \zeta] + V[\phi] \bigr]. \label{axion-scalar Friedmann eq.}
\end{align}
\end{subequations}
A combination of eq.~(\ref{axion-scalar axion FRW-KG eq.}, \ref{axion-scalar scalar FRW-KG eq.}, \ref{axion-scalar Friedmann eq.}) gives the further useful relationship
\begin{equation} \label{axion-scalar acceleration eq.}
    \dot{H} = - \dfrac{2 \kappa_d^2}{d-2} \, T[\phi,\zeta].
\end{equation}
Non-negative potentials constrain the $\epsilon$-parameter as $0 \leq \epsilon \leq d-1$ at all times.
It is known that eqs.~(\ref{axion-scalar axion FRW-KG eq.}, \ref{axion-scalar scalar FRW-KG eq.}, \ref{axion-scalar Friedmann eq.}) can be recast in terms of a constrained autonomous system of first-order ordinary differential equations \cite{Copeland:1997et, Cicoli:2020cfj}: we will exploit this formulation to get sharp bounds on the $\epsilon$-parameter at late times.

A few definitions are in order.
Let $\gamma_i$ and $\lambda_r$ denote the vectors with components $\smash{(\gamma_i)_a = \gamma_{ia}}$ and $\smash{(\lambda_r)_a = \lambda_{ra}}$, for $i=1,\dots,m$ and $r=1,\dots,p$.
Let $\gamma_\infty$ be the minimal-length vector joining the origin to the potential-coupling convex hull: in this article, we always consider potentials with $\smash{(\gamma_\infty)^2>0}$, i.e. potentials with no minimum.
Let $\smash{\gamma_a = \min_i \gamma_{ia}}$ and $\smash{\Gamma_a = \max_i \gamma_{ia}}$, and let $\Gamma(d) = 2 \sqrt{(d-1)/(d-2)}$. All the bounds we will present are basis-independent; indeed, the theory is invariant under $\smash{\mathrm{O}(n)}$-rotations on canonical scalars.

\subsection{Axion-scalar antialignment}
Here we consider coupling vectors that, for all $r$-indices, fulfill the geometric condition
\begin{equation} \label{axion-scalar antialignment condition}
    \gamma_\infty \cdot \lambda_r < 0.
\end{equation}
If $(\gamma_{\infty})^2 \leq \Gamma^2(d)$, then we can demonstrate that, at late-enough times, the $\epsilon$-parameter takes the value
\begin{equation} \label{axion-scalar antialignment: late-time epsilon}
    \epsilon = \dfrac{d-2}{4} \, (\gamma_\infty)^2.
\end{equation}
Moreover, all the axions completely decouple from the dynamics, i.e. $\dot{\zeta}^r=0$, and the field equations have a unique late-time attractor.
This coincides with the attractor of the theory with just the canonical scalars and the potential being truncated to the ones with couplings lying on the hyperplane of the original convex hull that is orthogonal to the vector $\gamma_\infty$. An example is in fig. \ref{fig.: axion-scalar antialignment}.
Rigorous mathematical proofs are in app. \ref{app: late-time scalar-axion cosmologies}: see theorem \ref{theorem: axion-scalar antialignment - X convergence} and corollaries \ref{corollary: axion-scalar antialignment - W convergence}-\ref{corollary: axion-scalar antialignment - xi convergence}.
To motivate this result physically, we can see that the axion kinetic energy and the scalar potential have inverse exponential dependencies on the scalars, and at least one tends to vanish as the scalars evolve to asymptotic values.
As kinetic-energy domination means maximal deceleration, the energy density of a purely-kinating field would fall off over time more quickly than that of a scaling solution, where the kinetic and the potential energy of a scalar fluid are of the same order, hence the bound.
If $\smash{(\gamma_\infty)^2 \geq \Gamma^2(d)}$, then at late times one has the exact value $\smash{\epsilon = d-1}$, which is simply pure kination in the presence of exceedingly steep potentials.
A rigorous proof is in app. \ref{app: late-time scalar-axion cosmologies}: see corollary \ref{corollary: axion-scalar antialignment for large c - xi convergence}.

\begin{figure}[ht]
    \centering
    \begin{tikzpicture}[xscale=0.65,yscale=0.65,every node/.style={font=\normalsize}]
 
    \begin{scope}
        \draw[orange, thick, fill=orange!35!white] (5,2) -- (3.5,4.5) -- (-1.2,4) -- (5,2);
    
        \draw[->, thick, teal] (0,0) -- (5,2) node[right,black]{$\gamma_1$};
        \draw[->, thick, teal] (0,0) -- (3.5,4.5) node[above left,black]{$\gamma_2$};
        \draw[->, thick, teal] (0,0) -- (-1.2,4) node[above,black]{$\gamma_3$};

        \draw[->, thick, cyan] (0,0) -- (-3,0.5) node[left,black]{$\lambda_1$};
        \draw[->, thick, cyan] (0,0) -- (-3.5,-1.5) node[below,black]{$\lambda_2$};
        \draw[->, thick, cyan] (0,0) -- (0.5,-2) node[right,black]{$\lambda_3$};
        \draw[->, thick, cyan] (0,0) -- (1,-1) node[right,black]{$\lambda_4$};

        \draw[rotate=atan(31/10), densely dotted, magenta] (0,-3.3) -- (0,3.3) node[above right]{};

        \draw[->, thick, purple] (0,0) -- (1120/1061,3472/1061) node[right,pos=0.8]{$\gamma_\infty$};
    \end{scope}

    \begin{scope}[yshift=-230pt]
        \draw[orange, thick, fill=orange!35!white, opacity=0.15] (5,2) -- (3.5,4.5) -- (-1.2,4) -- (5,2);
        \draw[orange, thick] (5,2) -- (-1.2,4);
    
        \draw[->, thick, teal] (0,0) -- (5,2) node[right,black]{$\gamma_1$};
        \draw[->, thick, teal, opacity=0.15] (0,0) -- (3.5,4.5) node[above left,black]{$\gamma_2$};
        \draw[->, thick, teal] (0,0) -- (-1.2,4) node[above,black]{$\gamma_3$};

        \draw[->, thick, cyan, opacity=0.15] (0,0) -- (-3,0.5) node[left,black]{$\lambda_1$};
        \draw[->, thick, cyan, opacity=0.15] (0,0) -- (-3.5,-1.5) node[below,black]{$\lambda_2$};
        \draw[->, thick, cyan, opacity=0.15] (0,0) -- (0.5,-2) node[right,black]{$\lambda_3$};
        \draw[->, thick, cyan, opacity=0.15] (0,0) -- (1,-1) node[right,black]{$\lambda_4$};

        \draw[rotate=atan(31/10), densely dotted, magenta, opacity=0.15] (0,-3.3) -- (0,3.3) node[above right]{};

        \draw[->, thick, purple] (0,0) -- (1120/1061,3472/1061) node[right,pos=0.8]{$\gamma_\infty$};
    \end{scope}

    \end{tikzpicture}
    \caption{The top figure shows a convex hull (orange filled shape), with distance from the origin $\gamma_\infty$ (purple vector), generated by potential couplings $\gamma_i$ (teal vectors) and axion couplings $\lambda_r$ (cyan vectors) such that $\smash{\gamma_\infty \cdot \lambda_r < 0}$. The bottom figure shows the relevant part of the canonical-scalar theory that is relevant to computing the late-time $\epsilon$-parameter.}
    \label{fig.: axion-scalar antialignment}
\end{figure}
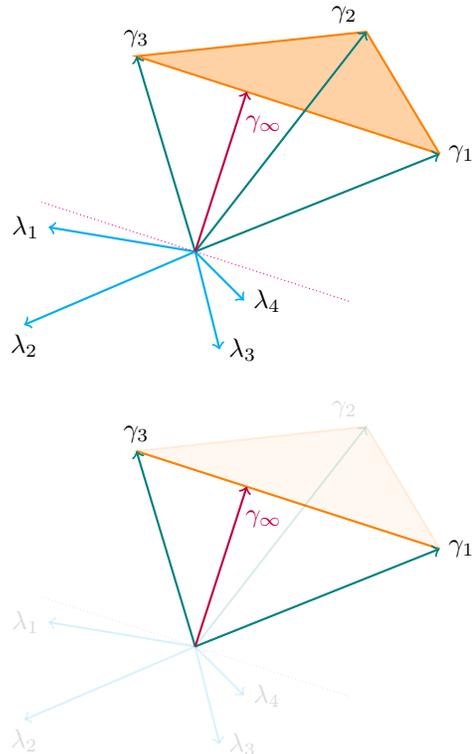

Finally, we emphasize an important implication of eq.~(\ref{axion-scalar antialignment: late-time epsilon}).
In ref.~\cite{Shiu:2023fhb}, in the absence of axions, scaling solutions are proven to be unique late-time attractors if $\smash{\mathrm{rank} \, \gamma_{ia} = m}$, with $m=n$, and if all the potential couplings are such that the vector $\gamma_\infty$ is orthogonal to the coupling convex hull; in particular, see ref.~\cite[ssec. III.B-sssec. IV.C.1]{Shiu:2023fhb}.
It turns out that these extra assumptions are not necessary to analytically prove convergence to the scaling solution.
Indeed, such a setup is a subcase of the result above in the further presence of axions.
For example, see figs. \ref{fig.: axion-scalar antialignment}\footnote{In fig. \ref{fig.: axion-scalar antialignment}, the result stands for the following reason. In the absence of axions, the scalar product in eq.~(\ref{axion-scalar antialignment condition}) is formally $\smash{\gamma_\infty \cdot \lambda_r = 0}$. In this case, corollary \ref{corollary: axion-scalar antialignment - W convergence}, which shows the vanishing of the axion kinetic energy under the assumption $\smash{\gamma_\infty \cdot \lambda_r < 0}$, is not needed since there is no axion kinetic energy in the first place.} and \ref{fig.: general convex-hull criterion}.
More comments are in app. \ref{app: late-time scalar-axion cosmologies}: see remark \ref{remark: paper-2 generalization}.

\begin{figure}[ht]
    \centering
    \begin{tikzpicture}[xscale=0.80,yscale=0.80,every node/.style={font=\normalsize},rotate=35]
    
    \draw[densely dotted, magenta!65!orange] (4,-5) -- (4,0.5);
    \draw[orange, thick, fill=orange!35!white] (4,-4.5) -- (4,-1.5) -- (6,-4) -- (4,-4.5);
    
    \draw[->, thick, teal] (0,0) -- (4,-4.5) node[right,black]{$\gamma_1$};
    \draw[->, thick, purple] (0,0) -- (4,-1.5) node[above,black]{$\gamma_2$} node[above, pos=0.65] {$\gamma_\infty$};
    \draw[->, thick, teal] (0,0) -- (6,-4) node[right,black]{$\gamma_3$};

    \draw[->, magenta!80!orange] (0,0) -- (4,0);
    
    \end{tikzpicture}
    \caption{The figure shows coupling vectors $\gamma_i$ (teal vectors) generating a convex hull (orange filled shape) with distance from the origin $\gamma_\infty$ (purple vector). Asymptotically, only the subset of potentials generating the minimal-distance convex hull is relevant to determining the late-time attractor.}
    \label{fig.: general convex-hull criterion}
\end{figure}
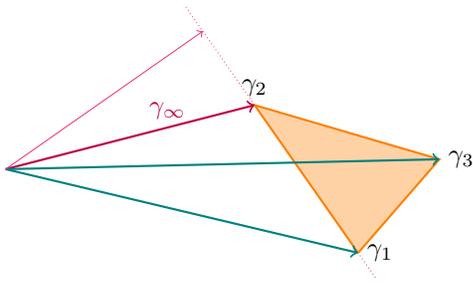

\subsection{Axion-scalar alignment} \label{ssec.: axion-scalar alignment}
Now we consider a different geometric arrangement of the coupling vectors.
To facilitate understanding of our results, we begin by presenting a minimal setup.
Then, we state our results in full generality.

\subsubsection{Minimal setup: 2-dimensional coupling space}
The minimal non-trivial setup we can consider is in a 2-dimensional coupling space, i.e. for $n=2$, with two potential terms and two axions, i.e. for $m=p=2$.
In a given coordinate system, we consider a pair of axion coupling vectors $\lambda_{1,2}$-vectors lying in the first quadrant.
Moreover, we consider a pair of linearly-independent potential coupling $\gamma_{1,2}$-vectors that are also in the first quadrant and that fall within the convex cone $\mathrm{C}(\lambda_1,\lambda_2)$ generated by the axion couplings.

The easiest of such configurations is for orthogonal pairs of vectors, as in fig. \ref{fig.: SUSY assisted inflation}.
In this case, if $\smash{(\Gamma)^2 \leq \Gamma^2(d)}$, where $\Gamma$ is the vector with components $\smash{\Gamma_a}$, then we can show analytically that the late-time $\epsilon$-parameter is bounded as $\epsilon \leq [(d-2)/4] \, (\Gamma)^2$.

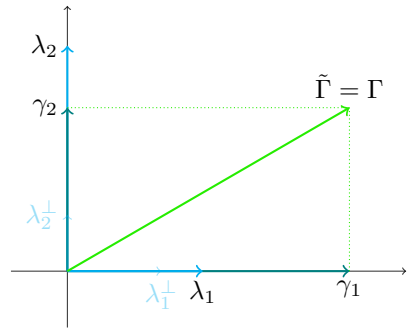
\begin{figure}[ht]
    \centering
    \begin{tikzpicture}[xscale=1.50,yscale=1.50,every node/.style={font=\normalsize},rotate=0]
    
    
    \draw[->,ultra thin,opacity=0.85] (-0.5,0) -- (3.0,0);
    \draw[->,ultra thin,opacity=0.85] (0,-0.5) -- (0,2.35);
    
    \draw[->, thick, teal] (0,0) -- (2.5,0) node[below,black]{$\gamma_1$};

    \draw[->, thick, cyan] (0,0) -- (6/5,0) node[below,black]{$\lambda_1$};
    \draw[->, thick, cyan] (0,0) -- (0,2) node[left,black]{$\lambda_2$};
    
    \draw[->, thick, teal] (0,0) -- (0,1.45) node[left,black]{$\gamma_2$};

    \draw[->, thin, cyan, opacity=0.4] (0,0) -- (5/6,0) node[below]{$\lambda^\perp_1$};
    \draw[->, thin, cyan, opacity=0.4] (0,0) -- (0,1/2) node[left]{$\lambda^\perp_2$};
    
    \draw[densely dotted, green!85!orange] (2.5,0) -- (2.5,1.45);
    \draw[densely dotted, green!85!orange] (0,1.45) -- (2.5,1.45);
    \draw[->, thick, green!85!orange] (0,0) -- (2.5,1.45) node[above,black]{$\tilde{\Gamma} = \Gamma$};
    
    
    \end{tikzpicture}
    \caption{The figure shows a parameter-space configuration with orthogonal pairs of axion (cyan) and potential (teal) coupling vectors. In this case, the $\Gamma$-vector (green) has a squared length equal to the square of the diagonal generated by the potential couplings.}
    \label{fig.: SUSY assisted inflation}
\end{figure}

If the vectors are not orthogonal, we define two vectors $\smash{\lambda^\perp_{1,2}}$ that have inverse length and positive scalar product with respect to $\lambda_{1,2}$, and that are orthogonal to the other vector $\lambda_{2,1}$, i.e. $\smash{\lambda_{1,2} \cdot \lambda^\perp_{1,2} = 1}$ and $\smash{\lambda_{1,2} \cdot \lambda^\perp_{2,1} = 0}$.
Note that such vectors have components $\smash{(\lambda^\perp_r)_a = [(\lambda^{-1})^{\mathrm{t}}]^{ra}}$.
An example is in fig. \ref{fig.: axion-scalar alignment}.
Let $\Pi_r$ be the maxima of the projections of the vectors $\gamma_{1,2}$ onto the vectors $\smash{\lambda^\perp_r}$, i.e. $\smash{\Pi_r = \max_i \, (\gamma_i \cdot \lambda^\perp_r)}$.
In our geometric setup, we can see that $\Pi_r \geq 0$. Now, we define $\tilde{\Gamma}$ to be the vector with components $\smash{\tilde{\Gamma}_a = \sum_{r=1}^2 \Pi^r \lambda_{ra}}$.
To visualize the meaning of this vector, one can again think of the case where the vectors $\lambda_r$ are orthogonal, as in fig. \ref{fig.: SUSY assisted inflation}: in this case, it is clear that $\smash{\tilde{\Gamma} = \Gamma}$.
Of course, if the $\lambda_r$-vectors are not orthogonal, this picture can be modified depending on how large the deviation from right angles is.
We can now state our bound, which is extremely simple in terms of the definitions above.
If $\smash{\Gamma \cdot \tilde{\Gamma} \leq \Gamma^2(d)}$, then the late-time $\epsilon$-parameter is bounded as $\smash{\epsilon \leq [(d-2)/4] \, (\Gamma \cdot \tilde{\Gamma})}$.
If both vectors $\gamma_1$ and $\gamma_2$ are very close to the bisector of the angle between $\lambda_1$ and $\lambda_2$, then $\smash{\Gamma \cdot \tilde{\Gamma}}$ is close to $(\Gamma)^2$.

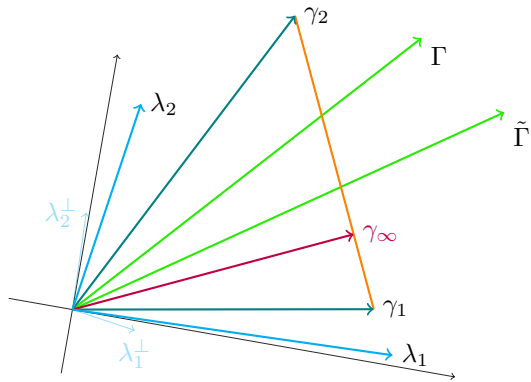
\begin{figure}[ht]
    \centering
    \begin{tikzpicture}[xscale=0.86,yscale=0.86,every node/.style={font=\normalsize},rotate=15]

    \begin{scope}[rotate=-25]
        \draw[->,ultra thin,opacity=0.85] (-1,0) -- (6,0);
        \draw[->,ultra thin,opacity=0.85] (0,-1) -- (0,4);
            
            \draw[->, thick, green!85!orange] (0,0) -- (4.586,5.074) node[below right,black]{$\Gamma$};
            \draw[->, thick, green!85!orange] (0,0) -- (6.045,4.164) node[below right,black]{$\tilde{\Gamma}$};
    \end{scope}

    \draw[orange, thick] (4.5,-1.2) -- (4.5,3.5);
    
    \draw[->, thick, teal] (0,0) -- (4.5,-1.2) node[right,black]{$\gamma_1$};
    \draw[->, thick, teal] (0,0) -- (4.5,3.5) node[right,black]{$\gamma_2$};
    \draw[->, thick, purple] (0,0) -- (4.5,0) node[right] {$\gamma_\infty$};

    \begin{scope}[rotate=30]
    
        \draw[->, thick, cyan] (0,0) -- (3,-4) node[right,black]{$\lambda_1$};
        \draw[->, thick, cyan] (0,0) -- (3,3/2) node[right,black]{$\lambda_2$};

        \draw[->, thin, cyan, opacity=0.4] (0,0) -- (5/11,-10/11) node[below]{$\lambda^\perp_1$};
        \draw[->, thin, cyan, opacity=0.4] (0,0) -- (40/33,10/11) node[left]{$\lambda^\perp_2$};
        
    \end{scope}
    
    \end{tikzpicture}
    \caption{The figure shows a parameter-space configuration such that, for all the axion couplings $\lambda_r$ (cyan), the projections of the orthogonal couplings $\smash{\lambda^\perp_r}$ (fading cyan) on all potential couplings $\gamma_i$ (teal) are non-negative. The minimal-distance vector $\gamma_\infty$ (purple) from the origin to the potential convex hull (orange) is also shown. An orthogonal coordinate system is also shown to facilitate identifying right angles.}
    \label{fig.: axion-scalar alignment}
\end{figure}

Depending on the geometry of the couplings, we can also identify lower bounds for the $\epsilon$-parameter. 
After presenting minimal scenarios with the alignment of axion-scalar couplings up to here, we discuss the most general geometric alignment scenario that we can bound below.

\subsubsection{General axion-scalar alignment}
We consider couplings with $\mathrm{rank} \, \gamma_{ia} = \mathrm{rank} \, \lambda_{ra} = n$, with $m, p \geq n$, which means the matrices $\gamma_{ia}$ and $\lambda_{ra}$ are left-invertible, and we define $p$ vectors $\smash{\lambda^\perp_r}$ with components $\smash{(\lambda^\perp_r)_a = [(\lambda^{-1})^{\mathrm{t}}]^{ra}}$. Let $\smash{\pi^r = \min_{i} \, \sum_{a=1}^n \gamma_{ia} \lambda^\perp_{ra}}$ and $\smash{\Pi^r = \max_{i} \, \sum_{a=1}^n \gamma_{ia} \lambda^\perp_{ra}}$.
Furthermore, we consider parameter spaces such that
\begin{equation} \label{axion-scalar alignment condition 1}
    \pi_r \geq 0
\end{equation}
for all $r$-indices and such that all vectors $\smash{\gamma_i}$ and $\smash{\lambda_r}$ have non-negative components, i.e.
\begin{equation} \label{axion-scalar alignment condition 2}
    \gamma_{ia}, \, \lambda_{ra} \geq 0
\end{equation}
for all $i$-, $r$- and $a$-indices.
Such conditions are easy to check in explicit examples: see e.g. figs. \ref{fig.: SUSY assisted inflation} and \ref{fig.: axion-scalar alignment}.
After defining a vector $\tilde{\Gamma}$ with components $\smash{\tilde{\Gamma}_a = \sum_{r=1}^p \Pi^r \lambda_{ra}}$, we are able to prove two analytic bounds.
\begin{itemize}[leftmargin=*]
    \item If all vectors $\gamma_i$ and $\lambda_r$ are such that
    \begin{equation} \label{axion-scalar alignment extra condition A}
        (\gamma_i + \lambda_r) \cdot \tilde{\Gamma} < \Gamma^2(d),
    \end{equation}
    then, at late-enough times, the $\epsilon$-parameter is such that
    \begin{equation} \label{axion-scalar alignment, A: late-time epsilon}
        \dfrac{d-2}{4} \, (\gamma_\infty)^2 \leq \epsilon \leq \dfrac{d-2}{4} \, (\tilde{\Gamma})^2.
    \end{equation}
    In this case, all axion kinetic energies vanish, i.e. at late-enough times one has $\smash{\dot{\zeta}^r = 0}$.
    \item If the inequality holds
    \begin{equation} \label{axion-scalar alignment extra condition B}
        \Gamma \cdot \tilde{\Gamma} \leq \Gamma^2(d),
    \end{equation}
    then the late-time $\epsilon$-parameter is bounded as
    \begin{equation} \label{axion-scalar alignment, B: late-time epsilon}
        \epsilon \leq \dfrac{d-2}{4} \, (\Gamma \cdot \tilde{\Gamma}).
    \end{equation}
    This is the general form of the bounds introduced in the previous subsubsection.
\end{itemize}
Both these bounds can be optimized in the bases that minimize the values of $\smash{(\tilde{\Gamma})^2}$ and $\smash{(\Gamma \cdot \tilde{\Gamma})}$.
Analytic proofs of these results are in app. \ref{app: late-time scalar-axion cosmologies}: see theorems \ref{theorem: axion-scalar alignment - xi bounds} and \ref{theorem: axion-scalar alignment - xi general bound}.

\subsection{Partial coupling misalignment}

We can prove one more result even if all the potential and kinetic couplings are in the same half of the coupling hyperplane.

Let $\smash{\Lambda_a = \max_r \, \lambda_{ra}}$.
If $\Lambda_a \leq 0$ and $\gamma_a + \Lambda_a \geq \Gamma(d)$ for at least one field $\phi^a$, then we can show that at late-enough times one has
\begin{equation} \label{large-gamma: late-time epsilon}
    \epsilon = d-1,
\end{equation}
which corresponds to a pure (semi-)eternal kination phase.
This is proven analytically in app. \ref{app: late-time scalar-axion cosmologies}: see corollary \ref{corollary: xi-limit for L<0 and c+L>1}.
Although the requirements for eq.~(\ref{large-gamma: late-time epsilon}) might seem restrictive, we can seek any scalar-field basis rotation to fulfill the condition as usual.
So, loosely speaking, even just one long $\gamma_i$-vector (compared to $\Gamma(d)$) with a small overlap with the $\lambda_r$-vectors induces a late-time solution with the maximal $\epsilon$-parameter $\epsilon=d-1$.
An example is in fig. \ref{fig.: large-gamma: late-time epsilon}; fig. \ref{fig.: axion-scalar antialignment} can also apply.
We can focus on the specific scalar $\smash{\phi_{a}}$ to motivate this result.
Independently of all the other fields, the axion kinetic term and the scalar potential scale as $\smash{T[\zeta] \geq \tilde{T}[\zeta] \, \e^{-\kappa_d \Lambda_{\underaccent{\dot}{a}} \phi_{\underaccent{\dot}{a}}}}$ and $\smash{V[\phi] \leq \tilde{V}[\tilde{\phi}] \, \e^{-\kappa_d \gamma_{\underaccent{\dot}{a}} \phi_{\underaccent{\dot}{a}}}}$, with no index summation meant by the underdotted index and where the tilde-accent means that all other $\smash{\phi^{\underaccent{\dot}{a}}}$-dependencies have been removed.
Hence, the field $\phi_a$ necessarily leads to late-time kination because the kinetic energy grows indefinitely with respect to the potential (as $\smash{\Lambda_{\underaccent{\dot}{a}} \phi_{\underaccent{\dot}{a}} \leq 0}$), or because the potential itself is steep enough to become asymptotically irrelevant (as $\smash{\gamma_a \geq \Gamma(d)}$; see e.g. ref.~\cite{Shiu:2023nph}).

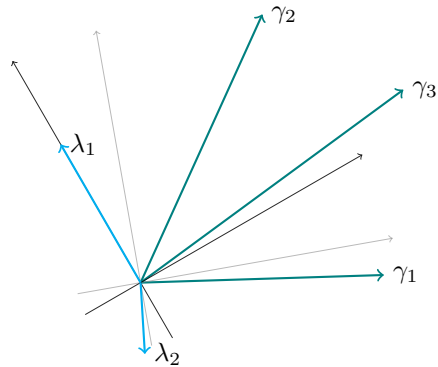
\begin{figure}[ht]
    \centering
    \begin{tikzpicture}[xscale=0.85,yscale=0.85,every node/.style={font=\normalsize},rotate=25]

    \draw[->, thick, teal] (0,0) -- (3.5,-1.5) node[right,black]{$\gamma_1$};
    \draw[->, thick, teal] (0,0) -- (3.5,3) node[right,black]{$\gamma_2$};
    \draw[->, thick, teal] (0,0) -- (5,1) node[right,black]{$\gamma_3$};

    \begin{scope}[rotate=-15]    
        \draw[->,ultra thin,opacity=0.55,gray] (-1,0) -- (4,0);
        \draw[->,ultra thin,opacity=0.55,gray] (0,-1) -- (0,4);
    \end{scope}

    \begin{scope}[rotate=5]    
        \draw[->,ultra thin,opacity=0.85] (-1,0) -- (4,0);
        \draw[->,ultra thin,opacity=0.85] (0,-1) -- (0,4);
        
        \draw[->, thick, cyan] (0,0) -- (0,2.5) node[right,black]{$\lambda_1$};
        \draw[->, thick, cyan] (0,0) -- (-0.5,-1) node[right,black]{$\lambda_2$};
    \end{scope}
    
    \end{tikzpicture}
    \caption{The figure shows a parameter-space configuration in which all the potential couplings $\gamma_i$ (teal) are much larger than the (non-positive) axion couplings $\lambda_r$ (cyan), in an appropriate field-space basis. Two orthogonal coordinate systems (fading gray and black) are also added.}
    \label{fig.: large-gamma: late-time epsilon}
\end{figure}

\subsection{Comments on axions with a potential}
As mentioned before, some of the fields $\zeta^r$ might also appear in the potential. In this case, eq. (\ref{axion-scalar axion FRW-KG eq.}) must be replaced by
\begin{align*}
    \ddot{\zeta}^r - \kappa_d \dot{\zeta}^r \sum_{a=1}^m \lambda_{ra} \dot{\phi}^a + (d-1) H \dot{\zeta}^r + \e^{\kappa_d \sum_{b} \lambda_{r b} \phi^b} \dfrac{\der V}{\der \zeta^r} = 0,
\end{align*}
and the fields $\zeta^r$ also appear in eq. (\ref{axion-scalar Friedmann eq.}) through the potential $V=V[\phi, \zeta]$.
For instance, the $\Lambda_i$-terms may be polynomial in the fields $\zeta^r$.
If the fields $\zeta^r$ have sufficiently small initial velocities, then our bounds in eqs. (\ref{axion-scalar antialignment: late-time epsilon}) and (\ref{large-gamma: late-time epsilon}) can be argued to be still in place.
Indeed, as the scalars $\phi^a$ start to evolve towards the asymptotics, the term involving the potential gradient in the equation above gets an exponential suppression of the form $\smash{\sum_i \Lambda_i(\zeta) \, \e^{\kappa_d \sum_{b} (\lambda_{r b} - \gamma_{i b}) \phi^b} \leq \sum_i \Lambda_i(\zeta) \, \e^{- \kappa_d \sum_{b}\gamma_{i b} \phi^b}}$, when the $\lambda_r$- and $\gamma_i$-vectors are all oriented in opposing directions.

\section{Examples} \label{sec.: examples}
In this section, we illustrate a few simple examples of where to see our bounds at work.

\subsection{Phase-space studies in low field-space dimension}
To test our bounds, let us start with examples with low field-space dimensions.
For $n=p=1$, and taking $m=1$, the critical points for $d=4$ can be read off in refs.~\cite{Cicoli:2020cfj, Cicoli:2020noz}.
Fixing $\gamma>0$ without loss of generality, the conditions that ensure perturbative stability of critical points with $\epsilon \leq \gamma^2/2$, and no axion kinetic energy, are:
(i) $\lambda<0$, which indeed reduces to the solution in eq.~(\ref{axion-scalar antialignment: late-time epsilon});
(ii) if $\lambda>0$, $\smash{\gamma < \sqrt{\lambda^2/4+6} - \lambda/2}$ or $\gamma > \lambda$, which both encompass the requirements for eqs. (\ref{axion-scalar alignment, A: late-time epsilon}, \ref{axion-scalar alignment, B: late-time epsilon}, \ref{large-gamma: late-time epsilon}).\footnote{Both refs.~\cite{Cicoli:2020cfj, Cicoli:2020noz} also consider the presence of an additional fluid with the constant equation of state $p/\rho = q-1$. This does not alter qualitatively our conclusions since the only change that we expect is that, if $q$ is small enough, this will be the late-time attractor, with $\epsilon$-parameter $\epsilon = 3q/2 \leq \gamma^2/2$.}
See also ref.~\cite{Revello:2023hro} for a study of linear stability of theories with diagonal kinetic-coupling matrix and one potential term.
Further phase-space analyses for $n=1,2$, $p=1$, and $m=1$ are in refs.~\cite{Sonner:2006yn, Russo:2022pgo}.

\subsection{STU-models of (generalized) assisted inflation}
A simple example to discuss is that of assisted inflation \cite{Liddle:1998jc} in which both the axion and the potential coupling matrices are diagonal, i.e. $\smash{\lambda_{ra} = \delta_{ra} \, \lambda_r}$ and $\smash{\gamma_{ia} = \delta_{ia} \, \gamma_i}$, with $n=m=p$.
For instance, this can be the case for the so-called STU-models.
For simplicity, we take $n=2$ and refer to fig. \ref{fig.: SUSY assisted inflation}.
Employing a vector notation for clarity, we write the couplings of the theory as $\smash{\underline{\lambda}{}_{1,2} = \lambda_{1,2} \, \hat{\theta}_{1,2}}$ and $\smash{\underline{\gamma}{}_{1,2} = \gamma_{1,2} \, \hat{\theta}_{1,2}}$, where $\smash{\hat{\theta}_a}$ are the unit vectors in coupling space, for $a=1,2$.
One then finds the orthogonal couplings $\smash{\lambda^\perp_{1,2} = \hat{\theta}_{1,2}/\lambda_{1,2}}$, with the minimum and maximum projections $\smash{\pi^{1,2} = 0}$ and $\smash{\Pi^{1,2} = \gamma_{1,2}/\lambda_{1,2}}$.
This means that $\smash{\tilde{\underline{\Gamma}} = \underline{\Gamma}}$.
In particular, note that $\smash{(\tilde{\underline{\Gamma}})^2 = (\underline{\Gamma} \cdot \tilde{\underline{\Gamma}}) = (\underline{\Gamma})^2}$.
If for example $\smash{(\underline{\Gamma})^2 \leq \Gamma^2(d)}$, then, by eq.~(\ref{axion-scalar alignment, B: late-time epsilon}), the late-time $\epsilon$-parameter is bounded as $\smash{\epsilon \leq [(d-2)/4] \, (\underline{\Gamma})^2}$.
Generalizations to higher numbers of couplings are immediate.

As an instance, string compactifications exist such that the low-energy effective theory, in dimension $d=4$, is encoded in a Kähler potential
\begin{align*}
    \kappa_4^2 K = - \sum_{a} n_{a} \ln \, [-\I \, (\xi^{a} - \overline{\xi}{}^{a})] - \sum_{l} q_l \ln \, [-\I \, (\chi^l - \overline{\chi}{}^l)],
\end{align*}
where $\xi^{a}$ and $\chi^l$ are chiral multiplets, for some constants $n_a$ and $l_l$.
Suppose that, after the stabilization of the fields $\chi^l$, the scalar potential happens to take the form
\begin{align*}
    V = \sum_{a} \dfrac{V_a}{[-\I \, (\xi^{a} - \overline{\xi}{}^{a})]^{m_a}},
\end{align*}
for some positive constants $V_a$ and $m_a$. In the parametrization
\begin{align*}
    \xi^a = \dfrac{\sqrt{2}}{\sqrt{n^a}} \, \kappa_4 \zeta^a + \I \, \e^{\frac{\sqrt{2}}{\sqrt{n^a}} \, \kappa_4 \phi^a},
\end{align*}
the theory is precisely in the formulation of eqs. (\ref{axion-scalar kinetic energy}, \ref{generic exponential potential}), with the diagonal couplings $\smash{\lambda_{ra} = (2 \sqrt{2}/\sqrt{n^a}) \, \delta_{ra}}$ and $\smash{\gamma_{ia} = (\sqrt{2} \, m^a/\sqrt{n^a}) \, \delta_{ia}}$.
Single-multiplet examples exist for e.g. $m = n = 1,2,3$ \cite{Font:1990nt, Cicoli:2011it, Cicoli:2016xae, Cicoli:2017axo, Saltman:2004sn}, as has been considered in ref. \cite{Brinkmann:2022oxy}.
Then, we might conclude with the late-time bound
\begin{align*}
    \epsilon \leq \sum_a \dfrac{(m^a)^2}{n^a}
\end{align*}
beyond the linear stability analysis and for arbitrary numbers of fields.
One can also generalize all the above statements to the case of generalized assisted inflation \cite{Copeland:1999cs}.
In STU-models, the $\gamma_{ia}$-couplings can for instance come from the exponential of the Kähler potential.
Although not necessarily orthogonal, such $\gamma_{ia}$-couplings still fit into one coupling-space hyperquadrant.

For instance, one could study limits with large complex-structure moduli arising from F-theory compactifications on Calabi-Yau fourfolds \cite{Grimm:2019ixq, Calderon-Infante:2022nxb, Revello:2023hro} or non-geometric compactifications with no Kähler moduli \cite{Becker:2006ks, Bardzell:2022jfh, Cremonini:2023suw}.
In view of the results we presented, the study of asymptotic regions might be easier in that not all the axions need to be stabilized in order to have analytic control on the late-time universe, and arbitrary numbers of axion kinetic terms and potential terms might be considered.
If for example $\smash{(\underline{\gamma}{}_i + \underline{\lambda}{}_i) \cdot \underline{\Gamma} \leq \Gamma^2(d)}$ for all $i$- and $r$-indices, then, by eq.~(\ref{axion-scalar alignment, A: late-time epsilon}), the late-time $\epsilon$-parameter is bounded as $\smash{[(d-2)/4] \, (\gamma_\infty)^2 \leq \epsilon \leq [(d-2)/4] \, (\underline{\Gamma})^2}$.

Compactifications in which axion-saxion kinetic terms and (part of) the scalar potential take the form in eqs.~(\ref{axion-scalar kinetic energy}, \ref{generic exponential potential}) also exist in KKLT- and LVS-scenarios \cite{Kachru:2003aw, Balasubramanian:2005zx}.
The typical non-perturbative potentials (which become double exponentials in the frame where the scalars are canonically normalized) are convex functions and thus lead to bounds of a similar form to those found in this article.
Moreover, some sectors of the theory may not need to be stabilized by fluxes or non-perturbative or $\alpha'$-corrections, but rather they could be studied for quintessence-like realizations.

We hope to come back to all these applications and more in the future.

\subsection{A string-theoretic toy model}

As another application, we might consider a string-theoretic toy model.
In a type-IIA compactification over a 4-space of vanishing curvature and with $H_3$-flux, the 6-dimensional dilaton $\tdelta$, and the string-frame radion $\tsigma$ are subject to the positive-definite scalar potential
\begin{align*}
    V = \Lambda_{H_3} \, \e^{\kappa_6 \tdelta - 3 \kappa_6 \tsigma}.
\end{align*}
If only the RR-axions corresponding to the $C_1$- and $C_3$-forms, say $\smash{\zeta_2}$ and $\smash{\zeta_4}$, respectively, are part of the effective field theory, then their kinetic action reads
\begin{align*}
    T[\zeta_2, \zeta_4] = \dfrac{1}{2} \, \e^{2 \kappa_6 \tdelta + \kappa_6 \tsigma} \, (\dot{\zeta}_2)^2 + \dfrac{1}{2} \, \e^{2 \kappa_6 \tdelta - \kappa_6 \tsigma} \, (\dot{\zeta}_4)^2.
\end{align*}
One can perform a $\mathrm{SO}(2)$-rotation in the scalar-field basis by angle $\theta = \pi - \arctan \, (1/2)$, defining $\smash{\xi = (2 \tdelta - \tsigma)/\sqrt{5}}$ and $\smash{\eta = (\tdelta + 2 \tsigma)/\sqrt{5}}$.
This maps the potential and the axion kinetic term into the forms
\begin{align*}
    V[\xi, \eta] & = \Lambda_{H_3} \, \e^{\sqrt{5} \, \kappa_6 \xi - \sqrt{5} \, \kappa_6 \eta}, \\
    T[\zeta_2, \zeta_4] & = \dfrac{1}{2} \, \e^{\frac{3}{\sqrt{5}} \kappa_6 \xi + \frac{4}{\sqrt{5}} \kappa_6 \eta} \, (\dot{\zeta}_2)^2 + \dfrac{1}{2} \, \e^{\sqrt{5} \, \kappa_6 \xi} \, (\dot{\zeta}_4)^2.
\end{align*}
In this basis, one has $\smash{\Lambda_{\eta} = 0}$ and $\smash{\gamma_{\eta} + \Lambda_{\eta} = \sqrt{5} = \Gamma(6)}$.
Hence, by eq.~(\ref{large-gamma: late-time epsilon}), we conclude that the late-time $\epsilon$-parameter reads exactly $\smash{\epsilon = 5}$.

\section{Critical points} \label{sec.: critical points}
Although our analytic results do not require knowledge of the critical points of the autonomous system in which the cosmological equations can be recast, we list below such critical solutions and comment on them.

\subsection{Exact solutions}
Let $\smash{\mathrm{rank} \, \gamma_{ia} = m}$ and $\smash{\mathrm{rank} \, \lambda_{ra} = p}$; such conditions can always be imposed by formally truncating the potential and/or the axion terms if needed.
Then, eqs.~(\ref{axion-scalar axion FRW-KG eq.}, \ref{axion-scalar scalar FRW-KG eq.}, \ref{axion-scalar Friedmann eq.}) admit solutions in which the $\epsilon$-parameter is constant and, for each given $i_*$-index, it takes the form
\begin{equation} \label{critical-point epsilon}
    \epsilon_* = \dfrac{d-1}{1 + \rho_*},
\end{equation}
where
\begin{equation} \label{rho}
    \rho_* = \dfrac{1}{\displaystyle \sum_{a=1}^n \sum_{r=1}^p \gamma_{i_*a} (\lambda^{-1})^{ar}}.
\end{equation}
Fixing the initial time as $t_0 = 1/[\epsilon_* H(t_0)]$, we also have
\begin{subequations}
    \begin{align}
        \phi_*^a(t) & = \phi_0^a + \dfrac{2 \rho_*}{\kappa_d} \, \sum_{r=1}^p (\lambda^{-1})^{ar} \; \mathrm{ln} \, \dfrac{t}{t_0}, \label{axion-scalar scaling solution - scalar} \\
        \zeta_*^r (t) & = \zeta_0 + \dfrac{\sqrt{1 + \rho_*}}{\kappa_d \sqrt{\rho_*}} \dfrac{\sqrt{d-2}}{\sqrt{d-1}} \, \upsilon_*^r \, \biggl[\biggl( \dfrac{t}{t_0} \biggr)^{\! \rho_*} \!\!\!\!- 1 \biggr], \label{axion-scalar scaling solution - axion}
    \end{align}
\end{subequations}
where
\begin{align*}
    (\upsilon_*^r)^2 = \sum_{a=1}^n (\lambda^{-1})^{ar} \biggl[ \gamma_{i_* a} - 4 \, \dfrac{d-1}{d-2} \, \dfrac{\rho_*}{1 + \rho_*} \sum_{s=1}^p (\lambda^{-1})^{as} \biggr].
\end{align*}
Of course, all non-zero kinetic and potential energy terms fall over time with a $(1/t^2)$-dependence.\footnote{By plugging the solutions of eqs.~(\ref{axion-scalar scaling solution - scalar}, \ref{axion-scalar scaling solution - axion}) into eq.~(\ref{axion-scalar scalar FRW-KG eq.}) we also find the compatibility condition
\begin{align*}
    \biggl[ \dfrac{1}{2} \, \dfrac{d-2}{d-1} \, \rho_* (1+\rho_*) - \kappa_d^2 t_0^2 \, \Lambda_{i_*} \, \e^{- \kappa_d \sum_{b} \gamma_{i_* b} \phi_0^b} \biggr] \, \gamma_{i_* a} = 0.
\end{align*}}
A derivation of such critical solutions in terms of the autonomous system is in app. \ref{app: late-time scalar-axion cosmologies}.
For diagonal kinetic-coupling matrices and a single potential term, critical solutions are discussed in refs.~\cite{Cicoli:2020cfj, Revello:2023hro, Seo:2024qzf}.
Of course, other critical solutions exist, which one can compute by formally setting to zero the axion terms; these have been studied in ref.~\cite{Collinucci:2004iw}.

Looking at eqs.~(\ref{critical-point epsilon}, \ref{rho}), the critical solutions feature cosmic acceleration, i.e. $\epsilon_* < 1$, if $\rho_* > d-2$.
Geometrically, this means that cosmic acceleration is favoured in the presence of small amounts of overlaps between the potential and the axion couplings (since, in general, the inverse of the $\lambda_{ra}$-matrix provides orthogonal vectors to the $\lambda_r$-vectors, with inverse length), and for long $\gamma_{i}$-vectors and short $\lambda_r$-vectors.\footnote{In the presence of both kinetic and potential couplings, the geometric interpretation of eqs.~(\ref{critical-point epsilon}, \ref{rho}) is not as neat as for only potential couplings \cite{Shiu:2023fhb}.
Thus, it is natural that
any intuition is more complex in nature.
Yet, we have some understanding e.g. for $m=n=p=2$, as discussed in ssec.~\ref{ssec.: axion-scalar alignment}.}
Note that formally $\epsilon_* < 0$ if $\rho_* < -1$; this is an unphysical solution.

\subsection{A speculation}
Based on observations in models where full analytic convergence is known, we propose a speculation on convergence.
In all the examples we know of, there seems to be a general pattern, which we state as follows.

All FLRW-cosmologies involving scalars, axions, and fluids with constant equations of state, if coupled through multi-exponential potentials and multi-exponential axion couplings, feature a plethora of critical-point solutions.
Only a subset of these is physical.\footnote{By this we mean that some solutions of the autonomous system cannot be physical solutions. For instance, for a positive-definite potential, some of the critical points may correspond to a negative-definite potential. Such solutions are unphysical.}
Among the physical solutions, in theories satisfying the weak energy condition, we conjecture that the late-time attractor is always the solution with the smallest of the $\epsilon$-parameter.

Currently, we are unable to formulate this statement more precisely and to prove it in full generality; furthermore, whether what we believe to be an attractor is a stable node or a stable spiral is instead harder to decipher.
However, our claim is confirmed by all the theories in which there is non-perturbative, perturbative and/or numerical control; see e.g. \cite{Copeland:1997et, Collinucci:2004iw, Hartong:2006rt, Shiu:2023fhb, VanRiet:2023cca, Andriot:2023wvg, Revello:2023hro, Andriot:2024jsh}.
There is a very simple physical argument to support our speculation.
A fluid with a constant equation-of-state parameter $q$ has an energy density evolving as $\rho(t) = \rho(t_0) \, (a(t)/a_0)^{(d-1) q}$.
Therefore, the fluid whose energy density dominates asymptotically is the one with the smallest value of $q$.
Because for a single-fluid cosmology, the $\epsilon$-parameter is related to $q$ as $\epsilon = (d-1) \, q/2$, our argument follows through.
All this is of course harder to prove for scalars, with non-constant equations of state.
However, typically their equation of state changes over time in a way that becomes constant asymptotically.
The restriction to satisfying the weak energy condition ensures that one avoids instabilities driving the universe towards contraction or developing phantom fields.
We hope to come back to this general problem in the future.

\section{Discussion}
In this article, we proved analytic bounds for axion-scalar cosmologies in which the potential and the axion coupling vectors (i) lie in opposite halves of the coupling space, or (ii) in the same hyperquadrant, with (iii) further results in case of even partial coupling misalignment.
Our findings are novel, as the previous studies have not examined the coupled axion-scalar systems in such generality.
Furthermore, our results are model-independent, though one can easily apply the criteria we found to specific models. 
Therefore, the universal bounds we derived are of relevance to future studies of dark energy, both from a phenomenological model-building point of view and in embedding such models into string theory.
Such bounds illustrate that there are classes of axion-scalar theories with hyperbolic field spaces that generate the same late-time cosmology as those with zero field-space curvature.
In all these cases, steep potentials prevent asymptotic cosmic acceleration: the multi-field no-go results of ref.~\cite{Shiu:2023nph} still apply.
For diagonal kinetic-coupling matrices and a single potential term, there exist previous perturbative and numerical checks: see e.g. refs.~\cite{Sonner:2006yn, Cicoli:2020cfj, Cicoli:2020noz, Russo:2022pgo, Brinkmann:2022oxy, Revello:2023hro}.
We showed how these previously isolated results follow from the general analysis presented here.
The bounds might as well be seen in a positive light: if one were to find top-down models providing small potential couplings, one would have small $\epsilon$ without concerns with stabilizing the axions.

Our analysis is fully non-perturbative and applicable to arbitrary field-space dimensions.
Furthermore, our results give a clear and definite geometric intuition of the parameter-space configuration (of axion and scalar couplings) that needs to be scrutinized next: if the potential and the axion coupling vectors lie in several hyperquadrants, or if they are aligned but with comparable length, different bounds are in place.
This is where one might hope to find asymptotic cosmic acceleration.
Evidence with small numbers of fields \cite{Cicoli:2020cfj} suggests that these might be scenarios featuring late-time cosmic acceleration also in higher field-space-dimensional setups arising in string compactifications.

Another concrete result we presented is the proof that, in flat moduli spaces, the scaling cosmologies of ref.~\cite{Collinucci:2004iw} are the universal late-time attractors independently of the details of the multi-exponential potential, as long as it does not accommodate a de Sitter minimum. 
This proof relaxes an assumption in our previous work \cite{Shiu:2023fhb}.

Our conclusions are also a testimony to the generality of our methods: we do not need to find explicit solutions to the cosmological equations (i.e. the critical points of the autonomous system) to prove strong analytic bounds on the rate of cosmic acceleration.
The ambitious research program of characterizing fully generally the late-time cosmology of multi-field systems we set out to pursue has led to a few successful results.
For example, the bounds we obtained in this paper provide clear geometric intuition that may point us to possible paths forward.
Still, given the wilderness of the space of cosmological multi-field systems, not all possible parameter-space configurations are covered.
It is our goal to pursue a full analytic description of the late-time cosmology of theories with axions, scalars, and barotropic fluids in future work \cite{axionscalarfluidcosmologies}.
We also believe the lessons learned from our methods might help further progress in different cosmological models, such as those presented recently in refs. \cite{Dienes:2021woi, Marconnet:2022fmx, Andriot:2023wvg, Dienes:2023ziv, Gomes:2023dat, Gallego:2024gay, Andriot:2024jsh, Dienes:2024wnu, Casas:2024xqy, Alestas:2024gxe}.

Finally, note that we proved the inevitability of late-time FLRW-solutions in which the kinetic and the potential energy evolve in the same parametric way with time.
This is in line with analogous results in simpler theories \cite{Shiu:2023nph, Shiu:2023fhb, Shiu:2023yzt}.
Such conclusions might be instrumental for formulations of the distance conjecture \cite{Ooguri:2006in} in the presence of a potential \cite{Debusschere:2024rmi}.

\onecolumngrid
\newpage

\begin{acknowledgments}
\subsection*{Acknowledgments}
We would like to thank Filippo Revello, Thomas Van Riet, and especially Muthusamy Rajaguru for helpful exchanges of ideas. GS is supported in part by the DOE grant DE-SC0017647. FT is supported by the FWO Odysseus grant GCD-D0989-G0F9516N. HVT is supported in part by the NSF CAREER grant DMS-1843320 and a Vilas Faculty Early-Career Investigator Award.
\end{acknowledgments}

\appendix

\section{Late-time cosmologies with scalars and axions} \label{app: late-time scalar-axion cosmologies}
In a $d$-dimensional FLRW-background, eqs.~(\ref{axion-scalar axion FRW-KG eq.}, \ref{axion-scalar scalar FRW-KG eq.},  \ref{axion-scalar Friedmann eq.}), can be expressed as a constrained system of first-order autonomous ordinary differential equations \cite{Halliwell:1986ja, Copeland:1997et, Coley:1999mj, Guo:2003eu, Cicoli:2020cfj}. Let
\begin{align*}
    x^a & = \dfrac{\kappa_d}{\sqrt{d-1} \sqrt{d-2}} \, \dfrac{\dot{\phi}^a}{H}, \\
    w^r & = \dfrac{\kappa_d}{\sqrt{d-1} \sqrt{d-2}} \, \dfrac{\dot{\zeta}^r}{H} \, \e^{-\frac{1}{2} \, \kappa_d \sum_{b} \lambda_{r b} \phi^b}, \\
    y^i & = \dfrac{\kappa_d \sqrt{2}}{\sqrt{d-1} \sqrt{d-2}} \, \dfrac{1}{H} \, \sqrt{\Lambda_i \, \e^{- \kappa_d \gamma_{i a} \phi^a}},
\end{align*}
and
\begin{align*}
    f & = (d-1) H, \\[1.00ex]
    c_{i a} & = \dfrac{1}{2} 
    \dfrac{\sqrt{d-2}}{\sqrt{d-1}} \, \gamma_{ia}, \\
    l_{r a} & = \dfrac{1}{2} 
    \dfrac{\sqrt{d-2}}{\sqrt{d-1}} \, \lambda_{ra}.
\end{align*}
Then, the cosmological equations read
\begin{subequations}
\begin{align}
    \dot{x}^a & = \biggl[ -x^a (y)^2 + \sum_{i=1}^m {c_i}^{a} (y^i)^2 - \sum_{r=1}^p {l_r}^a (w^r)^2 \biggr] \, f, \label{axion-scalar x-equation} \\
    \dot{w}^r & = \biggl[ - (y)^2 + \sum_{a=1}^n l_{ra} x^a \biggr] \, w^r f, \label{axion-scalar w-equation} \\
    \dot{y}^i & = \biggl[ 1 - (y)^2 - \sum_{a=1}^n c_{ia} x^a \biggr] \, y^i f, \label{axion-scalar y-equation}
\end{align}
\end{subequations}
jointly with the two conditions
\begin{subequations}
\begin{align}
    & \dot{f} = - [(x)^2 + (w)^2] f^2, \label{axion-scalar f-equation} \\
    & (x)^2 + (w)^2 + (y)^2 = 1. \label{axion-scalar sphere-condition}
\end{align}
\end{subequations}
Here, the index position is arbitrary: the $a$-indices relate to a field-space metric that is a Kronecker delta, while the $r$- and $i$-indices are just dummy labels; the shorthand notations are used $\smash{(x)^2 = \sum_{a=1}^m (x^a)^2}$, $\smash{(w)^2 = \sum_{r=1}^p (w^r)^2}$ and $\smash{(y)^2 = \sum_{i=1}^m (y^i)^2}$. Below are a series of mathematical results. A formulation in terms of physical observables and an analysis of their implications are in the main text.

\vspace{4pt}

Let the unknown functions be such that $x^a \in [-1, 1]$, $w^r \in [-1,1]$ and $y^i \in ]0,1]$; let $t_0$ be the initial time. Let $c_a = \min_i \, c_{ia}$ and $C_a = \max_i \, c_{ia}$. For brevity, let $\smash{\xi = (x)^2 + (w)^2}$.

\subsection{General preliminary results}

\begin{lemma} \label{lemma: axion-scalar f - lower bound}
    At all times $t > t_0$, one has
    \begin{equation} \label{eq.: axion-scalar f - lower bound}
        f(t) \geq \dfrac{1}{t-t_0 + \dfrac{1}{f(t_0)}}.
    \end{equation}
    In particular, this implies that $\smash{\int_{t_0}^\infty \de t \, f(t) = \infty}$.
\end{lemma}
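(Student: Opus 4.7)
The plan is to convert the differential identity \eqref{axion-scalar f-equation} into a differential inequality for $1/f$, integrate, and then use the resulting lower bound to check the divergence of $\int_{t_0}^\infty f\,\mathrm{d}t$.

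The key observation is that the constraint \eqref{axion-scalar sphere-condition} together with the assumption $y^i\in{]0,1]}$ gives $\xi = (x)^2+(w)^2 = 1-(y)^2 \leq 1$ at all times, while $f(t_0)>0$ by the physical interpretation $f=(d-1)H$ during expansion. Substituting $\xi\leq 1$ into \eqref{axion-scalar f-equation} yields the scalar differential inequality $\dot f \geq -f^2$. Since $\dot f\leq 0$ and $f(t_0)>0$, the function $f$ stays positive on $[t_0,\infty)$, so dividing by $-f^2$ gives
\begin{equation*}
    \dfrac{\mathrm{d}}{\mathrm{d}t}\!\left(\dfrac{1}{f(t)}\right) = -\dfrac{\dot f(t)}{f(t)^2} \leq 1.
\end{equation*}

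Integrating this from $t_0$ to $t$ gives $1/f(t) \leq (t-t_0) + 1/f(t_0)$, which, upon inversion (legitimate because both sides are positive), is exactly the bound \eqref{eq.: axion-scalar f - lower bound}. The divergence statement then follows immediately, since
\begin{equation*}
    \int_{t_0}^{\infty}\!\mathrm{d}t\, f(t) \;\geq\; \int_{t_0}^\infty \dfrac{\mathrm{d}t}{(t-t_0)+1/f(t_0)} \;=\; +\infty,
\end{equation*}
as the integrand has a harmonic tail.

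I do not anticipate a genuine obstacle here; the only points that warrant a line of care are (i) justifying that $f$ remains strictly positive so that division by $f^2$ is allowed (handled by the monotonicity $\dot f\leq 0$ together with the explicit lower bound, which one can also obtain by a standard maximal-interval-of-existence argument applied to the ODE $\dot f=-\xi f^2$), and (ii) noting that the bound $\xi\leq 1$ is strict only when some $y^i$ is strictly positive, which is built into the hypothesis $y^i\in{]0,1]}$ and is consistent with the flow \eqref{axion-scalar y-equation} preserving the sign of $y^i$. Both are routine.
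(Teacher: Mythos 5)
Your proof is correct and is essentially the paper's argument: both reduce eq.~(\ref{axion-scalar f-equation}) via the constraint (\ref{axion-scalar sphere-condition}) to $\dot f/f^2\geq -1$ (the paper phrases this as a comparison with $f_-$ satisfying $\dot f_-/(f_-)^2=-1$, which is the same integration of $\mathrm{d}(1/f)/\mathrm{d}t\leq 1$ you perform) and then inverts to obtain the bound, with the divergence of $\int_{t_0}^\infty f\,\mathrm{d}t$ following from the harmonic tail. Your added remarks on the strict positivity of $f$ are a harmless bit of extra care consistent with the paper's standing assumptions.
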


\begin{proof}
Let $\smash{f_-}$ be a function such that $\smash{\dot{f}_- / (f_-)^2 = - 1}$. A simple integration gives $\smash{f_- = 1 / [t - t_0 + 1/f_-(t_0)]}$. In view of eq.~(\ref{axion-scalar sphere-condition}), eq.~(\ref{axion-scalar f-equation}) gives $\smash{\dot{f}/f^2 \geq \dot{f}_- / (f_-)^2}$; by integrating the latter, one immediately gets eq.~(\ref{eq.: axion-scalar f - lower bound}).
\end{proof}

\vspace{2pt}
\begin{lemma} \label{lemma: axion-scalar w,y>0}
    If $w^r(t_0)=0$, then $w^r(t)=0$ at all times $t>t_0$, and, similarly, if $y^i(t_0)=0$, then $y^i(t)=0$ at all times $t>t_0$.
\end{lemma}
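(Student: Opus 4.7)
The plan is to read eqs.~(\ref{axion-scalar w-equation}) and (\ref{axion-scalar y-equation}) as scalar linear first-order ODEs in a single unknown, with coefficient determined by the remaining components of the autonomous solution. Concretely, along any fixed solution of the full system (\ref{axion-scalar x-equation}--\ref{axion-scalar sphere-condition}), set
\begin{equation*}
    h_r(t) = \biggl[- (y(t))^2 + \sum_{a=1}^n l_{ra}\, x^a(t)\biggr] f(t), \qquad k_i(t) = \biggl[1 - (y(t))^2 - \sum_{a=1}^n c_{ia}\, x^a(t)\biggr] f(t),
\end{equation*}
so that $\dot{w}^r = h_r(t)\, w^r$ and $\dot{y}^i = k_i(t)\, y^i$. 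The functions $h_r$ and $k_i$ are locally bounded and continuous, because the $x^a,w^r,y^i$ are bounded by the sphere constraint (\ref{axion-scalar sphere-condition}) and $f$ is continuous.

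Given this, I would argue in two equivalent ways, either of which suffices. First, the constant function $u\equiv 0$ is manifestly a solution of the scalar linear ODE $\dot{u} = h_r(t)\, u$ (respectively $\dot{u} = k_i(t)\, u$) with initial datum $0$, and standard uniqueness for linear ODEs with continuous coefficients then forces any solution taking the value $0$ at $t_0$ to vanish on the whole interval of existence. Second, and more explicitly, integration of the linear ODE yields the representations
\begin{equation*}
    w^r(t) = w^r(t_0)\, \exp\!\biggl(\int_{t_0}^t h_r(s)\, \de s\biggr), \qquad y^i(t) = y^i(t_0)\, \exp\!\biggl(\int_{t_0}^t k_i(s)\, \de s\biggr),
\end{equation*}
which make the claim transparent: the vanishing of the initial datum factors out of the entire orbit.

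This is essentially a book-keeping statement about the multiplicative structure of the equations and not a deep dynamical fact; there is no real obstacle. The only point to be careful about is that $h_r$ and $k_i$ are defined along a fixed solution of the coupled system rather than as given continuous functions, but the autonomous system has a smooth right-hand side and admits local existence and uniqueness, so once the solution itself is fixed the coefficients are well-defined continuous functions of $t$ and the scalar ODE argument applies on the solution's maximal interval. The lemma is really recording the useful built-in property that the coordinate hyperplanes $\{w^r = 0\}$ and $\{y^i = 0\}$ of the phase space are invariant under the flow, which will subsequently justify restricting attention in later proofs to the physically relevant octant $y^i > 0$.
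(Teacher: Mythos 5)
Your proof is correct, and it rests on the same underlying fact the paper uses: the right-hand sides of eqs.~(\ref{axion-scalar w-equation}) and (\ref{axion-scalar y-equation}) are proportional to $w^r$ and $y^i$ respectively, so the hyperplanes $\{w^r=0\}$ and $\{y^i=0\}$ are invariant. Where you differ is in how this is cashed out: the paper's proof only remarks that $w^r(t_0)=0$ forces $\dot{w}^r(t_0)=0$ and then declares the conclusion, which taken literally is not a complete argument (a function can vanish to second order at $t_0$ without being identically zero); the implicit ingredient is precisely what you make explicit, namely that along a fixed solution the equations become scalar linear ODEs $\dot{w}^r=h_r(t)\,w^r$, $\dot{y}^i=k_i(t)\,y^i$ with continuous bounded coefficients, so either uniqueness for linear ODEs or the exponential representation $w^r(t)=w^r(t_0)\exp\bigl(\int_{t_0}^t h_r(s)\,\de s\bigr)$ settles the claim. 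Your version is therefore the more rigorous rendering of the paper's one-line argument, and your closing remark correctly identifies the role the lemma plays later, justifying the restriction to $y^i>0$ (and fixed signs of $w^r$) in the convergence theorems.
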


\begin{proof}
    If $w^r(t_0)=0$, then $\dot{w}^r(t_0)=0$ by eq.~(\ref{axion-scalar w-equation}); equivalent statements hold for $y^i(t)$, coming from eq.~(\ref{axion-scalar y-equation}). Therefore, the conclusions hold.
\end{proof}

\vspace{4pt}
Let the function $\smash{\hat{\varphi}: \mathbb{R} \to \mathbb{R}_0^+}$ be defined as $\smash{\varphi(t) = \int_{t_0}^t \de s \; [y(s)]^2 \, f(s)}$. A thorough analysis of the dynamical system with $w^r=0$ is in refs.~\cite{Shiu:2023nph, Shiu:2023fhb}.

\vspace{2pt}
\begin{corollary}
    At all times $t>t_0$, one has
    \begin{equation}
        \dot{f} < 0.
    \end{equation}
\end{corollary}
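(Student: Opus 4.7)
The plan is to read off the claim directly from eq.~(\ref{axion-scalar f-equation}) combined with Lemma~\ref{lemma: axion-scalar f - lower bound}. Lemma~\ref{lemma: axion-scalar f - lower bound} gives $f(t) > 0$ for every $t \geq t_0$, so $f^{2}(t) > 0$. The sphere constraint (\ref{axion-scalar sphere-condition}) rewrites $\xi = 1 - (y)^{2} \geq 0$, and plugging into eq.~(\ref{axion-scalar f-equation}) gives $\dot{f} = -\xi f^{2} \leq 0$, with equality only at points where $(y(t))^{2} = 1$, equivalently $x^{a}(t) = w^{r}(t) = 0$ for every $a, r$.

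To upgrade this to the strict inequality claimed, I need to exclude such configurations for $t > t_0$. My approach is a contradiction argument based on the flow equations at a hypothetical vanishing instant $t_{*} > t_{0}$. At such a $t_{*}$, Lemma~\ref{lemma: axion-scalar w,y>0} ensures $y^{i}(t_{*}) > 0$ for every $i$, and eq.~(\ref{axion-scalar x-equation}) reduces to $\dot{x}^{a}(t_{*}) = f(t_{*}) \sum_{i} c_{ia}\,(y^{i}(t_{*}))^{2}$. Since $f(t_{*}) > 0$ by Lemma~\ref{lemma: axion-scalar f - lower bound} and the potential is non-trivial under the running assumption $(\gamma_\infty)^{2} > 0$ (i.e. the $c_{i}$ vectors are not all zero), there exists at least one direction $a$ with $\dot{x}^{a}(t_{*}) \neq 0$. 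Hence the flow pushes $x^{a}$ instantaneously away from zero at $t_{*}$, so the locus $\{\xi = 0\}$ is crossed transversally and cannot persistently host the trajectory.

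The main obstacle I anticipate is promoting this transversality picture to a genuine pointwise statement $\dot{f}(t) < 0$ at every single $t > t_{0}$, rather than merely off an isolated set. My plan for this step is a differential-inequality argument on $\xi$ itself: a short computation (in which the $l_{ra}$-dependent contributions cancel between eqs.~(\ref{axion-scalar x-equation}) and (\ref{axion-scalar w-equation})) yields $\dot{\xi} = \bigl[-2\,\xi(1-\xi) + 2 \sum_{a} x^{a} \sum_{i} c_{ia}(y^{i})^{2}\bigr]\,f$, and the sphere constraint bounds $|x^{a}| \leq \sqrt{\xi}$. The resulting inequality, together with the transversal kick $\dot{x}^{a}(t_{0}) \neq 0$ derived above (which guarantees $\xi(t) > 0$ immediately past $t_{0}$, even if $\xi(t_{0}) = 0$), should preclude $\xi$ from ever returning to zero in finite forward time. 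Combined with $\dot{f} = -\xi f^{2}$ and $f > 0$, this delivers strict $\dot{f}(t) < 0$ throughout $(t_{0}, \infty)$.
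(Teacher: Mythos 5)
There is a genuine gap, and it sits exactly where you anticipated it. Your transversality observation only shows that the locus $\{\xi=0\}$ cannot host the trajectory on an interval; it does not exclude $\xi(t_*)=0$ at an isolated instant $t_*>t_0$, and at such an instant eq.~(\ref{axion-scalar f-equation}) gives $\dot f(t_*)=0$, which already violates the strict inequality being claimed. Your proposed repair does not close this: from $\dot{\xi}=2\bigl[-\xi(1-\xi)+\sum_a x^a\sum_i c_{ia}(y^i)^2\bigr]f$ and $|x^a|\leq\sqrt{\xi}$ you only get a lower bound of the form $\dot{\xi}\geq -K\sqrt{\xi}\,f$, whose right-hand side is not Lipschitz at $\xi=0$, so no comparison/Gronwall argument prevents $\xi$ from reaching zero in finite forward time. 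Indeed, without an assumption on the initial axion velocities the phenomenon really occurs: take $w^r\equiv 0$ (consistent with eq.~(\ref{axion-scalar w-equation}) and with everything in your argument, since you never use $w^r(t_0)\neq 0$), $n=m=1$, $0<c<1$ and $x(t_0)<0$; then $\dot{\x}=(c-\x)(1-\x^2)f>0$, $\x$ crosses zero at some finite $t_*>t_0$, and $\dot f(t_*)=0$. So the statement cannot be established along your route; it needs the positivity of the axion kinetic term.

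That positivity is precisely the paper's one-line mechanism, which you do not use: eq.~(\ref{axion-scalar w-equation}) is linear in $w^r$, so $w^r(t)=w^r(t_0)\exp\bigl(\int_{t_0}^t\de s\,[\,\cdots]\,f\bigr)$ and a component that is nonzero initially never vanishes in finite time (this is the content behind lemma~\ref{lemma: axion-scalar w,y>0}, read together with the implicit standing assumption that the axions are dynamical, i.e. $w^r(t_0)\neq 0$). Hence $[w(t)]^2>0$ for all $t$, so $\xi(t)\geq[w(t)]^2>0$, and eq.~(\ref{axion-scalar f-equation}) together with $f>0$ (your lemma~\ref{lemma: axion-scalar f - lower bound} step is fine) gives $\dot f<0$ immediately. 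If you want to keep your structure, replace the $\xi$-differential-inequality step by this invariance-of-$\{w^r\neq 0\}$ argument; your first paragraph ($\dot f=-\xi f^2\leq 0$ with equality iff $x=w=0$) and your observation that $\sum_i c_{ia}(y^i)^2$ is a nonzero vector when $(y)^2=1$ (a convex combination of the $\mathrm{c}_i$, at distance at least $c_\infty>0$ from the origin) are both correct but are not sufficient on their own.
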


\begin{proof}
    Because $[w(t)]^2>0$ by lemma \ref{lemma: axion-scalar w,y>0}, the conclusion holds by eq.~(\ref{axion-scalar f-equation}).
\end{proof}

\vspace{4pt}

As in ref.~\cite{Shiu:2023fhb}, let $s(t)$ be the function $s: \; [t_0, \infty[ \, \to \, [t_0, \infty[$ such that
\begin{equation}
    \dot{s}(t) = \dfrac{1}{f [s(t)]},
\end{equation}
with $s(t_0) = t_0$. In the notation in which every function $g=g(t)$ defines a function $\mathrm{g}(t) = g [s(t)]$, one can rearrange the dynamical system in eqs.~(\ref{axion-scalar x-equation}, \ref{axion-scalar w-equation}, \ref{axion-scalar y-equation}) and (\ref{axion-scalar f-equation}, \ref{axion-scalar sphere-condition}) by just getting rid of the function $f$. Also, note that $\smash{\int_{t_1}^{t_2} \de t \, \mathrm{g}(t) = \int_{s(t_1)}^{s(t_2)} \de s \, f(s) \, g(s)}$. As $\smash{\int_{t_0}^t \de r \; f[s(r)] \dot{s}(r) = \int_{t_0}^{s(t)} \de s \; f(s) = t}$, one has that $\smash{\lim_{t \to +\infty} s(t) = +\infty}$. Therefore, limits at infinity of $\mathrm{g}$-functions correspond to limits at infinity of $g$-functions. Let $\rmxi(t) = \xi(s(t))$.

\vspace{4pt}

Let $\mathrm{c}_i \in \mathbb{R}^n$ be the $m$ vectors with components $\smash{(\mathrm{c}_i)_a = c_{ia}}$ and let the associated convex hull be the parameter-space hypersurface $\mathrm{CH}(\lbrace \mathrm{c}_i \rbrace_{i=1}^m) = \bigl\lbrace \mathrm{q} \in \mathbb{R}^n: \; \mathrm{q}_a = \sum_{i=1}^m \! \sigma_i (\mathrm{c}_i)_a, \; (\sigma_i)_{i=1}^m \in (\mathbb{R}_0^+)^m, \; \sum_{i=1}^m \sigma_i = 1 \bigr\rbrace$. Similarly, one can define $\mathrm{l}_r \in \mathbb{R}^n$ to be the $p$ vectors with components $\smash{(\mathrm{l}_r)_a = l_{ra}}$. Given the infimum of the distance of the coupling convex hull from the origin, i.e.
\begin{equation} \label{convex-hull distance}
    c_\infty = \inf_{\mathrm{q} \in \mathrm{CH}} \sqrt{\mathrm{q}_a \mathrm{q}^a},
\end{equation}
let $\mathrm{c}_\infty$ be the vector joining the origin to the convex hull of length $c_\infty$. In this work, it is always assumed that $c_\infty >0$. The vectors $\mathrm{c}_i$ and $\mathrm{l}_r$ are referred to as the coupling and axion-coupling vectors, respectively. Let $\mathrm{P}(\lbrace \mathrm{c}_i \rbrace_{i=1}^m) \subset \mathrm{CH}(\lbrace \mathrm{c}_i \rbrace_{i=1}^m)$ be the hyperplane through the endpoint of the vector $\smash{\mathrm{c}_\infty}$ that is orthogonal to the vector $\smash{\mathrm{c}_\infty}$ itself. Up to relabeling indices, one has this general setup:
\begin{itemize}[leftmargin=*]
    \item a number $l \leq m$ of coupling vectors is on the hyperplane, i.e. $\smash{\mathrm{c}_{\infty} = \mathrm{c}_1, \mathrm{c}_{2}, \dots, \mathrm{c}_{l} \in\; \mathrm{P}}$, and the vector $\smash{\mathrm{c}_{\infty}}$ belongs to the convex hull of these $l$ vectors;
    \item all the remaining vectors have a distance from the origin larger than $c_\infty$, i.e. $\smash{(\mathrm{c}_j)^2 \geq c_\infty^2}$ for all indices $j = l+1, \dots, m$.
\end{itemize}
Pedagogical examples are e.g. in figs. \ref{fig.: axion-scalar antialignment}, \ref{fig.: general convex-hull criterion}, \ref{fig.: SUSY assisted inflation}, \ref{fig.: axion-scalar alignment}, \ref{fig.: large-gamma: late-time epsilon}.

\subsection{Universal convergence results}
In this section, analytic late-time bounds are proven for three classes of parameter-space configurations.

\subsubsection{Axion-scalar antialignment}
Here, all axion couplings -- i.e. for all $r$-indices -- are assumed to be such that
\begin{equation}
    \mathrm{c}_\infty \cdot \mathrm{l}_r \leq 0.
\end{equation}
A geometric picture of this setup is in fig. \ref{fig.: axion-scalar antialignment}.
Also, let $\smash{\rmvarphi(t) = \int_{t_0}^{t} \de s \; [\y(s)]^2}$ and, by assumption, let $\smash{\lim_{t \to \infty} \, \rmvarphi(t)=\infty}$.\footnote{In the presence of fluids with constant equation of state, this integral gets generalized in a way that always diverges \cite{axionscalarfluidcosmologies}.}

\vspace{4pt}
If $\smash{c_\infty^2 < 1}$, let $\smash{(\hx^a, \hy^\eta)}$ for $\eta=1, \dots, k$ be the critical points of the autonomous system such that $\smash{\check{\y}^\zeta = \check{\w}^r = 0}$ for $\zeta=k+1, \dots, m$ and $r=1,\dots,p$. In other words, $\smash{(\hx^a, \hy^\eta)}$ correspond to critical points of the autonomous system after the truncation of the potential to only the couplings in $\mathrm{P}$, for which case $\mathrm{rank} \, c_{\eta a} = k$; i.e. these are the critical points discussed in refs.~\cite{Collinucci:2004iw, Shiu:2023fhb}. In particular, note that
\begin{subequations}
    \begin{align}
        \sum_{a=1}^m c_{\eta a} \hx^a & = (\hx)^2, \label{eq.: cbarX-identity} \\
        \sum_{\eta=1}^k c_{\eta a} (\hy^\eta)^2 & = \hx_a (\hy)^2. \label{eq.: cbarY-identity}
    \end{align}
\end{subequations}
It is also apparent that $\smash{\hx = \mathrm{c}_\infty}$; consequently, one also has $\smash{(\hx)^2 = c_\infty^2}$.

\vspace{2pt}
\begin{theorem} \label{theorem: axion-scalar antialignment - X convergence}
    Let $y^\eta(t_0)>0$ for all indices $\eta=1,\ldots,k$. If $\smash{c_\infty^2 < 1}$, one has
    \begin{equation} \label{eq.: axion-scalar antialignment - X convergence}
        \lim_{t \to \infty} \, \x^a(t) = \hx^a.
    \end{equation}
\end{theorem}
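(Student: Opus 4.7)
The plan is to combine a one-sided Gronwall estimate, obtained by projecting eq.~(\ref{axion-scalar x-equation}) onto the direction of $\mathrm{c}_\infty$, with a LaSalle-type analysis of the $\omega$-limit set $\Omega$, which is well-defined, nonempty, compact, and invariant because the sphere constraint (\ref{axion-scalar sphere-condition}) confines the trajectory to a compact set in phase space. The goal is to show that $\Omega$ is the single point $(\x, \w, \y) = (\hx, 0, \hy)$.

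First I would take the inner product of the $s$-time form of eq.~(\ref{axion-scalar x-equation}) with $\mathrm{c}_\infty$. Minimality of $\mathrm{c}_\infty$ in the convex hull yields the convex-analytic bound $\mathrm{c}_\infty \cdot \mathrm{c}_i \geq c_\infty^2$ for every $i$, with equality precisely for $\mathrm{c}_i \in \mathrm{P}$; the antialignment hypothesis yields $\mathrm{c}_\infty \cdot \mathrm{l}_r \leq 0$ for every $r$. Discarding the resulting nonnegative contributions produces the scalar differential inequality
\begin{equation*}
    \dfrac{d}{dt}\bigl(\mathrm{c}_\infty \cdot \x\bigr) \;\geq\; (\y)^2 \bigl[\, c_\infty^2 - \mathrm{c}_\infty \cdot \x \,\bigr].
\end{equation*}
Multiplying by the integrating factor $e^{\rmvarphi(t)}$ and using the hypothesis $\rmvarphi(t) \to \infty$ yields $\liminf_{t\to\infty} (\mathrm{c}_\infty \cdot \x)(t) \geq c_\infty^2$. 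The Cauchy-Schwarz inequality $\mathrm{c}_\infty \cdot \x \leq c_\infty |\x|$ and the sphere constraint then force $\liminf |\x|^2 \geq c_\infty^2$ and hence $\limsup \bigl[(\y)^2 + (\w)^2\bigr] \leq 1 - c_\infty^2$, where the strict bound below $1$ uses the hypothesis $c_\infty^2 < 1$.

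Next I would reinstate the nonnegative contributions dropped above and use their strictness to force the vanishing of the ``off-hyperplane'' variables on $\Omega$. For $\zeta > k$ one has the strict inequality $\mathrm{c}_\zeta \cdot \mathrm{c}_\infty > c_\infty^2$, and for any axion not orthogonal to $\mathrm{c}_\infty$ one has $-\mathrm{c}_\infty \cdot \mathrm{l}_r > 0$: persistent nondecay of the associated $\y^\zeta$ or $\w^r$ would produce a strictly positive time-averaged forcing on the right-hand side of the projected inequality, driving $\mathrm{c}_\infty \cdot \x$ unboundedly above $c_\infty^2$ along an invariant trajectory and contradicting the uniform bound $\mathrm{c}_\infty \cdot \x \leq c_\infty$. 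Hence $\y^\zeta \to 0$ for $\zeta > k$ and $\w^r \to 0$ for every $r$, with the degenerate case $\mathrm{c}_\infty \cdot \mathrm{l}_r = 0$ handled by a secondary projection onto a direction within $\mathrm{P}$. With $\mathrm{c}_\infty \cdot \x \to c_\infty^2$ and $|\x| \to c_\infty$, the saturation of Cauchy-Schwarz forces $\x$ to be parallel to $\mathrm{c}_\infty$; the identities (\ref{eq.: cbarX-identity}, \ref{eq.: cbarY-identity}) then identify $\hx = \mathrm{c}_\infty$ as the unique value compatible with the reduced invariant dynamics on $\Omega$, yielding eq.~(\ref{eq.: axion-scalar antialignment - X convergence}).

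The main obstacle is the second step: extracting the vanishing of $\y^\zeta$ and $\w^r$ from the one-sided $\liminf$ bound without a circular dependence on the very convergence of $\x$ one is trying to prove. The crucial point is that the Gronwall device itself serves as a Lyapunov function, producing a scalar bound that depends only on the signs of the convex-analytic quantities $\mathrm{c}_\infty \cdot \mathrm{c}_i$ and $\mathrm{c}_\infty \cdot \mathrm{l}_r$ and not on any prior knowledge of component limits. Once those signs are exploited on the invariant limit set, the strict inequalities for off-hyperplane couplings and antialigned axions suffice to drive the corresponding amplitudes to zero, whence $\x(t) \to \hx$ follows.
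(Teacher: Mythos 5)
Your first step coincides with the paper's opening move (project eq.~(\ref{axion-scalar x-equation}) onto $\mathrm{c}_\infty$, use $\mathrm{c}_i\cdot\mathrm{c}_\infty\geq c_\infty^2$ and $\mathrm{c}_\infty\cdot\mathrm{l}_r\leq 0$, Gronwall), but the middle of your LaSalle argument has a genuine gap. The projected inequality is $\frac{\de}{\de t}(\hx\cdot\x)\geq -\bigl[\hx\cdot\x-(\hx)^2\bigr](\y)^2+\sum_{\zeta>k}\bigl[\mathrm{c}_\zeta\cdot\hx-(\hx)^2\bigr](\y^\zeta)^2+\sum_r\bigl(-\mathrm{l}_r\cdot\hx\bigr)(\w^r)^2$, and once $\hx\cdot\x$ exceeds $(\hx)^2$ the first (damping) term is negative and can balance the ``strictly positive time-averaged forcing'': $\hx\cdot\x$ is confined to $[c_\infty^2-o(1),\,c_\infty]$ and simply need not grow unboundedly, so no contradiction follows from persistent nondecay of $\y^\zeta$ or $\w^r$. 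What one actually extracts is only that the time integral of the forcing is controlled by $\int(\hx\cdot\x-(\hx)^2)(\y)^2\,\de s$, a quantity for which your proposal supplies no bound. Note also that your route would establish $\w^r\to 0$ for every $r$ under the weak hypothesis $\mathrm{c}_\infty\cdot\mathrm{l}_r\leq 0$; the paper only proves this under strict antialignment (corollary \ref{corollary: axion-scalar antialignment - W convergence}) and only \emph{after} the theorem, and your ``secondary projection onto a direction within $\mathrm{P}$'' for the degenerate case $\mathrm{c}_\infty\cdot\mathrm{l}_r=0$ is not an argument. Finally, your concluding Cauchy--Schwarz saturation needs $\limsup_{t\to\infty}|\x|\leq c_\infty$, which nothing in the proposal provides (step one only gives the $\liminf$ bound), and the claim that the reduced dynamics on $\Omega$ admits $\hx$ as its only invariant point is precisely the hard part, not a consequence of eqs.~(\ref{eq.: cbarX-identity}, \ref{eq.: cbarY-identity}).

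The tell-tale sign of the gap is that the hypothesis $y^\eta(t_0)>0$ is never used, whereas it is the linchpin of the paper's proof: weighting eq.~(\ref{axion-scalar y-equation}) for the hyperplane indices by $(\hy^\eta)^2$ and integrating the logarithms, the bounds $0<\y^\eta\leq 1$ give a uniform-in-$t$ estimate $\int_{t_0}^t\bigl[(\x)^2+(\w)^2-\hx\cdot\x\bigr]\de s\leq\omega$. Combined with $(\x)^2-\hx\cdot\x\geq\frac12\bigl[(\x)^2-(\hx)^2\bigr]$ and the sphere constraint~(\ref{axion-scalar sphere-condition}), this yields linear growth of $\rmvarphi$, hence exponential decay of $(\hx)^2-\hx\cdot\x$, hence $\int_{t_0}^\infty(\x-\hx)^2\,\de s<\infty$, and convergence follows from the Barbalat-type theorem of ref.~\cite[th.~2]{Shiu:2023fhb} --- with no limit-set analysis and, crucially, without first proving $\w^r\to 0$ (that is deduced afterwards). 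To rescue your approach you would need an analogous quantitative Lyapunov estimate playing the role of this weighted-logarithm bound; as written, the proposal does not close.
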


\begin{proof}
    It is convenient to organize the proof in several major steps.
    \begin{enumerate}[label=\roman*)]
        \item \label{step i} By multiplying eq.~(\ref{axion-scalar x-equation}) by $\hx^a$ and summing over $a$, and taking advantage of the inequalities $\smash{\mathrm{c}_i \cdot \hx \geq (\hx)^2}$ and $\smash{\hx \cdot \mathrm{l}_r \leq 0}$, one can write the differential inequality
        \begin{align*}
            \hx \cdot \dot{\x} = - \hx \cdot \x \, (\y)^2 + \sum_{\eta=1}^m \mathrm{c}_\eta \cdot \hx \, (\y^\eta)^2 - \sum_{r=1}^p \mathrm{l}_r \cdot \hx \, (\w^r)^2 \geq - \bigl[ \hx \cdot \x - (\hx) ^2\bigr] \, (\y)^2.
        \end{align*}
        This integrates to the inequality $\smash{\hx \cdot \x(t) - (\hx)^2 \geq [\hx \cdot \x(t_0) - (\hx)^2] \, \e^{-\rmvarphi(t)}}$, which implies the asymptotic condition $\smash{\liminf_{t \to \infty} \, \hx \cdot \x(t) \geq (\hx)^2}$. This, in turn, implies the inequality
        \begin{align*}
            \liminf_{t \to \infty} \, \sqrt{[\x(t)]^2} \geq c_\infty.
        \end{align*}
        \item For the functions $\y^\eta$, for $\eta=1, \dots, k$, eq.~(\ref{axion-scalar y-equation}) can be written as $\smash{\dot{\y}^\eta/\y^\eta = (\x)^2 + (\w)^2 - \mathrm{c}_\eta \cdot \x}$, which can be combined into the equation
        \begin{align*}
            \sum_{\eta=1}^k (\hy^\eta)^2 \dfrac{\dot{\y}^\eta}{\y^\eta} = (\hy)^2 \bigl[ (\x)^2 + (\w)^2 \bigr] - \sum_{\eta=1}^k (\hy^\eta)^2 \, \mathrm{c}_i \cdot \x = (\hy)^2 \bigl[ (\x)^2 + (\w)^2 - \hx \cdot \x \bigr].
        \end{align*}
        After an integration, one gets the identity
        \begin{align*}
            \dfrac{1}{(\hy)^2} \, \ln \, \prod_{\eta=1}^k \biggl[ \dfrac{\y^\eta(t)}{\y^\eta(t_0)} \biggr]^{(\hy^\eta)^2} = \int_{t_0}^t \de s \; \bigl[ (\x(s))^2 + (\w(s))^2 - \hx \cdot \x(s) \bigr].
        \end{align*}
        Because $\smash{\y^\eta(t) > 0}$ at all times, there exists a positive real constant $\omega \in\; \mathbb{R}^+$, independent of time $t$, such that
        \begin{align*}
            \int_{t_0}^t \de s \; \bigl[ (\x(s))^2 + (\w(s))^2 - \hx \cdot \x(s) \bigr] \leq \omega.
        \end{align*}
        \item Because $\smash{(\x)^2/2 - (\hx)^2/2 \leq (\x)^2 - \hx \cdot \x}$, one can write the inequality
        \begin{align*}
            \int_{t_0}^t \de s \; \biggl[ \dfrac{1}{2} \, (\x(s))^2 + (\w(s))^2 - \dfrac{1}{2} \, (\hx)^2 \biggr] \leq \omega.
        \end{align*}
        Because $\smash{(\x)^2 + (\w)^2 = 1 - (\y)^2}$, an immediate consequence is the further inequality
        \begin{align*}
            \int_{t_0}^t \de s \; [\y(s)]^2 \geq \int_{t_0}^t \de s \; (1 - c_\infty^2) - 2 \omega = (1 - c_\infty^2) (t - t_0) - 2 \omega.
        \end{align*}
        This means that, for any positive real number $\smash{\delta \in\; ]0, 1-c_\infty^2[}$, there exists a time $t_\delta > t_0$ such that, at all times $t > t_\delta$, one has
        \begin{align*}
            \int_{t_0}^t \de s \; [\y(s)]^2 \geq \delta (t-t_0).
        \end{align*}
        By the results above, this means that one can write the inequality $\smash{(\hx)^2 - \hx \cdot \x(t) \leq \ab (\hx)^2 - \hx \cdot \x(t_0) \ab \, \e^{- \delta (t-t_0)}}$ for all times $t > t_\delta$. An integration then gives the chain of inequalities
        \begin{align*}
              \int_{t_0}^t \de s \; \bigl[ (\hx)^2 - \hx \cdot \x(s) \bigr] \leq \bigl\ab (\hx)^2 - \hx \cdot \x(t_0) \bigr\ab \int_{t_0}^t \de s \; \e^{- \delta s} \leq \psi_\delta,
        \end{align*}
        where $\psi_\delta$ is a real positive constant $\smash{\psi_\delta \in\; \ab (\hx)^2 - \hx \cdot \x(t_0) \ab/\delta, \infty[}$.
        \item Because $\smash{(\x - \hx)^2 = [(\x)^2 - \x \cdot \hx] + [(\hx)^2 - \x \cdot \hx]}$, one can combine the previous integral inequalities to get the integral inequality
        \begin{align*}
            \int_{t_0}^\infty \de s \; \bigl[ \x(t) - \hx \bigr] \!\cdot\! \bigl[ \x(t) - \hx \bigr] \leq \omega + \psi_\delta < \infty.
        \end{align*}
        By ref.~\cite[th. 2]{Shiu:2023fhb}, one concludes with eq.~(\ref{eq.: axion-scalar antialignment - X convergence}).
    \end{enumerate}
\end{proof}

\vspace{2pt}
\begin{corollary} \label{corollary: axion-scalar antialignment - W convergence}
    Let $y^\eta(t_0)>0$ for all indices $\eta=1,\ldots,k$. If $\smash{c_\infty^2 < 1}$ and $\smash{\mathrm{c}_\infty \cdot \mathrm{l}_r < 0}$, one has
    \begin{equation} \label{eq.: axion-scalar antialignment - W convergence}
        \lim_{t \to \infty} \, \w^r(t) = 0.
    \end{equation}
\end{corollary}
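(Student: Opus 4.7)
The plan is to reduce the corollary to an elementary ordinary-differential-inequality argument, leveraging Theorem \ref{theorem: axion-scalar antialignment - X convergence} as a black box. In the $s$-time parameterization, eq.~(\ref{axion-scalar w-equation}) becomes, after dividing out by $f$,
\begin{equation*}
    \dfrac{\de \w^r}{\de t} = \bigl[ - (\y)^2 + \mathrm{l}_r \cdot \x \bigr] \w^r,
\end{equation*}
so $\w^r$ satisfies a linear scalar ODE with a time-dependent coefficient. Since $(\y)^2 \geq 0$, one has the upper bound $\frac{\de}{\de t} \ln | \w^r | \leq \mathrm{l}_r \cdot \x$, which means that only the asymptotic sign of $\mathrm{l}_r \cdot \x(t)$ matters.

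First I would invoke Theorem \ref{theorem: axion-scalar antialignment - X convergence}, which gives $\lim_{t \to \infty} \x(t) = \hx = \mathrm{c}_\infty$. By continuity of the linear functional $\x \mapsto \mathrm{l}_r \cdot \x$, this yields $\lim_{t \to \infty} \mathrm{l}_r \cdot \x(t) = \mathrm{l}_r \cdot \mathrm{c}_\infty$. Under the hypothesis $\mathrm{c}_\infty \cdot \mathrm{l}_r < 0$, denote $\smash{\mu_r \equiv - \mathrm{c}_\infty \cdot \mathrm{l}_r > 0}$. Then there exists a time $T_r > t_0$ such that, for all $t > T_r$, one has $\mathrm{l}_r \cdot \x(t) \leq - \mu_r / 2$.

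Next I would integrate the above differential inequality on $[T_r, t]$ to obtain
\begin{equation*}
    | \w^r(t) | \leq | \w^r(T_r) | \, \exp\biggl[ \int_{T_r}^{t} \de s \; \mathrm{l}_r \cdot \x(s) \biggr] \leq | \w^r(T_r) | \, \e^{- \frac{1}{2} \mu_r (t - T_r)},
\end{equation*}
which is valid for all $t > T_r$. Taking $t \to \infty$ forces $| \w^r(t) | \to 0$, i.e. eq.~(\ref{eq.: axion-scalar antialignment - W convergence}). The argument is uniform in $r$, so each axion mode decays independently.

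There is no real obstacle here beyond invoking Theorem \ref{theorem: axion-scalar antialignment - X convergence}: the strict inequality $\mathrm{c}_\infty \cdot \mathrm{l}_r < 0$ (as opposed to the weak inequality used to prove the theorem itself) is precisely what upgrades the argument from ``$\x$-convergence'' to ``exponential decay of $\w^r$.'' If one only had $\mathrm{c}_\infty \cdot \mathrm{l}_r \leq 0$, the coefficient $\mathrm{l}_r \cdot \x(t)$ could tend to zero and the decay of $\w^r$ would not follow from this kind of crude estimate; a finer analysis coupling the $\w^r$- and $\y^i$-equations would be needed. The strict inequality, however, makes the proof essentially one line once $\x(t) \to \mathrm{c}_\infty$ is in hand.
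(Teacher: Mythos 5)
Your proposal is correct and follows essentially the same route as the paper: invoke Theorem \ref{theorem: axion-scalar antialignment - X convergence} to get $\mathrm{l}_r \cdot \x(t) \to \mathrm{c}_\infty \cdot \mathrm{l}_r < 0$, drop the non-positive $-(\y)^2$ term in eq.~(\ref{axion-scalar w-equation}), and integrate the resulting linear differential inequality to obtain exponential decay of $\w^r$. Your use of $|\w^r|$ is a minor (and harmless) refinement of the paper's argument, which writes the same estimate directly for $\w^r$.
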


\begin{proof}
    Let $\smash{\mathrm{c}_\infty \cdot \mathrm{l}_r = - 2 u < 0}$. Because of eq.~(\ref{eq.: axion-scalar antialignment - X convergence}), one can write $\smash{\lim_{t \to \infty} \, \sum_{a=1}^n l_{r a} \x^a(t) = -2 u < 0}$. Hence, there exists a time $t_u > t_0$ such that, in view of eq.~(\ref{axion-scalar w-equation}), at all times $t > t_u$ the differential inequality holds
    \begin{align*}
        \dot{\w}^r = \biggl[ - (\y)^2 + \sum_{a=1}^n l_{r a} \x^a \biggr] \w^r \leq - u \w^r.
    \end{align*}
    An integration immediately gives eq.~(\ref{eq.: axion-scalar antialignment - W convergence}).
\end{proof}

\vspace{2pt}
\begin{corollary} \label{corollary: axion-scalar antialignment - xi convergence}
    Let $y^\eta(t_0)>0$ for all indices $\eta=1,\ldots,k$. If $\smash{c_\infty^2 < 1}$ and $\smash{\mathrm{c}_\infty \cdot \mathrm{l}_r < 0}$, one has
    \begin{equation} \label{eq.: axion-scalar antialignment - xi convergence}
        \lim_{t \to \infty} \, \emph{$\rmxi$}(t) = c_\infty^2.
    \end{equation}
\end{corollary}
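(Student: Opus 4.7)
The plan is to observe that this corollary follows by directly combining Theorem \ref{theorem: axion-scalar antialignment - X convergence} and Corollary \ref{corollary: axion-scalar antialignment - W convergence}, since by definition
\begin{equation*}
    \rmxi(t) = \xi[s(t)] = \bigl[\x(t)\bigr]^2 + \bigl[\w(t)\bigr]^2.
\end{equation*}
No additional dynamical input is needed: the statement is purely an algebraic consequence of componentwise convergence of $\x^a$ and $\w^r$.

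First, I would invoke Theorem \ref{theorem: axion-scalar antialignment - X convergence}, whose hypotheses ($c_\infty^2<1$ and $y^\eta(t_0)>0$) are all assumed here. This gives $\lim_{t\to\infty}\x^a(t)=\hx^a$ for every $a$. Since the map $\x\mapsto(\x)^2=\sum_a(\x^a)^2$ is continuous and $(\hx)^2=c_\infty^2$ (as recorded just before the theorem), I conclude $\lim_{t\to\infty}[\x(t)]^2=c_\infty^2$.

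Next, I would invoke Corollary \ref{corollary: axion-scalar antialignment - W convergence}: its extra hypothesis $\mathrm{c}_\infty\cdot\mathrm{l}_r<0$ is precisely the additional assumption made in the statement being proved. It gives $\lim_{t\to\infty}\w^r(t)=0$ for each $r$, hence $\lim_{t\to\infty}[\w(t)]^2=0$ by continuity of the squared Euclidean norm on $\mathbb{R}^p$.

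Adding the two limits yields
\begin{equation*}
    \lim_{t\to\infty}\rmxi(t)=\lim_{t\to\infty}[\x(t)]^2+\lim_{t\to\infty}[\w(t)]^2=c_\infty^2+0=c_\infty^2,
\end{equation*}
which is the claim. There is no real obstacle here: the corollary is a bookkeeping consequence of the previous two results, and its purpose is to repackage them into the single statement about the physically relevant quantity $\rmxi$ that appears in the bound eq.~(\ref{axion-scalar antialignment: late-time epsilon}) of the main text.
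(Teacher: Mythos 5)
Your proposal is correct and follows exactly the paper's route: the paper proves the corollary as "a trivial consequence" of Theorem \ref{theorem: axion-scalar antialignment - X convergence} and Corollary \ref{corollary: axion-scalar antialignment - W convergence}, which is precisely the combination you spell out (using $(\hx)^2=c_\infty^2$ and continuity of the squared norm). Nothing is missing; you have simply written out the bookkeeping the paper leaves implicit.
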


\begin{proof}
    This is a trivial consequence of eqs.~(\ref{eq.: axion-scalar antialignment - X convergence}, \ref{eq.: axion-scalar antialignment - W convergence}).
\end{proof}

\vspace{2pt}
\begin{corollary} \label{corollary: axion-scalar antialignment for large c - xi convergence}
    If $c_\infty^2 \geq 1$, then $\lim_{t \to \infty} \, \xi(t) = 1$.
\end{corollary}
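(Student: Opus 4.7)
The plan is to recycle step \ref{step i} of the proof of Theorem \ref{theorem: axion-scalar antialignment - X convergence}, which already produced a Gronwall-type inequality along the direction $\mathrm{c}_\infty$ using only two ingredients: the convex-hull projection property $\mathrm{c}_\infty \cdot \mathrm{c}_i \geq c_\infty^2$ for every $i$, and the antialignment assumption $\mathrm{c}_\infty \cdot \mathrm{l}_r \leq 0$ in force throughout the subsection. Neither ingredient depends on the size of $c_\infty$, so the estimate applies verbatim. Projecting eq.~(\ref{axion-scalar x-equation}) onto $\mathrm{c}_\infty$ gives
\begin{equation*}
    \dfrac{\de}{\de t} (\mathrm{c}_\infty \cdot \x) = -(\mathrm{c}_\infty \cdot \x)(\y)^2 + \sum_{i} (\mathrm{c}_\infty \cdot \mathrm{c}_i)(\y^i)^2 - \sum_r (\mathrm{c}_\infty \cdot \mathrm{l}_r)(\w^r)^2 \geq (\y)^2 \bigl[ c_\infty^2 - \mathrm{c}_\infty \cdot \x \bigr].
\end{equation*}

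I would then introduce $v(t) = c_\infty^2 - \mathrm{c}_\infty \cdot \x(t)$. By Cauchy-Schwarz and the sphere constraint, $\mathrm{c}_\infty \cdot \x \leq c_\infty |\x| \leq c_\infty$, so the hypothesis $c_\infty \geq 1$ ensures $v(t) \geq c_\infty (c_\infty-1) \geq 0$ throughout the evolution. The inequality above reads $\dot v \leq -(\y)^2 v$, which Gronwall integrates to $0 \leq v(t) \leq v(t_0) \, e^{-\rmvarphi(t)}$.

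From here I would split the analysis into two exhaustive cases. \emph{Case (a)}: $\rmvarphi(t) \to \infty$. Then $v(t) \to 0$, so $\mathrm{c}_\infty \cdot \x \to c_\infty^2$. Combining with $\mathrm{c}_\infty \cdot \x \leq c_\infty \sqrt{\rmxi}$ yields $\sqrt{\rmxi(t)} \geq c_\infty - v(t)/c_\infty \to c_\infty$; but $\sqrt{\rmxi} \leq 1$, so if $c_\infty > 1$ this case is vacuous (no such trajectory exists) and if $c_\infty = 1$ one concludes directly $\rmxi \to 1$. \emph{Case (b)}: $\rmvarphi$ stays bounded, i.e.\ $\int_{t_0}^\infty (\y(s))^2 \de s < \infty$. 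Differentiating $(\y)^2$ in $s$-time gives $\frac{\de}{\de t}(\y)^2 = 2 \sum_i (\y^i)^2[\rmxi - \mathrm{c}_i \cdot \x]$, which is uniformly bounded because $\x,\w,\y^i$ all lie in the closed unit ball; thus $(\y)^2$ is Lipschitz, and a standard analysis lemma shows that a nonnegative, Lipschitz, integrable function must tend to zero. Hence $(\y)^2 \to 0$ and $\rmxi \to 1$.

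The main obstacle is keeping the case analysis clean: case (a) only really fires when $c_\infty=1$, whereas $c_\infty>1$ forces case (b), so the proof is genuinely two-pronged despite the single hypothesis. A secondary point to check carefully is the Lipschitz-plus-integrable-implies-vanishing step, which relies on the uniform bound of $\frac{\de}{\de t}(\y)^2$ derived from compactness of the phase space and boundedness of the fixed coupling vectors $\mathrm{c}_i$, $\mathrm{l}_r$. Modulo these, the corollary follows directly from the convex-hull inequality combined with the Gronwall estimate already developed for Theorem \ref{theorem: axion-scalar antialignment - X convergence}.
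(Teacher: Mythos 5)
Your proof is correct and follows essentially the same route as the paper: it reuses the Gronwall estimate from step i of Theorem \ref{theorem: axion-scalar antialignment - X convergence} together with the sphere constraint, rules out (or handles, when $c_\infty=1$) the case $\rmvarphi\to\infty$, and otherwise concludes $(\y)^2\to 0$ from boundedness of $\rmvarphi$. The only difference is cosmetic — you organize the cases by the behaviour of $\rmvarphi$ rather than by $c_\infty^2>1$ versus $c_\infty^2=1$, and you spell out the Lipschitz-plus-integrable (Barbalat-type) step that the paper leaves implicit.
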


\begin{proof}
    If $c_\infty^2 > 1$, then by step \ref{step i} in the proof of theorem \ref{theorem: axion-scalar antialignment - X convergence}, one has $\lim_{t\to \infty} \, \rmvarphi(t)<+\infty$. This implies that $\lim_{t\to\infty} \, (\y(t))^2=0$, which gives the result.
    If $c_\infty^2 = 1$, one of two scenarios is in place. If $\lim_{t\to \infty}\varphi(t)<+\infty$, then the result follows by the above argument. If $\lim_{t\to \infty} \, \rmvarphi(t)=\infty$, then by step \ref{step i} in the proof of theorem \ref{theorem: axion-scalar antialignment - X convergence}, one has $\lim_{t\to\infty} \, [\x(t)]^2=1$, and the conclusion follows.
\end{proof}

\vspace{2pt}
\begin{remark} \label{remark: paper-2 generalization}
    Of course, eq.~(\ref{eq.: axion-scalar antialignment - X convergence}) also holds in the absence of axion couplings: this means that theorem \ref{theorem: axion-scalar antialignment - X convergence} is a generalization of the results of ref.~\cite[th. 2]{Shiu:2023fhb}. While the proof in ref.~\cite[th. 2]{Shiu:2023fhb} assumes that $\smash{\mathrm{rank} \, c_{ia} = m}$ and that the $(\y^i)^2$-terms are all non-negative at the proper critical point, the proof of theorem \ref{theorem: axion-scalar antialignment - X convergence} works for an arbitrary $c_{ia}$-matrix rank and arbitrary signs of the $\smash{(\y^i)^2}$-terms at any critical point. An example is in fig. \ref{fig.: general convex-hull criterion}.
\end{remark}

\subsubsection{Axion-scalar alignment}
Here, it is assumed that $\mathrm{rank} \, c_{ia} = \mathrm{rank} \, l_{ra} = n$, with $m, p \geq n$. This means that the matrices $c_{ia}$ and $l_{ra}$ are left-invertible, i.e. the matrices $\smash{(c^{-1})^{ai}}$ and $\smash{(l^{-1})^{ar}}$ exist such that $\smash{\sum_{i=1}^m (c^{-1})^{bi} c_{ia} = \delta_a^b}$ and $\smash{\sum_{r=1}^p (l^{-1})^{br} l_{ra} = \delta_a^b}$. Then, let $\smash{q^r = \min_{i} \, \sum_{a=1}^n c_{ia} (l^{-1})^{ar}}$ and $\smash{Q^r = \max_{i} \, \sum_{a=1}^n c_{ia} (l^{-1})^{ar}}$. 

\vspace{2pt}
\begin{lemma} \label{lemma: axion-scalar alignment - x upper bound}
    Let $q^r \geq 0$ for all $r$-indices. Let $w^r(t_0), y^i(t_0)>0$ for all $r$- and $i$-indices.
    Then one has
    \begin{equation} \label{eq.: axion-scalar alignment - x upper bound}
        \limsup_{t \to \infty} \, \sum_{a=1}^n \x_a(t) (l^{-1})^{ar} \leq Q^r.
    \end{equation}
\end{lemma}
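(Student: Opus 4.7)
My plan is to track the scalar quantity $F^r(t) = \sum_{a=1}^n \x_a(t)\,(l^{-1})^{ar}$ and to prove, by a Gronwall-type argument, that it cannot eventually exceed $Q^r$. First I would contract eq.~(\ref{axion-scalar x-equation}) (in the $s$-time frame, where the factor $f$ is absent) with $(l^{-1})^{ar}$. The definition $Q^r = \max_i \sum_a c_{ia}(l^{-1})^{ar}$ bounds the potential contribution $\sum_i (\y^i)^2 \sum_a c_{ia}(l^{-1})^{ar}$ from above by $Q^r(\y)^2$, so that with $G^r = F^r - Q^r$ one arrives at
\begin{align*}
\dot G^r \leq -G^r (\y)^2 - \sum_{s=1}^p (\w^s)^2 \, B_s^r,
\end{align*}
where $B_s^r = \sum_a l_{sa}\,(l^{-1})^{ar}$.

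The second step is to show that the residual axion cross-term is non-negative, so that one is left with the clean inequality $\dot G^r \leq -G^r (\y)^2$. I expect this to follow from the natural Moore--Penrose left-inverse singled out in the main text by the identification $(\lambda^\perp_r)_a = [(\lambda^{-1})^t]^{ra}$: for that choice, the $p \times p$ matrix with entries $B_s^r$ is the orthogonal projector onto the column space of $l_{ra}$ and is therefore positive semi-definite. Combined with the positivity $\w^s(t)>0$ ensured by lemma~\ref{lemma: axion-scalar w,y>0}, and with the assumption $q^r \geq 0$ (which prevents the potential from driving the $\w$-flow away from the directions picked out by $(l^{-1})^{ar}$), this should fix the sign of the cross-term, possibly after absorbing remaining pieces using integrated logarithmic identities obtained from the $\w^s$-equations (in the spirit of step ii of the proof of theorem~\ref{theorem: axion-scalar antialignment - X convergence}). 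Once the cross-term is disposed of, multiplying by the integrating factor $e^{\rmvarphi(t)}$ yields $G^r(t) \leq G^r(t_0)\,e^{-\rmvarphi(t)}$, and the limsup bound follows at once whenever $\rmvarphi(t) \to \infty$.

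The hardest step will be the sign analysis of the axion cross-term when $p > n$: although $B_s^r$ is positive semi-definite as a matrix, the quantity $\sum_s (\w^s)^2 B_s^r$ is a linear combination of its entries rather than a quadratic form in $\w$, and the off-diagonal entries of $B_s^r$ are not sign-definite in general. Extracting the correct sign cleanly will likely require mixing the $\x$-, $\w$- and $\y$-equations through integrated identities instead of relying on a pointwise estimate. A complementary technical nuisance is the degenerate case in which $\rmvarphi(t)$ stays bounded as $t \to \infty$: the exponential estimate is then vacuous, but in that regime the sphere constraint $(\x)^2 + (\w)^2 + (\y)^2 = 1$ forces $(\y)^2 \to 0$ and hence $(\x)^2 + (\w)^2 \to 1$, and a separate argument in the spirit of corollary~\ref{corollary: axion-scalar antialignment for large c - xi convergence} is needed to close the limsup estimate.
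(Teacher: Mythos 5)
Your overall Gronwall skeleton (contract eq.~(\ref{axion-scalar x-equation}) with $(l^{-1})^{ar}$, bound the potential term by $Q^r(\y)^2$, integrate with the factor $\e^{\rmvarphi}$) is the same as the paper's, but the proposal is missing the step that actually makes the lemma work: a proof that $\rmvarphi(t)\to\infty$. You explicitly condition your conclusion on this ("the limsup bound follows at once whenever $\rmvarphi(t)\to\infty$") and then leave the bounded-$\rmvarphi$ case open, suggesting the sphere constraint plus an argument in the spirit of corollary~\ref{corollary: axion-scalar antialignment for large c - xi convergence} could close it. It cannot: if $\rmvarphi$ stays bounded, the integrating-factor estimate only gives $\sum_a\x_a(l^{-1})^{ar}\leq Q^r+[\,\cdot\,]\,\e^{-\rmvarphi(t)}$ with a non-vanishing exponential, and knowing $(\x)^2+(\w)^2\to1$ gives no information about the particular linear combinations $\sum_a\x_a(l^{-1})^{ar}$, so the componentwise bound does not follow. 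The paper's central idea is precisely to rule this case out: integrating eqs.~(\ref{axion-scalar w-equation}, \ref{axion-scalar y-equation}) in logarithmic form and combining them with the non-negative coefficients $\sum_a c_{ia}(l^{-1})^{ar}\geq q^r\geq 0$, together with $\w^r,\y^i\in\,]0,1]$, yields $\rmvarphi(t)\geq (t-t_0)/[1+\sum_{a,r}c_{ia}(l^{-1})^{ar}]-\zeta$, i.e.\ at least linear growth, after which the Gronwall bound closes. Note also that this is where the hypothesis $q^r\geq 0$ is actually used; in your proposal it is invoked only vaguely in connection with the cross-term sign, which is not its role.

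On the cross-term itself: the paper disposes of it by noting that contraction with the left inverse turns the axion contribution into $-(\w)^2\leq 0$ (exactly so when $p=n$, and asserted in general), so no projector positivity analysis is needed there. Your observation that for $p>n$ the quantity $\sum_s(\w^s)^2\,[l\,l^{-1}]_s{}^r$ is a weighted sum of entries of a projector rather than a quadratic form is a fair technical point, but it is secondary; the decisive missing ingredient in your argument is the divergence (indeed linear growth) of $\rmvarphi$, without which the conclusion simply does not follow from your estimates.
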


\begin{proof}
    Formal solutions to eqs.~(\ref{axion-scalar w-equation}, \ref{axion-scalar y-equation}) can be written as
    \begin{align*}
        \ln \, \biggl[ \dfrac{\w^r(t)}{\w^r(t_0)} \biggr] & = - \rmvarphi(t) + \sum_{a=1}^n l_{ra} \int_{t_0}^t \de s \; \x^a(s), \\
        \ln \, \biggl[ \dfrac{\y^i(t)}{\y^i(t_0)} \biggr] & = - \rmvarphi(t) - \sum_{a=1}^n c_{ia} \int_{t_0}^t \de s \; \x^a(s) + (t-t_0). 
    \end{align*}
    A simple linear combination shows that
    \begin{align*}
        \sum_{r=1}^p (l^{-1})^{ar} \, \ln \, \biggl[ \dfrac{\w^r(t)}{\w^r(t_0)} \biggr] + \sum_{i=1}^m (c^{-1})^{ai} \, \ln \, \biggl[ \dfrac{\y^i(t)}{\y^i(t_0)} \biggr] = - \rmvarphi(t) \biggl[  \sum_{r=1}^p (l^{-1})^{ar} + \sum_{i=1}^m (c^{-1})^{ai} \biggr] + (t-t_0) \, \sum_{i=1}^m (c^{-1})^{ai},
    \end{align*}
    which can be further manipulated to show the identity
    \begin{align*}
        \sum_{a=1}^n \sum_{r=1}^p c_{ia} (l^{-1})^{ar} \, \ln \, \biggl[ \dfrac{\w^r(t)}{\w^r(t_0)} \biggr] + \ln \, \biggl[ \dfrac{\y^i(t)}{\y^i(t_0)} \biggr] = - \rmvarphi(t) \biggl[ 1 + \sum_{a=1}^n \sum_{r=1}^p c_{ia} (l^{-1})^{ar} \biggr] + (t-t_0).
    \end{align*}
    By assumption, one has $\smash{\sum_{a=1}^n c_{ia} (l^{-1})^{ar} \geq q^r \geq 0}$ for all $i$- and $r$-indices. At the same time, in view of eq.~(\ref{axion-scalar sphere-condition}), one has $\w^r(t), \y^i(t) \in\; ]0,1]$ at all times $t \geq t_0$ for all $i$- and $r$-indices. Therefore, there exists a positive real constant $\zeta \in \mathbb{R}^+$ such that the inequality holds
    \begin{align*}
        \rmvarphi(t) \geq \dfrac{t - t_0}{\displaystyle 1 + \sum_{a=1}^n \sum_{r=1}^p c_{ia} (l^{-1})^{ar}} - \zeta.
    \end{align*}
    In particular, this implies that $\smash{\lim_{t \to \infty} \, \rmvarphi(t) = \infty}$. In view of eq.~(\ref{axion-scalar x-equation}), one can write the differential inequality
    \begin{align*}
        \sum_{a=1}^n \dot{\x}_a (l^{-1})^{ar} = - \sum_{a=1}^n \x_a (l^{-1})^{ar} \, (\y)^2 + \sum_{a=1}^n \sum_{i=1}^m c_{i a} (l^{-1})^{ar} (\y^i)^2 - (\w)^2 \leq - \biggl[ \sum_{a=1}^n \x_a (l^{-1})^{ar} - Q^r \biggr] \, (\y)^2.
    \end{align*}
    An integration leads to the inequality
    \begin{align*}
        \sum_{a=1}^n \x_a(t) (l^{-1})^{ar} \leq Q^r + \biggl[ \sum_{a=1}^n \x_a(t_0) (l^{-1})^{ar} - Q^r\biggr] \, \e^{- \rmvarphi(t)},
    \end{align*}
    which, then, because $\smash{\lim_{t \to \infty} \, \rmvarphi(t) = \infty}$, immediately implies eq.~(\ref{eq.: axion-scalar alignment - x upper bound}).
\end{proof}

\vspace{4pt}
Let $\smash{\tilde{C}_a = \sum_{r=1}^p Q^r l_{ra}}$. By assumption, let $\smash{c_{ia}, l_{ra} \geq 0}$ for all $i$-, $r$- and $a$-indices. 
An example is in fig. \ref{fig.: axion-scalar alignment}.

\vspace{2pt}
\begin{lemma} \label{lemma: axion-scalar alignment - w=0 and x>c}
    Let $w^r(t_0), y^i(t_0)>0$ for all $r$- and $i$-indices. If $\smash{\sum_{a=1}^n (c_{ia} + l_{ra}) \tilde{C}^a < 1}$ for all $i$-indices, then one has
    \begin{equation} \label{eq.: axion-scalar alignment - w=0}
        \lim_{t \to \infty} \, \w^r(t) = 0.
    \end{equation}
    Moreover, if the condition $\smash{\sum_{a=1}^n (c_{ia} + l_{ra}) \tilde{C}^a < 1}$ holds for all $r$-indices, one has
    \begin{equation} \label{eq.: axion-scalar alignment - x>c}
        \liminf_{t \to \infty} \, \x^a(t) \geq c^a.
    \end{equation}
\end{lemma}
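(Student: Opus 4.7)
The plan splits into proving the two claims in sequence. Both rely on the explicit integrations of eqs.~(\ref{axion-scalar w-equation}, \ref{axion-scalar y-equation}) used in the proof of lemma \ref{lemma: axion-scalar alignment - x upper bound}, i.e.\ the formulas for $\ln \w^r(t)$ and $\ln \y^i(t)$ in terms of $\rmvarphi(t)$ and $X^a(t) := \int_{t_0}^t \x^a(s) \, \de s$.

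For the first claim, I would upgrade the limsup bound of lemma \ref{lemma: axion-scalar alignment - x upper bound} into a componentwise bound on $\x^a$. From the left-inverse relation one has $\x_a(t) = \sum_r l_{ra} v^r(t)$ with $v^r(t) = \sum_a \x_a(t) (l^{-1})^{ar}$, and since $l_{ra} \geq 0$ the bound $\limsup_t v^r(t) \leq Q^r$ lifts to $\limsup_t \x^a(t) \leq \tilde{C}^a$. Integrating gives $X^a(t) \leq (\tilde{C}^a + \epsilon)(t - t_0) + O(1)$ for any $\epsilon > 0$ and large $t$. Next, the bound $\y^i(t) \leq 1$ in the explicit solution of eq.~(\ref{axion-scalar y-equation}) yields $\rmvarphi(t) \geq (t - t_0) - \sum_a c_{ia} X^a(t) - D_i$ for every $i$. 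Inserting both into
\begin{equation*}
    \ln[\w^r(t)/\w^r(t_0)] = -\rmvarphi(t) + \sum_a l_{ra} X^a(t)
\end{equation*}
and invoking the hypothesis $\sum_a (c_{ia} + l_{ra}) \tilde{C}^a < 1$ produces a strictly negative linear upper bound, so that $\w^r(t) \to 0$ exponentially fast.

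For the second claim the hypothesis now holds for every $r$, so part~1 gives $\w^r(t) \to 0$ for each $r$; in particular $\epsilon_1(t) := \sum_r l_{ra} (\w^r(t))^2$ decays exponentially and is integrable on $[t_0, \infty)$. From eq.~(\ref{axion-scalar x-equation}) and $c_{ia} \geq c^a$, one obtains the differential inequality $\dot \x^a + (\x^a - c^a)(\y)^2 \geq -\epsilon_1(t)$. Multiplying by the integrating factor $e^\rmvarphi$ and integrating from $t_0$ to $t$ yields
\begin{equation*}
    \x^a(t) - c^a \geq [\x^a(t_0) - c^a] \, e^{-\rmvarphi(t)} - \int_{t_0}^t \epsilon_1(s) \, e^{\rmvarphi(s) - \rmvarphi(t)} \, \de s.
\end{equation*}
Because $\rmvarphi(t) \to \infty$ linearly (a byproduct of the part~1 argument, which produced a lower bound $\rmvarphi(t) \geq C_1 (t - t_0) - C_2$ with $C_1 > 0$), the first term vanishes, and the claim $\liminf_t \x^a(t) \geq c^a$ follows once the integral on the right is shown to vanish.

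The main obstacle is controlling the quantity $\int_{t_0}^t \epsilon_1(s) e^{\rmvarphi(s) - \rmvarphi(t)} \de s$, since $(\y(t))^2$ has no a priori uniform positive lower bound and a direct L'Hopital argument is not available. The plan is to split the integral at $t/2$: on $[t_0, t/2]$, the factor $e^{\rmvarphi(s) - \rmvarphi(t)} \leq e^{\rmvarphi(t/2) - \rmvarphi(t)}$ is bounded by $e^{-C_1 t/2 + O(1)}$ and combines with the bounded $L^1$-norm of $\epsilon_1$ on $[t_0, t/2]$; on $[t/2, t]$, the monotonicity $\rmvarphi(s) \leq \rmvarphi(t)$ makes the exponential factor at most $1$, so the contribution is bounded by $\int_{t/2}^\infty \epsilon_1 \de s \to 0$. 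Both pieces vanish as $t \to \infty$, completing the proof.
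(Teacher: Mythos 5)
For the first claim your argument is the paper's argument with slightly different bookkeeping. The paper subtracts the integrated forms of eqs.~(\ref{axion-scalar w-equation}, \ref{axion-scalar y-equation}) so that $\rmvarphi$ cancels, obtaining $\ln[\y^i(t)/\y^i(t_0)]-\ln[\w^r(t)/\w^r(t_0)]=(t-t_0)-\sum_{a}(c_{ia}+l_{ra})\int_{t_0}^t\x^a$, and then uses the componentwise consequence of lemma \ref{lemma: axion-scalar alignment - x upper bound} (lifted via $l_{ra}\geq 0$, exactly as you do) together with $\y^i\leq 1$ to force $\w^r\to 0$ exponentially; you instead use $\y^i\leq 1$ first to bound $\rmvarphi$ from below and then substitute into the $\w^r$-solution. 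Same ingredients, same conclusion, including the need for $c_{ia},l_{ra}\geq 0$ to convert the limsup on $\x^a$ into a bound on the weighted integral.

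For the second claim you are in fact more careful than the paper, which after $\w^r\to 0$ simply writes $\dot{\x}_a\geq-\x_a(\y)^2+\sum_i c_{ia}(\y^i)^2\geq-(\x_a-c_a)(\y)^2$ beyond some $t_\infty$, i.e.\ it discards the non-positive forcing $-\sum_r l_{ra}(\w^r)^2$ at late times and integrates; retaining that forcing as an exponentially decaying $\epsilon_1$ and using the integrating factor $\e^{\rmvarphi}$ is the honest version of that step. The one step of yours that does not hold as stated is the estimate $\e^{\rmvarphi(t/2)-\rmvarphi(t)}\leq\e^{-C_1 t/2+O(1)}$: a linear lower bound $\rmvarphi(t)\geq C_1(t-t_0)-C_2$ controls $\rmvarphi(t)$ but not the increment $\rmvarphi(t)-\rmvarphi(t/2)$, because $\rmvarphi(t/2)$ can be as large as $t/2-t_0$; if $C_1\leq 1/2$ the increment can remain bounded along a sequence $t_k\to\infty$ (growth of $\rmvarphi$ concentrated before $t_k/2$), and then the head of your split does not vanish. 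The repair is immediate and needs even less than you assume: given $\delta>0$, split at a \emph{fixed} time $T$ with $\int_T^\infty\epsilon_1\,\de s<\delta/2$; the tail contribution is at most $\delta/2$ since the exponential factor is $\leq 1$, while the head is bounded by $\e^{\rmvarphi(T)-\rmvarphi(t)}\int_{t_0}^{\infty}\epsilon_1\,\de s\to 0$ using only $\rmvarphi(t)\to\infty$, which you have already established. (Alternatively, the hypothesis gives $\sum_a c_{ia}\tilde{C}^a<1$ for all $i$, whence a logistic inequality for $(\y)^2$ yields $\liminf_{t\to\infty}(\y(t))^2>0$, and the exponentially small $\epsilon_1$ can then be absorbed directly into $-(\x_a-c_a-\delta)(\y)^2$.) With that one-line fix your proof is complete and, at this point, somewhat more rigorous than the paper's own.
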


\begin{proof}
    A linear combination of the formal solutions to eqs.~(\ref{axion-scalar w-equation}, \ref{axion-scalar y-equation}) allows one to write the identity
    \begin{align*}
        \ln \, \biggl[ \dfrac{\y^i(t)}{\y^i(t_0)} \biggr] - \ln \, \biggl[ \dfrac{\w^r(t)}{\w^r(t_0)} \biggr] = (t - t_0) - \sum_{a=1}^n (c_{ia} + l_{ra}) \int_{t_0}^t \de s \; \x^a(s). 
    \end{align*}
    In view of eq.~(\ref{eq.: axion-scalar alignment - x upper bound}), for any arbitrary positive real constant $\delta \in \mathbb{R}^+$, there exists a time $\smash{t_\delta > t_0}$ such that the identity above implies the inequality
    \begin{align*}
        \ln \, \biggl[ \dfrac{\y^i(t)}{\y^i(t_0)} \biggr] - \ln \, \biggl[ \dfrac{\w^r(t)}{\w^r(t_0)} \biggr] \geq (t - t_0) \, \biggl[ 1 - \sum_{a=1}^n (c_{ia} + l_{ra}) \tilde{C}^a \biggr].
    \end{align*}
    Because $\sum_{a=1}^n (c_{ia} + l_{ra}) \tilde{C}^a < 1$, given $\smash{\delta = [1 - \sum_{a=1}^n (c_{ia} + l_{ra}) \tilde{C}^a]/2}>0$, one can write the inequality
    \begin{align*}
        \dfrac{\y^i(t)}{\w^r(t)} \geq \dfrac{\y^i(t_0)}{\w^r(t_0)} \, \e^{\delta \, (t - t_0)}.
    \end{align*}
    Because $\w^r(t), \y^i(t) \in\; ]0,1]$ at all times $t \geq t_0$ for all $i$- and $r$-indices, eq.~(\ref{eq.: axion-scalar alignment - w=0}) follows immediately. In case this holds for all $r$-indices, then there exists a time $t_\infty > t_\delta$ such that eq.~(\ref{axion-scalar x-equation}) implies the chain of differential inequalities
    \begin{align*}
        \dot{\x}_a \geq - \x_a (\y)^2 + \sum_{i=1}^m c_{ia} (\y^i)^2 \geq - (\x_a - c_a) \, (\y)^2,
    \end{align*}
    which implies eq.~(\ref{eq.: axion-scalar alignment - x>c}).
\end{proof}

\vspace{2pt}
\begin{theorem} \label{theorem: axion-scalar alignment - xi bounds}
    Let $w^r(t_0), y^i(t_0)>0$ for all $r$- and $i$-indices. If the condition $\smash{\sum_{a=1}^n (c_{ia} + l_{ra}) \tilde{C}^a < 1}$ holds for all $i$- and $r$-indices, and if $l^a, c^a \geq 0$ for all $a$-indices, then one has
    \begin{subequations}
        \begin{align}
            \liminf_{t \to \infty} \, \emph{\rmxi}(t) & \geq (c)^2, \label{eq.: axion-scalar alignment - xi lower bound} \\
            \limsup_{t \to \infty} \, \emph{\rmxi}(t) & \leq (\tilde{C})^2. \label{eq.: axion-scalar alignment - xi upper bound}
        \end{align}
    \end{subequations}
    Of course, if $\tilde{C}_a = c_a$, one has
    \begin{equation}
        \lim_{t \to \infty} \, \emph{\rmxi}(t) = (c)^2.
    \end{equation}
\end{theorem}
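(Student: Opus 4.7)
The plan is to build the theorem directly on top of Lemmas \ref{lemma: axion-scalar alignment - x upper bound} and \ref{lemma: axion-scalar alignment - w=0 and x>c}, which already furnish the bounds on the individual components $\x^a$ and the vanishing of $\w^r$. Once those two results are in place, the conclusion follows by componentwise sandwiching of $\x$ between $\mathrm{c}$ and $\tilde{\mathrm{C}}$, plus an elementary upgrade of these componentwise bounds to bounds on the squared norm.

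First I would invoke Lemma \ref{lemma: axion-scalar alignment - w=0 and x>c}: the hypothesis $\sum_{a}(c_{ia}+l_{ra})\tilde{C}^a<1$ holding for all $i$ (and all $r$) gives $\lim_{t\to\infty}\w^r(t)=0$ for every $r$, together with $\liminf_{t\to\infty}\x^a(t)\geq c^a$ for every $a$. In particular $\lim_{t\to\infty}(\w(t))^2=0$, so $\rmxi$ and $(\x)^2$ share the same $\liminf$ and $\limsup$ at infinity, which reduces the problem to bounding $(\x)^2$.

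Next I would apply Lemma \ref{lemma: axion-scalar alignment - x upper bound} to get $\limsup_{t\to\infty}u^r(t)\leq Q^r$ where $u^r(t)=\sum_a \x_a(t)(l^{-1})^{ar}$. The key small observation is that the left-invertibility identity $\sum_{r=1}^{p}(l^{-1})^{br}l_{ra}=\delta_a^b$ allows one to recover $\x_a(t)=\sum_{r=1}^{p}u^r(t)\,l_{ra}$. Since $l_{ra}\geq 0$ by assumption, for any $\varepsilon>0$ the inequality $u^r(t)\leq Q^r+\varepsilon$ for $t$ large enough gives
\begin{equation*}
    \x_a(t) \leq \sum_{r=1}^{p}(Q^r+\varepsilon)\,l_{ra} = \tilde{C}_a + \varepsilon\sum_{r=1}^{p}l_{ra},
\end{equation*}
hence $\limsup_{t\to\infty}\x_a(t)\leq \tilde{C}_a$. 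Combined with $\liminf_{t\to\infty}\x_a(t)\geq c_a$ and $c_a,\tilde{C}_a\geq 0$, the scalar-valued function $\x_a$ is eventually trapped in a neighborhood of the non-negative interval $[c_a,\tilde{C}_a]$, so $(c_a)^2\leq \liminf_{t\to\infty}(\x_a(t))^2$ and $\limsup_{t\to\infty}(\x_a(t))^2\leq (\tilde{C}_a)^2$. Summing over $a$ via the standard $\liminf$/$\limsup$ sub- and superadditivity and using $\lim_t(\w(t))^2=0$ then yields eqs.~(\ref{eq.: axion-scalar alignment - xi lower bound}) and (\ref{eq.: axion-scalar alignment - xi upper bound}). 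The final statement is immediate: if $\tilde{C}_a=c_a$ for every $a$, the sandwich collapses, $\x_a(t)\to c_a$, and $\rmxi(t)\to (c)^2$.

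I do not expect a genuine obstacle here, as all the analytic work has already been absorbed into the two preceding lemmas. The only point requiring a little care is the passage from componentwise bounds on $\x_a$ to a bound on $(\x)^2=\sum_a(\x_a)^2$, which is the reason the non-negativity assumptions $c_a,l_{ra}\geq 0$ (and hence $\tilde{C}_a\geq 0$) in the hypothesis are essential: without them, squaring the inequalities $c_a\leq \liminf\x_a$ and $\limsup\x_a\leq\tilde{C}_a$ would not preserve the inequality direction and the $\liminf$/$\limsup$ sandwich on $(\x_a)^2$ would fail.
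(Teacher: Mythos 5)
Your proposal is correct and follows essentially the same route as the paper, whose proof of this theorem is simply to invoke Lemmas \ref{lemma: axion-scalar alignment - x upper bound} and \ref{lemma: axion-scalar alignment - w=0 and x>c}; you merely spell out the routine step of converting $\limsup_{t\to\infty}\sum_a \x_a(t)(l^{-1})^{ar}\leq Q^r$ into $\limsup_{t\to\infty}\x_a(t)\leq \tilde{C}_a$ via left-invertibility and $l_{ra}\geq 0$, exactly as the paper does in the proof of theorem \ref{theorem: axion-scalar alignment - xi general bound}, and the componentwise sandwich plus $\w\to 0$ then gives the stated bounds.
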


\begin{proof}
    The proof follows immediately from lemmas \ref{lemma: axion-scalar alignment - x upper bound} and \ref{lemma: axion-scalar alignment - w=0 and x>c}.
\end{proof}

\vspace{2pt}
\begin{theorem} \label{theorem: axion-scalar alignment - xi general bound}
    Let $w^r(t_0), y^i(t_0)>0$ for all $r$- and $i$-indices. If $\smash{c_{ia} \geq 0}$ for all $i$- and $a$-indices, and if $\sum_{a=1}^n C_a \tilde{C}^a\leq 1$, then one has
    \begin{equation} \label{eq.: axion-scalar alignment - xi general bound}
        \limsup_{t \to \infty} \, \emph{\rmxi}(t) \leq \sum_{a=1}^n C_a \tilde{C}^a.
    \end{equation}
\end{theorem}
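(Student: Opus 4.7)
The plan is to promote $\rmxi$ itself into a Lyapunov functional. Differentiating $\rmxi=(\x)^2+(\w)^2$ along the flow via eqs.~(\ref{axion-scalar x-equation},~\ref{axion-scalar w-equation}), the cross terms $\x^a l_{ra}(\w^r)^2$ cancel exactly between $\frac{d}{dt}(\x)^2$ and $\frac{d}{dt}(\w)^2$, leaving the clean identity
\[
\dot{\rmxi} \;=\; -2\rmxi\,(\y)^2 \;+\; 2\sum_{i,a} c_{ia}\,\x^a\,(\y^i)^2.
\]
Strikingly, the axion couplings $l_{ra}$ have disappeared from the source; only the potential couplings $c_{ia}$ remain.

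The second ingredient is a pointwise asymptotic bound $\limsup_{t\to\infty} \x^a(t) \le \tilde{C}^a$. Lemma~\ref{lemma: axion-scalar alignment - x upper bound} already supplies $\limsup \sum_a \x_a(t)(l^{-1})^{ar} \le Q^r$ for each $r$; multiplying by $l_{rb} \ge 0$ (the subsection's standing sign assumption on $l_{ra}$) and summing over $r$, the left-hand side collapses to $\x_b$ by the left-inverse identity $\sum_r (l^{-1})^{ar} l_{rb} = \delta^a_b$, while the right-hand side becomes $\sum_r l_{rb} Q^r = \tilde{C}_b$.

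To estimate the source term, fix $\varepsilon>0$ and pick $t_\varepsilon$ so that $\x^a(t) \le \tilde{C}^a + \varepsilon$ for every $a$ and every $t>t_\varepsilon$. For each $i$, split $\sum_a c_{ia}\x^a$ by the sign of $\x^a$: when $\x^a \ge 0$ use $0 \le c_{ia} \le C_a$ to get $c_{ia}\x^a \le C_a(\tilde{C}^a + \varepsilon)$; when $\x^a < 0$ use $c_{ia} \ge 0$ to get $c_{ia}\x^a \le 0 \le C_a\tilde{C}^a$ (since $C_a,\tilde{C}^a \ge 0$). Either way $c_{ia}\x^a \le C_a\tilde{C}^a + C_a\varepsilon$. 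Summing over $a$ and then over $i$ with weights $(\y^i)^2$,
\[
\sum_{i,a} c_{ia}\,\x^a\,(\y^i)^2 \;\le\; \biggl[C\cdot\tilde{C} + \varepsilon\sum_a C_a\biggr](\y)^2.
\]

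Plugging this into the ODE and setting $Z = \rmxi - C\cdot\tilde{C} - \varepsilon\sum_a C_a$ yields $\dot Z \le -2Z(\y)^2$, equivalently $\frac{d}{dt}(Ze^{2\rmvarphi})\le 0$. Since $\rmvarphi(t)\to\infty$ (a byproduct of the proof of Lemma~\ref{lemma: axion-scalar alignment - x upper bound} under $q^r \ge 0$), the function $Ze^{2\rmvarphi}$ is non-increasing and bounded above by its value at $t_\varepsilon$, so $\limsup Z \le 0$, i.e.\ $\limsup \rmxi \le C\cdot\tilde{C} + \varepsilon\sum_a C_a$; sending $\varepsilon\to 0^+$ closes the argument. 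The crux I expect to require the most care is the source-term bound: a naive Cauchy--Schwarz estimate $\sum_a c_{ia}\x^a \le |C|\sqrt{\rmxi}$ only delivers the looser $\limsup \rmxi \le |C|^2$, so the sign-splitting trick that marries $c_{ia}\le C_a$ with $\x^a \le \tilde{C}^a$ \emph{without} demanding any lower bound on $\x^a$ is what unlocks the sharp constant $C\cdot\tilde{C}$.
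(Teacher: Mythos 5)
Your proposal is correct and takes essentially the same route as the paper: both deduce $\limsup_{t\to\infty}\x_a(t)\le \tilde{C}_a$ from lemma \ref{lemma: axion-scalar alignment - x upper bound} using $l_{ra}\ge 0$, then bound the driving term via $0\le c_{ia}\le C_a$ to get $\sum_a c_{ia}\x^a \lesssim \sum_a C_a\tilde{C}^a$ at late times, and close with a comparison argument. The only cosmetic difference is the final step: your Gr\"onwall estimate for $\rmxi$ with integrating factor $\e^{2\rmvarphi}$ (invoking $\lim_{t\to\infty}\rmvarphi(t)=\infty$ from the lemma's proof) replaces the paper's explicit logistic integration of the differential inequality for $(\y)^2$, and the two are equivalent because $\rmxi = 1-(\y)^2$ by the constraint in eq.~(\ref{axion-scalar sphere-condition}), so your $\dot{\rmxi}$ identity is just the $\y$-equation in disguise.
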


\begin{proof}
    As $l_{ra} \geq 0$, eq.~(\ref{eq.: axion-scalar alignment - x upper bound}) implies the bound $\smash{\limsup_{t \to \infty} \, \x_a(t) \leq \tilde{C}_a}$. Then, because $C_a \geq c_{ia} \geq 0$, eq.~(\ref{axion-scalar y-equation}) means there exists a time $t_\infty$ such that the differential inequality holds
    \begin{align*}
        \dot{\y}^i \geq \biggl[ 1 - (\y)^2 - \sum_{a=1}^n C_a \tilde{C}^a \biggr] \y^i.
    \end{align*}
    After defining the function $\smash{\zeta = \sum_{i=1}^n (\y^i)^2}$ for brevity, one can immediately obtain the further differential inequality $\smash{\dot{\zeta} \geq 2 \zeta \, [1 - \zeta - \sum_{a=1}^n C_a \tilde{C}^a]}$, which is valid at any time $t_\star > t_\infty$. Then, the latter can be integrated to give the late-time inequality $\smash{\zeta(t) \geq \zeta(t_\star) (1 - \sum_{a=1}^n C_a \tilde{C}^a) / [\zeta(t_\star) + \e^{-2 (1 - \sum_{b} C_b \tilde{C}^b) \, (t - t_\star)} \, (1 - \sum_{b=1}^n C_b \tilde{C}^b - \zeta(t_\star))]}$, i.e.
    \begin{align*}
        [\y(t)]^2 \geq \dfrac{\displaystyle [\y(t_\star)]^2 \, \biggl[ 1 - \sum_{a=1}^n C_a \tilde{C}^a \biggr]}{\displaystyle [\y(t_\star)]^2 + \e^{-2 (1 - \sum_{b} C_b \tilde{C}^b) \, (t - t_\star)} \biggl[ 1 - \sum_{b=1}^n C_b \tilde{C}^b - [\y(t_\star)]^2 \biggr]}.
    \end{align*}
    In view of eqs.~(\ref{axion-scalar f-equation}, \ref{axion-scalar sphere-condition}), eq.~(\ref{eq.: axion-scalar alignment - xi general bound}) follows straightforwardly.
\end{proof}

\subsection{Partial coupling misalignment}
Let $\smash{L_a = \max_r \, l_{ra}}$.

\vspace{4pt}
\begin{lemma} \label{lemma: axion-scalar x - lower bound for L<0 and c+L<1}
    Let $\smash{L^a \leq 0}$ and $\smash{c^a + L^a \leq 1}$. If $\smash{\lim_{t \to \infty} \emph{{\rmvarphi}}(t) = \infty}$, then one has
    \begin{equation} \label{eq.: axion-scalar x - lower bound for L<0 and c+L<1}
        \liminf_{t \to \infty} \x^a(t) \geq c^a + L^a.
    \end{equation}
\end{lemma}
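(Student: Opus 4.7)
The plan is to derive a closed differential inequality for $\x^a$ alone and then integrate it against the divergent integrating factor $e^{\rmvarphi(t)}$, echoing step (i) in the proof of theorem~\ref{theorem: axion-scalar antialignment - X convergence}. In the rescaled $s$-time, eq.~(\ref{axion-scalar x-equation}) reads $\dot{\x}^a = -\x^a (\y)^2 + \sum_i c_{ia} (\y^i)^2 - \sum_r l_{ra} (\w^r)^2$, and the definitions $c_a = \min_i c_{ia}$, $L_a = \max_r l_{ra}$ together with $(\y^i)^2,\, (\w^r)^2 \geq 0$ yield the preliminary bound
\begin{equation*}
\dot{\x}^a \;\geq\; -\x^a (\y)^2 + c^a (\y)^2 - L^a (\w)^2.
\end{equation*}

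The hypothesis $L^a \leq 0$ enters next. Using the sphere constraint in eq.~(\ref{axion-scalar sphere-condition}) in the form $(\w)^2 = 1 - (\x)^2 - (\y)^2$, together with $(\x)^2 \leq 1$, one has $L^a (\x)^2 \geq L^a$ because $L^a \leq 0$, and therefore
\begin{equation*}
-L^a (\w)^2 \;=\; -L^a + L^a (\x)^2 + L^a (\y)^2 \;\geq\; L^a (\y)^2.
\end{equation*}
Substituting this back produces the clean differential inequality
\begin{equation*}
\dot{\x}^a \;\geq\; -\bigl( \x^a - c^a - L^a \bigr)(\y)^2.
\end{equation*}

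The last step is a routine Gr\"onwall-style integration. Multiplying by $e^{\rmvarphi(t)}$ (and using $\dot{\rmvarphi} = (\y)^2$) and integrating from $t_0$ to $t$ gives
\begin{equation*}
\x^a(t) \;\geq\; (c^a + L^a) + \bigl[ \x^a(t_0) - c^a - L^a \bigr] \, e^{-\rmvarphi(t)}.
\end{equation*}
The hypothesis $\lim_{t \to \infty} \rmvarphi(t) = \infty$ then forces the exponential to vanish, delivering $\liminf_{t \to \infty} \x^a(t) \geq c^a + L^a$. The role of the auxiliary condition $c^a + L^a \leq 1$ is simply to keep this lower bound consistent with the a priori sphere bound $|\x^a| \leq 1$ so that the conclusion is not vacuous. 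I do not expect any serious obstacle: the only slightly non-obvious algebraic step is rewriting the $(\w)^2$-term as a $(\y)^2$-term via the sphere identity; everything else is a direct parallel to arguments already developed earlier in the appendix.
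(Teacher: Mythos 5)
Your proposal is correct and follows essentially the same route as the paper: bound the $(\y^i)^2$- and $(\w^r)^2$-terms by $c^a$ and $L^a$, use the sphere constraint together with $L^a\leq 0$ to absorb the $(\w)^2$-term into a $(\y)^2$-term, and integrate with the integrating factor $\e^{\rmvarphi(t)}$ to conclude via $\lim_{t\to\infty}\rmvarphi(t)=\infty$. Your reading of the role of $c^a+L^a\leq 1$ as a consistency condition with $|\x^a|\leq 1$ also matches how the paper uses it (see the subsequent corollary).
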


\begin{proof}
    In view of eq.~(\ref{axion-scalar x-equation}), one can write the inequality
    \begin{align*}
        \dot{\x}^a = -\x^a (\y)^2 + \sum_{i=1}^m {c_{i}}^{a} (\y^i)^2 - \sum_{r=1}^p l_{r a} (\w^r)^2 \geq [-\x^a + c^a] \, (\y)^2 - L^a \, (\w)^2.
    \end{align*}
    If $L^a \leq 0$, then eq.~(\ref{axion-scalar sphere-condition}) allows one to write the further chain of inequalities
    \begin{align*}
        \dot{\x}^a \geq [-\x^a + c^a + L^a] \, (\y)^2 - L^a \, [1 - (\x)^2] \geq [-\x^a + c^a + L^a] \, (\y)^2.
    \end{align*}
    By means of a simple integration, one then finds
    \begin{align*}
        \x^a(t) \geq (c^a + L^a) + \e^{-\rmvarphi(t)} \bigl[ \x^a(t_0) - (c^a + L^a) \bigr].
    \end{align*}
    If $c^a + L^a \leq 1$, then eq.~(\ref{eq.: axion-scalar x - lower bound for L<0 and c+L<1}) follows straightforwardly.
\end{proof}

\vspace{4pt}
\begin{corollary} \label{corollary: finite discriminant for L<0 and c+L>1}
    If $L^a \leq 0$ and $c^a + L^a > 1$ for at least one $a$-index, then $\smash{\lim_{t \to \infty} \emph{{\rmvarphi}}(t) < \infty}$.
\end{corollary}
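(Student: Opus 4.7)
The plan is to argue by contradiction, directly repurposing the integral inequality obtained in the proof of lemma \ref{lemma: axion-scalar x - lower bound for L<0 and c+L<1}. The key observation is that in that proof, the differential inequality
\begin{equation*}
    \dot{\x}^a \geq [-\x^a + c^a + L^a] \, (\y)^2
\end{equation*}
and the ensuing integrated form
\begin{equation*}
    \x^a(t) \geq (c^a + L^a) + \e^{-\rmvarphi(t)} \bigl[ \x^a(t_0) - (c^a + L^a) \bigr]
\end{equation*}
require only the hypothesis $L^a \leq 0$; the additional assumption $c^a + L^a \leq 1$ is used only at the very last step to conclude the physically meaningful bound $\liminf_{t \to \infty} \x^a(t) \geq c^a + L^a$ inside the allowed range.

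Concretely, I would fix the distinguished index $a$ for which $L^a \leq 0$ and $c^a + L^a > 1$, and suppose for contradiction that $\lim_{t\to\infty} \rmvarphi(t) = \infty$. Then the displayed integrated inequality yields
\begin{equation*}
    \liminf_{t \to \infty} \, \x^a(t) \geq c^a + L^a > 1.
\end{equation*}
On the other hand, the constraint eq.~(\ref{axion-scalar sphere-condition}) forces $(\x^a(t))^2 \leq (\x(t))^2 \leq 1$, so $\x^a(t) \leq 1$ at all times $t \geq t_0$. This contradicts the lower bound above, and hence $\rmvarphi(t)$ cannot diverge, giving $\lim_{t \to \infty} \rmvarphi(t) < \infty$.

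The argument is essentially a one-line corollary of the earlier lemma; there is no real obstacle, only the bookkeeping point that one must recognize that the intermediate estimate on $\x^a$ in that lemma remains valid without the ceiling hypothesis $c^a + L^a \leq 1$, since that hypothesis only serves to interpret the limit as a bound compatible with $\x^a \in [-1,1]$. The entire proof therefore reduces to noting that incompatibility of this lower bound with the sphere constraint when $c^a + L^a > 1$ forces the divergence of $\rmvarphi$ to fail.
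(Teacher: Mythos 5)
Your proposal is correct and follows essentially the same route as the paper: the paper's own proof also argues that if $\rmvarphi$ diverged, the lower bound of lemma \ref{lemma: axion-scalar x - lower bound for L<0 and c+L<1} would force $\liminf_{t\to\infty}\x^a(t) \geq c^a + L^a > 1$, contradicting the constraint $(\x^a)^2 \leq 1$ from eq.~(\ref{axion-scalar sphere-condition}). Your added remark that the intermediate integrated inequality only needs $L^a \leq 0$ (the ceiling $c^a + L^a \leq 1$ being irrelevant to that step) is a correct and useful clarification of why the lemma can be invoked here.
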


\begin{proof}
    If $L^a \leq 0$ and $c^a + L^a > 1$, then eq.~(\ref{eq.: axion-scalar x - lower bound for L<0 and c+L<1}) cannot hold by eq.~(\ref{axion-scalar sphere-condition}), which forces the inequality $(\x)^2 \leq 1$ and therefore the inequality $(\x^a)^2 \leq 1$. Therefore, one cannot have $\smash{\lim_{t \to \infty} \rmvarphi(t) = \infty}$.
\end{proof}

\vspace{4pt}
\begin{corollary} \label{corollary: xi-limit for L<0 and c+L>1}
    If $L^a \leq 0$ and $c^a + L^a \geq 1$ for at least one $a$-index, then
    \begin{equation} \label{eq.: xi-limit for L<0 and c+L>1}
        \lim_{t \to \infty} \, \xi(t) = 1.
    \end{equation}
\end{corollary}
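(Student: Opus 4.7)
My plan is to reduce the claim $\lim_{t \to \infty} \xi(t) = 1$ to showing that the potential contribution $(\y)^2$ asymptotically vanishes, which, in view of the sphere condition eq.~(\ref{axion-scalar sphere-condition}), is equivalent to $\xi \to 1$. I would then split the hypothesis $c^a + L^a \geq 1$ into the strict case and the equality case, and handle them separately, since corollary \ref{corollary: finite discriminant for L<0 and c+L>1} only covers the strict inequality.

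For the strict case $c^a + L^a > 1$, corollary \ref{corollary: finite discriminant for L<0 and c+L>1} directly gives $\lim_{t \to \infty} \rmvarphi(t) < \infty$, that is, $\int_{t_0}^\infty [\y(s)]^2 \, \de s < \infty$. To upgrade this integrability to a pointwise statement, I would observe that the rescaled evolution equation~(\ref{axion-scalar y-equation}) yields $(\de/\de t)(\y^i)^2 = 2 (\y^i)^2 [1 - (\y)^2 - \sum_a c_{ia} \x^a]$, and since $\x^a, \w^r, \y^i \in [-1,1]$ by the sphere condition, the right-hand side is uniformly bounded in $t$. Hence $(\y)^2$ is uniformly continuous, and a Barbalat-type argument yields $\lim_{t \to \infty} [\y(t)]^2 = 0$; by eq.~(\ref{axion-scalar sphere-condition}), $\xi \to 1$.

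For the boundary case $c^a + L^a = 1$, the previous corollary no longer applies, so I would revisit the chain of inequalities inside the proof of lemma \ref{lemma: axion-scalar x - lower bound for L<0 and c+L<1}: the hypothesis $c^a + L^a \leq 1$ is still satisfied, and integration still yields $\x^a(t) \geq 1 + \e^{-\rmvarphi(t)} [\x^a(t_0) - 1]$ for the distinguished index $a$. I would then branch on whether $\rmvarphi$ converges. If $\lim_{t \to \infty} \rmvarphi(t) < \infty$, the Barbalat argument of the previous paragraph applies verbatim. Otherwise $\rmvarphi(t) \to \infty$, forcing $\liminf_{t \to \infty} \x^a(t) \geq 1$; combined with $(\x^a)^2 \leq \xi \leq 1$ from eq.~(\ref{axion-scalar sphere-condition}), this again yields $\xi \to 1$.

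The main subtlety is the passage from the integrability $\int [\y]^2 \, \de t < \infty$ to the pointwise limit $(\y)^2 \to 0$: without the uniform continuity of $(\y)^2$ (which rests on the bounded-derivative observation above), integrability alone would not suffice. The rest of the argument is a clean splitting of cases that is essentially already assembled from lemma \ref{lemma: axion-scalar x - lower bound for L<0 and c+L<1} and corollary \ref{corollary: finite discriminant for L<0 and c+L>1}, together with the geometric fact that the sphere condition squeezes $(\x^a)^2$ between the improved lower bound and $\xi \leq 1$.
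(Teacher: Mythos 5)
Your proposal is correct and follows essentially the same route as the paper: the strict case via corollary \ref{corollary: finite discriminant for L<0 and c+L>1} plus the sphere condition, and the boundary case $c^a+L^a=1$ split according to whether $\rmvarphi$ converges, using the integrated inequality of lemma \ref{lemma: axion-scalar x - lower bound for L<0 and c+L<1} when it diverges. Your Barbalat/uniform-continuity step merely makes explicit the passage from $\int [\y]^2\,\de s<\infty$ to $[\y(t)]^2\to 0$, which the paper asserts directly, so no substantive difference remains.
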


\begin{proof}
    By corollary \ref{corollary: finite discriminant for L<0 and c+L>1}, if $c^a + L^a > 1$, then $\smash{\lim_{t \to \infty} \rmvarphi(t) < \infty}$. By definition, then one must have $\smash{\lim_{t \to \infty} \, [\y(t)]^2 = 0}$. Hence, eq.~(\ref{eq.: xi-limit for L<0 and c+L>1}) follows immediately.
    If $c^a + L^a = 1$, one of two scenarios is in place. If $\smash{\lim_{t \to \infty} \rmvarphi(t) < \infty}$, the proof follows as above. If $\smash{\lim_{t \to \infty} \rmvarphi(t) = \infty}$, one has $\smash{\lim_{t \to \infty} \, \x^a(t) = 1}$ by eq.~(\ref{eq.: axion-scalar x - lower bound for L<0 and c+L<1}), which implies eq.~(\ref{eq.: xi-limit for L<0 and c+L>1}).
\end{proof}

\subsection{Critical points} \label{subapp: critical points}
Although the analytic results in this work are independent of the dynamical system reaching one of its critical points, for ease of presentation it is convenient to list the critical points of eqs.~(\ref{axion-scalar x-equation}, \ref{axion-scalar w-equation}, \ref{axion-scalar y-equation}) and (\ref{axion-scalar f-equation}, \ref{axion-scalar sphere-condition}) here. It is particularly instructive to compare the values of the parameters $\smash{\xi = 1/k}$ for the various solutions.

The autonomous system in eqs.~(\ref{axion-scalar x-equation}, \ref{axion-scalar w-equation}, \ref{axion-scalar y-equation}) and (\ref{axion-scalar f-equation}, \ref{axion-scalar sphere-condition}) has several critical points. Of course, three general classes of solutions are the ones where $\smash{(\bx^a,\by^i)}$, $\smash{(\hx^a, (\hy^\eta, \check{\y}^\zeta=0))}$, and $\smash{(\tilde{\x}^a, \tilde{\y}^i=0)}$, respectively, are the proper critical points, non-proper and degenerate non-proper critical points for the canonical-scalar-only theory, respectively \cite{Collinucci:2004iw, Shiu:2023fhb}, with $\smash{\bw^r=0}$, $\smash{\hw^r=0}$, and $\smash{\tilde{\w}^r=0}$, respectively. New solutions generally exist, too. These are discussed below.

\vspace{2pt}
\begin{lemma} \label{lemma: axion-scalar critical points}
    For each index $\smash{i_*=1,\dots,m}$, the constrained autonomous system in eqs.~(\ref{axion-scalar x-equation}, \ref{axion-scalar w-equation}, \ref{axion-scalar y-equation}) and (\ref{axion-scalar f-equation}, \ref{axion-scalar sphere-condition}) admits the critical points
    \begin{subequations}
    \begin{align}
        \x^a_* & = (\y^{i_*}_*)^2 \sum_{r=1}^p (l^{-1})^{ar}, \label{eq.: axion-scalar x critical points} \\
        (\w^r_*)^2 & = -(\y^{i_*}_*)^2 \sum_{a=1}^n (l^{-1})^{ar} (x_{* a} - c_{i_*a}), \label{eq.: axion-scalar w critical points} \\
        (\y^{i_*}_*)^2 & = \dfrac{1}{\displaystyle 1 + \sum_{a=1}^n \sum_{r=1}^p c_{i_*a} (l^{-1})^{ar}}. \label{eq.: axion-scalar y critical points}
    \end{align}
    \end{subequations}
    In particular, the $k$-parameter reads
    \begin{equation} \label{eq.: axion-scalar xi critical points}
        \dfrac{1}{k_*} = \dfrac{\displaystyle \sum_{b=1}^n \sum_{s=1}^p c_{i_*b} (l^{-1})^{bs}}{\displaystyle 1 + \sum_{a=1}^n \sum_{r=1}^p c_{i_*a} (l^{-1})^{ar}}.
    \end{equation}
\end{lemma}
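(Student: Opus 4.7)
The plan is to find the fixed points of the autonomous system by simultaneously setting $\dot{x}^a = \dot{w}^r = \dot{y}^i = 0$. At a generic critical point, $f>0$ forces the brackets in eqs.~(\ref{axion-scalar x-equation}, \ref{axion-scalar w-equation}, \ref{axion-scalar y-equation}) to vanish. I would make the ansatz that exactly one potential mode is active, i.e. $y^{i} = 0$ for $i \neq i_*$ while $y^{i_*} \neq 0$, and that all axion modes are active, i.e. $w^r \neq 0$ for every $r$. Other critical-point families (such as the canonical-scalar-only ones from refs.~\cite{Collinucci:2004iw, Shiu:2023fhb} already discussed) correspond to different active-mode patterns and would be obtained by formally truncating the corresponding variables and reapplying the same procedure.

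Under this ansatz, the vanishing of the bracket in eq.~(\ref{axion-scalar w-equation}) gives, for each $r$, the linear relation $(y^{i_*}_*)^2 = \sum_a l_{ra} x_*^a$. Contracting with the left inverse $(l^{-1})^{ar}$ and summing over $r$ yields eq.~(\ref{eq.: axion-scalar x critical points}) immediately. Substituting this expression for $x_*^a$ into $(y^{i_*}_*)^2 = 1 - \sum_a c_{i_*a} x_*^a$ (the vanishing of the bracket of eq.~(\ref{axion-scalar y-equation}) for $i=i_*$) produces a single linear equation in $(y^{i_*}_*)^2$ whose solution is precisely eq.~(\ref{eq.: axion-scalar y critical points}). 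The equation $\dot{x}^a = 0$ then reduces, on the ansatz, to $\sum_r l_{ra} (w^r_*)^2 = (y^{i_*}_*)^2 \, [c_{i_*a} - x_{*a}]$, which I would invert using the same left inverse to obtain eq.~(\ref{eq.: axion-scalar w critical points}); the left-inverse identity $\sum_r l_{ra} (l^{-1})^{br} = \delta_a^b$ guarantees that this candidate expression does solve the $\dot{x}^a = 0$ equations identically. Finally, the sphere constraint (\ref{axion-scalar sphere-condition}) gives $\xi_* = 1/k_* = 1-(y^{i_*}_*)^2$, and plugging in eq.~(\ref{eq.: axion-scalar y critical points}) rearranges into eq.~(\ref{eq.: axion-scalar xi critical points}).

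The main obstacle is the mutual consistency of the derived expressions when $p > n$. The condition $(y^{i_*}_*)^2 = \sum_a l_{ra} x_*^a$, required at every $r$ with $w^r_* \neq 0$, is a $p$-dimensional linear system in the $n$ unknowns $x_*^a$; the formula for $x_*^a$ solves it consistently precisely when the column-space of $l_{ra}$ contains the all-ones vector, i.e.\ when the row sums of $l_{ra} (l^{-1})^{br}$ are all unity. For $p=n$ this is automatic because $(l^{-1})^{ar}$ is then a two-sided inverse, while for $p>n$ it is a genuine algebraic condition on the axion-coupling matrix. I would handle this by interpreting the lemma as a catalogue of candidate critical points, with physical realizability further requiring $(w^r_*)^2 \geq 0$ in eq.~(\ref{eq.: axion-scalar w critical points}); whenever one of these conditions fails, the corresponding critical point degenerates into one of the lower-dimensional families (with fewer active $w^r$'s or $y^i$'s) enumerated in the preamble of the subsection.
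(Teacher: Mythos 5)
Your proposal is correct and takes essentially the same route as the paper's proof: demand that the brackets in eqs.~(\ref{axion-scalar x-equation}, \ref{axion-scalar w-equation}, \ref{axion-scalar y-equation}) vanish, keep a single active potential direction $\y^{i_*}$ with all other $\y^{i'}=0$, invert with $(l^{-1})^{ar}$ to get eqs.~(\ref{eq.: axion-scalar x critical points}--\ref{eq.: axion-scalar y critical points}), and use the constraint (\ref{axion-scalar sphere-condition}) for eq.~(\ref{eq.: axion-scalar xi critical points}). The only differences are presentational: the paper motivates the single-$i_*$ ansatz by noting that otherwise $\sum_a\sum_r c_{ia}(l^{-1})^{ar}$ would have to be $i$-independent, whereas you posit it directly, and your explicit caveats about the overdetermined $\dot{\w}^r=0$ system when $p>n$ and about requiring $(\w^r_*)^2\geq 0$ make precise consistency conditions that the paper leaves implicit.
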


\begin{proof}
    One finds immediately that it should be $\smash{(\y_*)^2 = 1 / \bigl[ 1 + \sum_{a} \sum_r c_{ia} (l^{-1})^{ar} \bigr]}$, assuming the left-inverse matrix $\smash{(l^{-1})^{ar}}$ to exist. However, this is not a generic solution, since it has an index-dependence: it can possibly be a solution only if all the vector components $\smash{v_i = \sum_{a} \sum_r c_{ia} (l^{-1})^{ar}}$ happen to be identical. A possible solution is however the one where $\smash{(\y^{i_*}_*)^2=(\y_*)^2}$, with $\smash{(\y^{i'}_*)=0}$ for $i'=1,\dots,i_*-1,i_*+1,\dots,m$, for a specific index $i_*$. In this case, the solutions exist and take the form in eqs.~(\ref{eq.: axion-scalar x critical points}, \ref{eq.: axion-scalar w critical points}, \ref{eq.: axion-scalar y critical points}) and (\ref{eq.: axion-scalar xi critical points}).
\end{proof}

\vspace{2pt}
\begin{remark}
    In principle, there may also exist additional critical points, where $\smash{\tilde{\y}_*^i=0}$ for each $i=1,\dots,m$ and $\smash{\tilde{\w}_*^r \neq 0}$ for at least some $r$-indices (the case where $\smash{\tilde{\w}_*^r = 0}$ for all $r$-indices has already been listed above). These only exist if both conditions $\smash{\sum_{r=1}^p l_{ra} (\tilde{w}_*^r)^2 = 0}$ and $\smash{\sum_{a=1}^n l_{ra} \tilde{x}_*^a = 0}$ have solutions. This requires $\smash{\mathrm{rank} \, l_{ra} < p}$.
\end{remark}

\vspace{2pt}
\begin{remark}
    In the minimal non-trivial case, with $n=p=m=1$, one has $\smash{\x_* = \y_*^2 / l = 1/(c+l)}$, $\smash{\y_*^2 = l/(c+l)}$ and $\smash{\w_*^2 = [c(c+l) - 1]/(c+l)^2}$: these are the non-geodesic solutions of ref.~\cite{Cicoli:2020cfj}. In this case, one finds $\smash{1/k_* = c/(c+l)}$; note that the axion-free proper critical-point solution has $\smash{1/k = c^2}$, which means that $\smash{1/k_* \leq 1/k}$ if $l \geq (1-c^2)/c$.
\end{remark}

\section{Axion couplings in string-theoretic models} \label{app.: axion couplings in string-theoretic models}

In this subsection, we overview the scalings of several axion couplings that, together with the potentials summarized in ref.~\cite[app. B]{Shiu:2023fhb}, appear in a plethora of string-theoretic models.

\subsection{Example: 4-dimensional supersymmetric compactifications}

In type-II 4-dimensional compactifications with $N_4=1$ supersymmetry, in certain regions of the field space, one can find Kähler potentials of the form $\kappa_4^2 K = - \sum_{a} n^a \, \mathrm{ln} \, [-\I (\xi^a - \overline{\xi}{}^a)]$ for a set of chiral multiplets $\smash{\xi^a = \theta^a + \I \, \e^{l^a \varphi^a}}$. Here, $l^a$ and $n^a > 0$ are real constants that depend on the microscopic origin of the fields, and the scalars $\varphi^a$ approach the boundary as $l^a \varphi^a \to \infty$. Then, the kinetic energy
\begin{align*}
    T[\theta, \varphi] = \sum_{a} \dfrac{n^a}{4 \kappa_4^2} \, \bigl[ \e^{-2 l^a \varphi^a} (\der \theta^a)^2 + l^2 (\der \varphi^a)^2 \bigr]
\end{align*}
can be recast into the form of eq.~(\ref{axion-scalar kinetic energy}), with $\smash{\lambda_{r a} = (2\sqrt{2}/\sqrt{n^a}) \, \delta_{r a}}$, through the definitions
\begin{align*}
    \phi^a & = \dfrac{l^a}{\kappa_4} \dfrac{\sqrt{n^a}}{\sqrt{2}} \, \varphi^a, \\
    \zeta^a & = \dfrac{1}{\kappa_4} \dfrac{\sqrt{n^a}}{\sqrt{2}} \, \theta^a.
\end{align*}

For instance, consider the type-IIB axio-dilaton $\tau = C_0 + \I \, \e^{-\sdil}$ and Kähler modulus $\rho = \alpha + \I \, \e^{4 \omega}$, where $\sdil$ is the 10-dimensional dilaton, $\omega$ is the non-canonical Einstein-frame radion -- with $\smash{\mathrm{vol}_{\mathrm{CY}_3} = \e^{6 \omega} \, l_s^6}$ --, and $C_0$ and $\alpha$ are the 0- and 4-RR-form axions, respectively. In a 4-dimensional Calabi-Yau orientifold compactification, their purely kinetic action can be read off the Kähler potential \cite{Giddings:2001yu, Grimm:2004uq}
\begin{align*}
    \kappa_4^2 K = - \mathrm{ln} \, [-\I (\tau - \overline{\tau})] - 3 \, \mathrm{ln} \, [-\I (\rho - \overline{\rho})] + \mathrm{ln} \, \dfrac{2}{\pi}.
\end{align*}
In this way, the kinetic action reads
\begin{align*}
    S[\tilde{C}_0, \tsdil; \tilde{\alpha}, \tomega] = \int_{\mathrm{X}_{1,3}} \de^{1,3} x \sqrt{- g_{1,3}} \; \Biggl[ \dfrac{1}{2} \, \e^{2 \sqrt{2} \, \kappa_4 \tsdil} (\der \tilde{C}_0)^2 + \dfrac{1}{2} \, (\der \tsdil)^2 + \dfrac{1}{2} \, \e^{- \frac{2}{3} \sqrt{6} \, \kappa_4 \tomega} (\der \tilde{\alpha})^2 + \dfrac{1}{2} \, (\der \tomega)^2 \Biggr],
\end{align*}
where $\smash{\kappa_4 \tsdil = \sdil/\sqrt{2}}$, $\smash{\kappa_4 \tomega = 2 \sqrt{6} \, \omega}$, $\smash{\kappa_4 \tilde{C}_0 = C_0/\sqrt{2}}$ and $\smash{\kappa_4 \tilde{\alpha} = \sqrt{3/2} \, \alpha}$. In the rotated basis where one works in terms of the canonical 4-dimensional dilaton $\smash{\tdelta}$ and string-frame radion $\smash{\tsigma}$, one finds
\begin{align*}
    S[\tilde{C}_0, \tdelta; \tilde{\alpha}, \tsigma] = \int_{\mathrm{X}_{1,3}} \de^{1,3} x \sqrt{- g_{1,3}} \; \Biggl[ \dfrac{1}{2} \, \e^{\sqrt{2} \, \kappa_4 \tdelta + \sqrt{6} \, \kappa_4 \tsigma} (\der \tilde{C}_0)^2 + \dfrac{1}{2} \, (\der \tdelta)^2 + \dfrac{1}{2} \, \e^{\sqrt{2} \kappa_4 \tdelta - \frac{\sqrt{6}}{3} \kappa_4 \tsigma} (\der \tilde{\alpha})^2 + \dfrac{1}{2} \, (\der \tsigma)^2 \Biggr].
\end{align*}
As for the scalar potential, according to refs.~\cite{Shiu:2023nph, Shiu:2023fhb}, any possible perturbative 4-dimensional dilaton coupling is such that $\smash{\gamma_{\tdelta} \leq - \sqrt{2}}$, whereas couplings for the string-frame radion are model-dependent.

As another example, through asymptotic Hodge theory, one can compute the Kähler potentials for the complex-structure moduli appearing F-theory compactifications on Calabi-Yau fourfolds $\mathrm{Y}_4$ with 4-form flux. One has several complex multiplets $t^a = \theta^a + \I \, s^a$, with $a = 1,\dots,h=h^{3,1}(\mathrm{Y}_4)$. Here, $s^a$ are F-theory complex-structure moduli (saxions), and $\theta^a$ are the corresponding axions. In a given growth sector, i.e. a given region of the $s^a$-field moduli space, the Kähler potential takes the form \cite{Grimm:2019ixq}
\begin{align*}
    \kappa_4^2 K & = - \mathrm{ln} \, \biggl[ \biggl( \dfrac{s^1}{s^2} \biggr)^{\! d_1} \biggl( \dfrac{s^2}{s^3} \biggr)^{\! d_2} \dots (s^h)^{d_h} \, f(\upsilon, \overline{\upsilon}) \biggr] = -\sum_{a=1}^h l_a \, \mathrm{ln} \, \Bigl[ - \dfrac{\I}{2} (t^a - \overline{t}^a) \Bigr] - \mathrm{ln} \, f(\upsilon, \overline{\upsilon}),
\end{align*}
where the function $\smash{f(\upsilon, \overline{\upsilon})}$ involves the scalars remaining finite, and can thus be neglected asymptotically, and $d_i$ are constants such that $d_{a} \geq d_{a-1} \geq 0$, for $a=1,\dots,h$, with an extra formal term $d_0=0$. Also, we have $l_i = d_{a} - d_{a-1}$. The fact that the dynamics leave the theory in a given growth sector is an assumption. Examples of scalar potentials for these theories can be found throughout refs.~\cite{Grimm:2019ixq, Calderon-Infante:2022nxb}, though the Einstein-frame volume dependence must also be added. In our notation, we define the canonically-normalized scalars as $\smash{\kappa_4 \phi^a = \sqrt{l_i/2} \, \mathrm{ln} \, s^a}$ and the rescaled axions as $\smash{\kappa_4 \zeta^a = \sqrt{l_a/2} \, \theta^a}$. In all these cases, the axion couplings are $\smash{\lambda_{a b} = 2 \sqrt{2 l_a} \, \delta_{a b}}$.

\subsection{RR-, NSNS- and YM-axions}

Here we compute the couplings for the axions that emerge from higher-dimensional RR-, NSNS-, and YM-fields.

\begin{itemize}
    \item Given the $(q-1)$-RR-form $C_{q-1}$ in a 10-dimensional type-II superstring theory, an associated axion $\theta_{\R\R^q}$ in the $d$-dimensional compactified theory comes from the Kaluza-Klein decomposition $C_{q-1} = \theta_{\R\R^q} \, \breve{\alpha}_{q-1} l_s^{q-1}$. Here, $\theta_{\R\R^q}$ is a $d$-dimensional field and $\smash{\breve{\alpha}_{q-1}}$ is a harmonic $(q-1)$-form in the internal space $\smash{\mathrm{K}_{10-d}}$, with $q-1 \leq 10-d$. The presence of the axion $\theta_{\R\R^q}$ means that the total $q$-form field-strength tensor reads $\smash{F_q = \de \theta_{\R\R^q} \wedge \breve{\alpha}_{q-1} l_s^{q-1} + f^{(q)} \breve{\alpha}_q l_s^{q-1}}$, where $\smash{f^{(q)} \in \mathbb{Z}}$ is a flux-quantization integer and $\smash{\breve{\alpha}_q}$ is a harmonic $q$-form in the internal space $\smash{\mathrm{K}_{10-d}}$. The $(q-1)$-RR-form appears in the string-frame action in the term
    \begin{align*}
        S[F_q] & = \dfrac{1}{2 \kappa_{10}^2} \int_{\mathrm{X}_{1,9}} \biggl[ - \dfrac{1}{2} \, F_q \wedge \star_{1,9} F_q \biggr].
    \end{align*}
    Given the metric ansatz in ref.~\cite[eq.~(B.2)]{Shiu:2023fhb}, the dimensional reduction of the flux term leads to the scalar potential in ref.~\cite[eq.~(B.14)]{Shiu:2023fhb}. One is left with the dimensional reduction of the term
    \begin{align*}
        S[\theta_{\R\R^q}] = \dfrac{1}{2 \kappa_{d}^2} \, g_s^2 \ab \breve{a}_{q-1} \ab^2 \int_{\mathrm{X}_{1,d-1}} \de^{1,d-1} x \, \sqrt{- \tilde{g}_{1,d-1}} \; \biggl[ - \dfrac{1}{2} \, \der_\mu \theta_{\R\R^q} \der_\nu \theta_{\R\R^q} \, \tilde{g}^{\mu \nu} \,  \e^{2 \delta + [(10-d) - 2 (q-1)] \sigma} \biggr],
    \end{align*}
    where we introduced the notation $\smash{l_s^{10-d} \ab \breve{a}_{q-1} \ab^2 = \int_{\mathrm{K}_{10-d}} \breve{a}_{q-1} \wedge \breve{*}_{10-d} \, \breve{a}_{q-1}}$, with $\smash{\breve{\alpha}_{m_1 \dots m_{q-1}} = \breve{a}_{m_1 \dots m_{q-1}}/l_s^{q-1}}$. After the rescaling $\smash{\zeta_{\R\R^q} = (g_s \ab \breve{a}_{q-1} \ab / \sqrt{2 \smash{\kappa_d^2}}) \, \theta_{\R\R^q}}$, one finds the action
    \begin{equation} \label{RR-axion action}
        S[\zeta_{\R\R^q}] = \int_{\mathrm{X}_{1,d-1}} \de^{1,d-1} x \, \sqrt{- \tilde{g}_{1,d-1}} \; \biggl[ - \dfrac{1}{2} \, \e^{\sqrt{d-2} \, \kappa_d \tdelta - \frac{[2 (q-1) - (10-d)]}{\sqrt{10-d}} \, \kappa_d \tsigma} \, (\der \zeta_{\R\R^q})^2 \biggr].
    \end{equation}
    If $d=4$ and $q=1, 5$, one recovers the results for the type-IIB axio-dilaton and Kähler-modulus supermultiplets.
    \item In a similar way, one can compute the kinetic term associated to the NSNS-axion via $\smash{B_2 = \theta_{\NS\NS} \, \breve{\alpha}_{2} l_s^{2}}$ in type-II and heterotic superstring theories, in the same notation as above. In view of the string-frame action
    \begin{align*}
        S[H_3] & = \dfrac{1}{2 \kappa_{10}^2} \int_{\mathrm{X}_{1,9}} \biggl[ - \dfrac{1}{2} \, \e^{-2 \Phi} \, H_3 \wedge \star_{1,9} H_3 \biggr],
    \end{align*}
    one can adapt the result above immediately. Besides the flux-induced potential in ref.~\cite[eq.~(B.10)]{Shiu:2023fhb}, after the rescaling $\smash{\zeta_{\NS\NS} = (\ab \breve{a}_{2} \ab / \sqrt{2 \smash{\kappa_d^2}}) \, \theta_{\NS\NS}}$, one gets
    \begin{equation} \label{NSNS-axion action}
        S[\zeta_{\NS\NS}] = \int_{\mathrm{X}_{1,d-1}} \de^{1,d-1} x \, \sqrt{- \tilde{g}_{1,d-1}} \; \biggl[ - \dfrac{1}{2} \, \e^{- \frac{4 \, \kappa_d \tsigma}{\sqrt{10-d}}} \, (\der \zeta_{\NS\NS})^2 \biggr].
    \end{equation}
    \item In a similar way, one can compute the kinetic term associated with the heterotic YM-axion via $\smash{A_{\mathrm{YM}} = \theta_{\mathrm{YM}} \, \breve{\alpha}_{1} l_s^2}$, in the same notation as above. In view of the string-frame action
    \begin{align*}
        S[F_{\mathrm{YM}}] & = \int_{\mathrm{X}_{1,9}} \biggl[ - \dfrac{1}{2 g_{10}^2} \, \e^{-2 \Phi} \, F_{\mathrm{YM}} \wedge \star_{1,9} F_{\mathrm{YM}} \biggr],
    \end{align*}
    where $\smash{g_{10}^2 = 16 \pi^2 l_s^2 \kappa_{10}^2}$, one can adapt the result above immediately. Besides the flux-induced potential in ref.~\cite[eq.~(B.12)]{Shiu:2023fhb}, after the rescaling $\smash{\zeta_{\mathrm{YM}} = \bigl[ (\ab \breve{a}_{1} \ab / (4 \pi \kappa_d) \bigr] \, \theta_{\mathrm{YM}}}$, one gets
    \begin{equation} \label{YM-axion action}
        S[\zeta_{\mathrm{YM}}] = \int_{\mathrm{X}_{1,d-1}} \de^{1,d-1} x \, \sqrt{- \tilde{g}_{1,d-1}} \; \biggl[ - \dfrac{1}{2} \, \e^{- \frac{2 \, \kappa_d \tsigma}{\sqrt{10-d}}} \, (\der \zeta_{\mathrm{YM}})^2 \biggr].
    \end{equation}
\end{itemize}
One can read off the $\lambda_{ra}$-couplings for RR-, NSNS- and YM-axions immediately from eqs.~(\ref{RR-axion action}, \ref{NSNS-axion action}, \ref{YM-axion action}). These can be compared with the scalar potentials generated by RR-, NSNS-, and YM-fluxes in ref.~\cite[eqs.~(B.10, B.12, B.14)]{Shiu:2023fhb}.


\bibliographystyle{apsrev4-1}
\bibliography{report.bib}

\begin{thebibliography}{86}%
\makeatletter
\providecommand \@ifxundefined [1]{%
 \@ifx{#1\undefined}
}%
\providecommand \@ifnum [1]{%
 \ifnum #1\expandafter \@firstoftwo
 \else \expandafter \@secondoftwo
 \fi
}%
\providecommand \@ifx [1]{%
 \ifx #1\expandafter \@firstoftwo
 \else \expandafter \@secondoftwo
 \fi
}%
\providecommand \natexlab [1]{#1}%
\providecommand \enquote  [1]{``#1''}%
\providecommand \bibnamefont  [1]{#1}%
\providecommand \bibfnamefont [1]{#1}%
\providecommand \citenamefont [1]{#1}%
\providecommand \href@noop [0]{\@secondoftwo}%
\providecommand \href [0]{\begingroup \@sanitize@url \@href}%
\providecommand \@href[1]{\@@startlink{#1}\@@href}%
\providecommand \@@href[1]{\endgroup#1\@@endlink}%
\providecommand \@sanitize@url [0]{\catcode `\\12\catcode `\$12\catcode `\&12\catcode `\#12\catcode `\^12\catcode `\_12\catcode `\%12\relax}%
\providecommand \@@startlink[1]{}%
\providecommand \@@endlink[0]{}%
\providecommand \url  [0]{\begingroup\@sanitize@url \@url }%
\providecommand \@url [1]{\endgroup\@href {#1}{\urlprefix }}%
\providecommand \urlprefix  [0]{URL }%
\providecommand \Eprint [0]{\href }%
\providecommand \doibase [0]{http://dx.doi.org/}%
\providecommand \selectlanguage [0]{\@gobble}%
\providecommand \bibinfo  [0]{\@secondoftwo}%
\providecommand \bibfield  [0]{\@secondoftwo}%
\providecommand \translation [1]{[#1]}%
\providecommand \BibitemOpen [0]{}%
\providecommand \bibitemStop [0]{}%
\providecommand \bibitemNoStop [0]{.\EOS\space}%
\providecommand \EOS [0]{\spacefactor3000\relax}%
\providecommand \BibitemShut  [1]{\csname bibitem#1\endcsname}%
\let\auto@bib@innerbib\@empty
\bibitem [{\citenamefont {Ratra}\ and\ \citenamefont {Peebles}(1988)}]{Ratra:1987rm}%
  \BibitemOpen
  \bibfield  {author} {\bibinfo {author} {\bibfnamefont {B.}~\bibnamefont {Ratra}}\ and\ \bibinfo {author} {\bibfnamefont {P.~J.~E.}\ \bibnamefont {Peebles}},\ }\href {\doibase 10.1103/PhysRevD.37.3406} {\bibfield  {journal} {\bibinfo  {journal} {Phys. Rev. D}\ }\textbf {\bibinfo {volume} {37}},\ \bibinfo {pages} {3406} (\bibinfo {year} {1988})}\BibitemShut {NoStop}%
\bibitem [{\citenamefont {Wetterich}(1988)}]{Wetterich:1987fm}%
  \BibitemOpen
  \bibfield  {author} {\bibinfo {author} {\bibfnamefont {C.}~\bibnamefont {Wetterich}},\ }\href {\doibase 10.1016/0550-3213(88)90193-9} {\bibfield  {journal} {\bibinfo  {journal} {Nucl. Phys. B}\ }\textbf {\bibinfo {volume} {302}},\ \bibinfo {pages} {668} (\bibinfo {year} {1988})},\ \Eprint {http://arxiv.org/abs/1711.03844} {arXiv:1711.03844 [hep-th]} \BibitemShut {NoStop}%
\bibitem [{\citenamefont {Caldwell}\ \emph {et~al.}(1998)\citenamefont {Caldwell}, \citenamefont {Dave},\ and\ \citenamefont {Steinhardt}}]{Caldwell:1997ii}%
  \BibitemOpen
  \bibfield  {author} {\bibinfo {author} {\bibfnamefont {R.~R.}\ \bibnamefont {Caldwell}}, \bibinfo {author} {\bibfnamefont {R.}~\bibnamefont {Dave}}, \ and\ \bibinfo {author} {\bibfnamefont {P.~J.}\ \bibnamefont {Steinhardt}},\ }\href {\doibase 10.1103/PhysRevLett.80.1582} {\bibfield  {journal} {\bibinfo  {journal} {Phys. Rev. Lett.}\ }\textbf {\bibinfo {volume} {80}},\ \bibinfo {pages} {1582} (\bibinfo {year} {1998})},\ \Eprint {http://arxiv.org/abs/astro-ph/9708069} {arXiv:astro-ph/9708069} \BibitemShut {NoStop}%
\bibitem [{\citenamefont {Adame}\ \emph {et~al.}(2024)\citenamefont {Adame} \emph {et~al.}}]{DESI:2024mwx}%
  \BibitemOpen
  \bibfield  {author} {\bibinfo {author} {\bibfnamefont {A.~G.}\ \bibnamefont {Adame}} \emph {et~al.} (\bibinfo {collaboration} {DESI}),\ }\href@noop {} {\  (\bibinfo {year} {2024})},\ \Eprint {http://arxiv.org/abs/2404.03002} {arXiv:2404.03002 [astro-ph.CO]} \BibitemShut {NoStop}%
\bibitem [{\citenamefont {Tada}\ and\ \citenamefont {Terada}(2024)}]{Tada:2024znt}%
  \BibitemOpen
  \bibfield  {author} {\bibinfo {author} {\bibfnamefont {Y.}~\bibnamefont {Tada}}\ and\ \bibinfo {author} {\bibfnamefont {T.}~\bibnamefont {Terada}},\ }\href@noop {} {\  (\bibinfo {year} {2024})},\ \Eprint {http://arxiv.org/abs/2404.05722} {arXiv:2404.05722 [astro-ph.CO]} \BibitemShut {NoStop}%
\bibitem [{\citenamefont {Payeur}\ \emph {et~al.}(2024)\citenamefont {Payeur}, \citenamefont {McDonough},\ and\ \citenamefont {Brandenberger}}]{Payeur:2024kyy}%
  \BibitemOpen
  \bibfield  {author} {\bibinfo {author} {\bibfnamefont {G.}~\bibnamefont {Payeur}}, \bibinfo {author} {\bibfnamefont {E.}~\bibnamefont {McDonough}}, \ and\ \bibinfo {author} {\bibfnamefont {R.}~\bibnamefont {Brandenberger}},\ }\href@noop {} {\  (\bibinfo {year} {2024})},\ \Eprint {http://arxiv.org/abs/2405.05304} {arXiv:2405.05304 [hep-th]} \BibitemShut {NoStop}%
\bibitem [{\citenamefont {Berghaus}\ \emph {et~al.}(2024)\citenamefont {Berghaus}, \citenamefont {Kable},\ and\ \citenamefont {Miranda}}]{Berghaus:2024kra}%
  \BibitemOpen
  \bibfield  {author} {\bibinfo {author} {\bibfnamefont {K.~V.}\ \bibnamefont {Berghaus}}, \bibinfo {author} {\bibfnamefont {J.~A.}\ \bibnamefont {Kable}}, \ and\ \bibinfo {author} {\bibfnamefont {V.}~\bibnamefont {Miranda}},\ }\href@noop {} {\  (\bibinfo {year} {2024})},\ \Eprint {http://arxiv.org/abs/2404.14341} {arXiv:2404.14341 [astro-ph.CO]} \BibitemShut {NoStop}%
\bibitem [{\citenamefont {Andriot}\ \emph {et~al.}(2024)\citenamefont {Andriot}, \citenamefont {Parameswaran}, \citenamefont {Tsimpis}, \citenamefont {Wrase},\ and\ \citenamefont {Zavala}}]{Andriot:2024jsh}%
  \BibitemOpen
  \bibfield  {author} {\bibinfo {author} {\bibfnamefont {D.}~\bibnamefont {Andriot}}, \bibinfo {author} {\bibfnamefont {S.}~\bibnamefont {Parameswaran}}, \bibinfo {author} {\bibfnamefont {D.}~\bibnamefont {Tsimpis}}, \bibinfo {author} {\bibfnamefont {T.}~\bibnamefont {Wrase}}, \ and\ \bibinfo {author} {\bibfnamefont {I.}~\bibnamefont {Zavala}},\ }\href@noop {} {\  (\bibinfo {year} {2024})},\ \Eprint {http://arxiv.org/abs/2405.09323} {arXiv:2405.09323 [hep-th]} \BibitemShut {NoStop}%
\bibitem [{\citenamefont {Bhattacharya}\ \emph {et~al.}(2024)\citenamefont {Bhattacharya}, \citenamefont {Borghetto}, \citenamefont {Malhotra}, \citenamefont {Parameswaran}, \citenamefont {Tasinato},\ and\ \citenamefont {Zavala}}]{Bhattacharya:2024hep}%
  \BibitemOpen
  \bibfield  {author} {\bibinfo {author} {\bibfnamefont {S.}~\bibnamefont {Bhattacharya}}, \bibinfo {author} {\bibfnamefont {G.}~\bibnamefont {Borghetto}}, \bibinfo {author} {\bibfnamefont {A.}~\bibnamefont {Malhotra}}, \bibinfo {author} {\bibfnamefont {S.}~\bibnamefont {Parameswaran}}, \bibinfo {author} {\bibfnamefont {G.}~\bibnamefont {Tasinato}}, \ and\ \bibinfo {author} {\bibfnamefont {I.}~\bibnamefont {Zavala}},\ }\href@noop {} {\  (\bibinfo {year} {2024})},\ \Eprint {http://arxiv.org/abs/2405.17396} {arXiv:2405.17396 [astro-ph.CO]} \BibitemShut {NoStop}%
\bibitem [{\citenamefont {Ramadan}\ \emph {et~al.}(2024)\citenamefont {Ramadan}, \citenamefont {Sakstein},\ and\ \citenamefont {Rubin}}]{Ramadan:2024kmn}%
  \BibitemOpen
  \bibfield  {author} {\bibinfo {author} {\bibfnamefont {O.~F.}\ \bibnamefont {Ramadan}}, \bibinfo {author} {\bibfnamefont {J.}~\bibnamefont {Sakstein}}, \ and\ \bibinfo {author} {\bibfnamefont {D.}~\bibnamefont {Rubin}},\ }\href@noop {} {\  (\bibinfo {year} {2024})},\ \Eprint {http://arxiv.org/abs/2405.18747} {arXiv:2405.18747 [astro-ph.CO]} \BibitemShut {NoStop}%
\bibitem [{\citenamefont {Obied}\ \emph {et~al.}(2018)\citenamefont {Obied}, \citenamefont {Ooguri}, \citenamefont {Spodyneiko},\ and\ \citenamefont {Vafa}}]{Obied:2018sgi}%
  \BibitemOpen
  \bibfield  {author} {\bibinfo {author} {\bibfnamefont {G.}~\bibnamefont {Obied}}, \bibinfo {author} {\bibfnamefont {H.}~\bibnamefont {Ooguri}}, \bibinfo {author} {\bibfnamefont {L.}~\bibnamefont {Spodyneiko}}, \ and\ \bibinfo {author} {\bibfnamefont {C.}~\bibnamefont {Vafa}},\ }\href@noop {} {\  (\bibinfo {year} {2018})},\ \Eprint {http://arxiv.org/abs/1806.08362} {arXiv:1806.08362 [hep-th]} \BibitemShut {NoStop}%
\bibitem [{\citenamefont {Ooguri}\ \emph {et~al.}(2019)\citenamefont {Ooguri}, \citenamefont {Palti}, \citenamefont {Shiu},\ and\ \citenamefont {Vafa}}]{Ooguri:2018wrx}%
  \BibitemOpen
  \bibfield  {author} {\bibinfo {author} {\bibfnamefont {H.}~\bibnamefont {Ooguri}}, \bibinfo {author} {\bibfnamefont {E.}~\bibnamefont {Palti}}, \bibinfo {author} {\bibfnamefont {G.}~\bibnamefont {Shiu}}, \ and\ \bibinfo {author} {\bibfnamefont {C.}~\bibnamefont {Vafa}},\ }\href {\doibase 10.1016/j.physletb.2018.11.018} {\bibfield  {journal} {\bibinfo  {journal} {Phys. Lett. B}\ }\textbf {\bibinfo {volume} {788}},\ \bibinfo {pages} {180} (\bibinfo {year} {2019})},\ \Eprint {http://arxiv.org/abs/1810.05506} {arXiv:1810.05506 [hep-th]} \BibitemShut {NoStop}%
\bibitem [{\citenamefont {Agrawal}\ \emph {et~al.}(2018)\citenamefont {Agrawal}, \citenamefont {Obied}, \citenamefont {Steinhardt},\ and\ \citenamefont {Vafa}}]{Agrawal:2018own}%
  \BibitemOpen
  \bibfield  {author} {\bibinfo {author} {\bibfnamefont {P.}~\bibnamefont {Agrawal}}, \bibinfo {author} {\bibfnamefont {G.}~\bibnamefont {Obied}}, \bibinfo {author} {\bibfnamefont {P.~J.}\ \bibnamefont {Steinhardt}}, \ and\ \bibinfo {author} {\bibfnamefont {C.}~\bibnamefont {Vafa}},\ }\href {\doibase 10.1016/j.physletb.2018.07.040} {\bibfield  {journal} {\bibinfo  {journal} {Phys. Lett. B}\ }\textbf {\bibinfo {volume} {784}},\ \bibinfo {pages} {271} (\bibinfo {year} {2018})},\ \Eprint {http://arxiv.org/abs/1806.09718} {arXiv:1806.09718 [hep-th]} \BibitemShut {NoStop}%
\bibitem [{\citenamefont {Agrawal}\ and\ \citenamefont {Obied}(2019)}]{Agrawal:2018rcg}%
  \BibitemOpen
  \bibfield  {author} {\bibinfo {author} {\bibfnamefont {P.}~\bibnamefont {Agrawal}}\ and\ \bibinfo {author} {\bibfnamefont {G.}~\bibnamefont {Obied}},\ }\href {\doibase 10.1007/JHEP06(2019)103} {\bibfield  {journal} {\bibinfo  {journal} {JHEP}\ }\textbf {\bibinfo {volume} {06}},\ \bibinfo {pages} {103} (\bibinfo {year} {2019})},\ \Eprint {http://arxiv.org/abs/1811.00554} {arXiv:1811.00554 [hep-ph]} \BibitemShut {NoStop}%
\bibitem [{\citenamefont {Hebecker}\ \emph {et~al.}(2019)\citenamefont {Hebecker}, \citenamefont {Skrzypek},\ and\ \citenamefont {Wittner}}]{Hebecker:2019csg}%
  \BibitemOpen
  \bibfield  {author} {\bibinfo {author} {\bibfnamefont {A.}~\bibnamefont {Hebecker}}, \bibinfo {author} {\bibfnamefont {T.}~\bibnamefont {Skrzypek}}, \ and\ \bibinfo {author} {\bibfnamefont {M.}~\bibnamefont {Wittner}},\ }\href {\doibase 10.1007/JHEP11(2019)134} {\bibfield  {journal} {\bibinfo  {journal} {JHEP}\ }\textbf {\bibinfo {volume} {11}},\ \bibinfo {pages} {134} (\bibinfo {year} {2019})},\ \Eprint {http://arxiv.org/abs/1909.08625} {arXiv:1909.08625 [hep-th]} \BibitemShut {NoStop}%
\bibitem [{\citenamefont {Cicoli}\ \emph {et~al.}(2020{\natexlab{a}})\citenamefont {Cicoli}, \citenamefont {Dibitetto},\ and\ \citenamefont {Pedro}}]{Cicoli:2020cfj}%
  \BibitemOpen
  \bibfield  {author} {\bibinfo {author} {\bibfnamefont {M.}~\bibnamefont {Cicoli}}, \bibinfo {author} {\bibfnamefont {G.}~\bibnamefont {Dibitetto}}, \ and\ \bibinfo {author} {\bibfnamefont {F.~G.}\ \bibnamefont {Pedro}},\ }\href {\doibase 10.1103/PhysRevD.101.103524} {\bibfield  {journal} {\bibinfo  {journal} {Phys. Rev. D}\ }\textbf {\bibinfo {volume} {101}},\ \bibinfo {pages} {103524} (\bibinfo {year} {2020}{\natexlab{a}})},\ \Eprint {http://arxiv.org/abs/2002.02695} {arXiv:2002.02695 [gr-qc]} \BibitemShut {NoStop}%
\bibitem [{\citenamefont {Rudelius}(2021)}]{Rudelius:2021azq}%
  \BibitemOpen
  \bibfield  {author} {\bibinfo {author} {\bibfnamefont {T.}~\bibnamefont {Rudelius}},\ }\href {\doibase 10.1103/PhysRevD.104.126023} {\bibfield  {journal} {\bibinfo  {journal} {Phys. Rev. D}\ }\textbf {\bibinfo {volume} {104}},\ \bibinfo {pages} {126023} (\bibinfo {year} {2021})},\ \Eprint {http://arxiv.org/abs/2106.09026} {arXiv:2106.09026 [hep-th]} \BibitemShut {NoStop}%
\bibitem [{\citenamefont {Cicoli}\ \emph {et~al.}(2022)\citenamefont {Cicoli}, \citenamefont {Cunillera}, \citenamefont {Padilla},\ and\ \citenamefont {Pedro}}]{Cicoli:2021fsd}%
  \BibitemOpen
  \bibfield  {author} {\bibinfo {author} {\bibfnamefont {M.}~\bibnamefont {Cicoli}}, \bibinfo {author} {\bibfnamefont {F.}~\bibnamefont {Cunillera}}, \bibinfo {author} {\bibfnamefont {A.}~\bibnamefont {Padilla}}, \ and\ \bibinfo {author} {\bibfnamefont {F.~G.}\ \bibnamefont {Pedro}},\ }\href {\doibase 10.1002/prop.202200009} {\bibfield  {journal} {\bibinfo  {journal} {Fortsch. Phys.}\ }\textbf {\bibinfo {volume} {70}},\ \bibinfo {pages} {2200009} (\bibinfo {year} {2022})},\ \Eprint {http://arxiv.org/abs/2112.10779} {arXiv:2112.10779 [hep-th]} \BibitemShut {NoStop}%
\bibitem [{\citenamefont {Rudelius}(2022)}]{Rudelius:2022gbz}%
  \BibitemOpen
  \bibfield  {author} {\bibinfo {author} {\bibfnamefont {T.}~\bibnamefont {Rudelius}},\ }\href {\doibase 10.1007/JHEP10(2022)018} {\bibfield  {journal} {\bibinfo  {journal} {JHEP}\ }\textbf {\bibinfo {volume} {10}},\ \bibinfo {pages} {018} (\bibinfo {year} {2022})},\ \Eprint {http://arxiv.org/abs/2208.08989} {arXiv:2208.08989 [hep-th]} \BibitemShut {NoStop}%
\bibitem [{\citenamefont {Andriot}\ and\ \citenamefont {Horer}(2023)}]{Andriot:2022xjh}%
  \BibitemOpen
  \bibfield  {author} {\bibinfo {author} {\bibfnamefont {D.}~\bibnamefont {Andriot}}\ and\ \bibinfo {author} {\bibfnamefont {L.}~\bibnamefont {Horer}},\ }\href {\doibase 10.1007/JHEP01(2023)020} {\bibfield  {journal} {\bibinfo  {journal} {JHEP}\ }\textbf {\bibinfo {volume} {01}},\ \bibinfo {pages} {020} (\bibinfo {year} {2023})},\ \Eprint {http://arxiv.org/abs/2208.14462} {arXiv:2208.14462 [hep-th]} \BibitemShut {NoStop}%
\bibitem [{\citenamefont {Marconnet}\ and\ \citenamefont {Tsimpis}(2023)}]{Marconnet:2022fmx}%
  \BibitemOpen
  \bibfield  {author} {\bibinfo {author} {\bibfnamefont {P.}~\bibnamefont {Marconnet}}\ and\ \bibinfo {author} {\bibfnamefont {D.}~\bibnamefont {Tsimpis}},\ }\href {\doibase 10.1007/JHEP01(2023)033} {\bibfield  {journal} {\bibinfo  {journal} {JHEP}\ }\textbf {\bibinfo {volume} {01}},\ \bibinfo {pages} {033} (\bibinfo {year} {2023})},\ \Eprint {http://arxiv.org/abs/2210.10813} {arXiv:2210.10813 [hep-th]} \BibitemShut {NoStop}%
\bibitem [{\citenamefont {Shiu}\ \emph {et~al.}(2023{\natexlab{a}})\citenamefont {Shiu}, \citenamefont {Tonioni},\ and\ \citenamefont {Tran}}]{Shiu:2023nph}%
  \BibitemOpen
  \bibfield  {author} {\bibinfo {author} {\bibfnamefont {G.}~\bibnamefont {Shiu}}, \bibinfo {author} {\bibfnamefont {F.}~\bibnamefont {Tonioni}}, \ and\ \bibinfo {author} {\bibfnamefont {H.~V.}\ \bibnamefont {Tran}},\ }\href {\doibase 10.1103/PhysRevD.108.063527} {\bibfield  {journal} {\bibinfo  {journal} {Phys. Rev. D}\ }\textbf {\bibinfo {volume} {108}},\ \bibinfo {pages} {063527} (\bibinfo {year} {2023}{\natexlab{a}})},\ \Eprint {http://arxiv.org/abs/2303.03418} {arXiv:2303.03418 [hep-th]} \BibitemShut {NoStop}%
\bibitem [{\citenamefont {Shiu}\ \emph {et~al.}(2023{\natexlab{b}})\citenamefont {Shiu}, \citenamefont {Tonioni},\ and\ \citenamefont {Tran}}]{Shiu:2023fhb}%
  \BibitemOpen
  \bibfield  {author} {\bibinfo {author} {\bibfnamefont {G.}~\bibnamefont {Shiu}}, \bibinfo {author} {\bibfnamefont {F.}~\bibnamefont {Tonioni}}, \ and\ \bibinfo {author} {\bibfnamefont {H.~V.}\ \bibnamefont {Tran}},\ }\href {\doibase 10.1103/PhysRevD.108.063528} {\bibfield  {journal} {\bibinfo  {journal} {Phys. Rev. D}\ }\textbf {\bibinfo {volume} {108}},\ \bibinfo {pages} {063528} (\bibinfo {year} {2023}{\natexlab{b}})},\ \Eprint {http://arxiv.org/abs/2306.07327} {arXiv:2306.07327 [hep-th]} \BibitemShut {NoStop}%
\bibitem [{\citenamefont {Van~Riet}(2024)}]{VanRiet:2023cca}%
  \BibitemOpen
  \bibfield  {author} {\bibinfo {author} {\bibfnamefont {T.}~\bibnamefont {Van~Riet}},\ }\href {\doibase 10.1088/1475-7516/2024/01/055} {\bibfield  {journal} {\bibinfo  {journal} {JCAP}\ }\textbf {\bibinfo {volume} {01}},\ \bibinfo {pages} {055} (\bibinfo {year} {2024})},\ \Eprint {http://arxiv.org/abs/2308.15035} {arXiv:2308.15035 [hep-th]} \BibitemShut {NoStop}%
\bibitem [{\citenamefont {Andriot}\ \emph {et~al.}(2023)\citenamefont {Andriot}, \citenamefont {Tsimpis},\ and\ \citenamefont {Wrase}}]{Andriot:2023wvg}%
  \BibitemOpen
  \bibfield  {author} {\bibinfo {author} {\bibfnamefont {D.}~\bibnamefont {Andriot}}, \bibinfo {author} {\bibfnamefont {D.}~\bibnamefont {Tsimpis}}, \ and\ \bibinfo {author} {\bibfnamefont {T.}~\bibnamefont {Wrase}},\ }\href {\doibase 10.1103/PhysRevD.108.123515} {\bibfield  {journal} {\bibinfo  {journal} {Phys. Rev. D}\ }\textbf {\bibinfo {volume} {108}},\ \bibinfo {pages} {123515} (\bibinfo {year} {2023})},\ \Eprint {http://arxiv.org/abs/2309.03938} {arXiv:2309.03938 [hep-th]} \BibitemShut {NoStop}%
\bibitem [{\citenamefont {Copeland}\ \emph {et~al.}(1998)\citenamefont {Copeland}, \citenamefont {Liddle},\ and\ \citenamefont {Wands}}]{Copeland:1997et}%
  \BibitemOpen
  \bibfield  {author} {\bibinfo {author} {\bibfnamefont {E.~J.}\ \bibnamefont {Copeland}}, \bibinfo {author} {\bibfnamefont {A.~R.}\ \bibnamefont {Liddle}}, \ and\ \bibinfo {author} {\bibfnamefont {D.}~\bibnamefont {Wands}},\ }\href {\doibase 10.1103/PhysRevD.57.4686} {\bibfield  {journal} {\bibinfo  {journal} {Phys. Rev. D}\ }\textbf {\bibinfo {volume} {57}},\ \bibinfo {pages} {4686} (\bibinfo {year} {1998})},\ \Eprint {http://arxiv.org/abs/gr-qc/9711068} {arXiv:gr-qc/9711068} \BibitemShut {NoStop}%
\bibitem [{\citenamefont {Collinucci}\ \emph {et~al.}(2005)\citenamefont {Collinucci}, \citenamefont {Nielsen},\ and\ \citenamefont {Van~Riet}}]{Collinucci:2004iw}%
  \BibitemOpen
  \bibfield  {author} {\bibinfo {author} {\bibfnamefont {A.}~\bibnamefont {Collinucci}}, \bibinfo {author} {\bibfnamefont {M.}~\bibnamefont {Nielsen}}, \ and\ \bibinfo {author} {\bibfnamefont {T.}~\bibnamefont {Van~Riet}},\ }\href {\doibase 10.1088/0264-9381/22/7/005} {\bibfield  {journal} {\bibinfo  {journal} {Class. Quant. Grav.}\ }\textbf {\bibinfo {volume} {22}},\ \bibinfo {pages} {1269} (\bibinfo {year} {2005})},\ \Eprint {http://arxiv.org/abs/hep-th/0407047} {arXiv:hep-th/0407047} \BibitemShut {NoStop}%
\bibitem [{\citenamefont {Hartong}\ \emph {et~al.}(2006)\citenamefont {Hartong}, \citenamefont {Ploegh}, \citenamefont {Van~Riet},\ and\ \citenamefont {Westra}}]{Hartong:2006rt}%
  \BibitemOpen
  \bibfield  {author} {\bibinfo {author} {\bibfnamefont {J.}~\bibnamefont {Hartong}}, \bibinfo {author} {\bibfnamefont {A.}~\bibnamefont {Ploegh}}, \bibinfo {author} {\bibfnamefont {T.}~\bibnamefont {Van~Riet}}, \ and\ \bibinfo {author} {\bibfnamefont {D.~B.}\ \bibnamefont {Westra}},\ }\href {\doibase 10.1088/0264-9381/23/14/003} {\bibfield  {journal} {\bibinfo  {journal} {Class. Quant. Grav.}\ }\textbf {\bibinfo {volume} {23}},\ \bibinfo {pages} {4593} (\bibinfo {year} {2006})},\ \Eprint {http://arxiv.org/abs/gr-qc/0602077} {arXiv:gr-qc/0602077} \BibitemShut {NoStop}%
\bibitem [{\citenamefont {Ooguri}\ and\ \citenamefont {Vafa}(2007)}]{Ooguri:2006in}%
  \BibitemOpen
  \bibfield  {author} {\bibinfo {author} {\bibfnamefont {H.}~\bibnamefont {Ooguri}}\ and\ \bibinfo {author} {\bibfnamefont {C.}~\bibnamefont {Vafa}},\ }\href {\doibase 10.1016/j.nuclphysb.2006.10.033} {\bibfield  {journal} {\bibinfo  {journal} {Nucl. Phys. B}\ }\textbf {\bibinfo {volume} {766}},\ \bibinfo {pages} {21} (\bibinfo {year} {2007})},\ \Eprint {http://arxiv.org/abs/hep-th/0605264} {arXiv:hep-th/0605264} \BibitemShut {NoStop}%
\bibitem [{\citenamefont {Trenner}\ and\ \citenamefont {Wilson}(2010)}]{trenner2010asymptotic}%
  \BibitemOpen
  \bibfield  {author} {\bibinfo {author} {\bibfnamefont {T.}~\bibnamefont {Trenner}}\ and\ \bibinfo {author} {\bibfnamefont {P.~M.~H.}\ \bibnamefont {Wilson}},\ }\href@noop {} {\enquote {\bibinfo {title} {Asymptotic curvature of moduli spaces for calabi-yau threefolds},}\ } (\bibinfo {year} {2010}),\ \Eprint {http://arxiv.org/abs/0902.4611} {arXiv:0902.4611 [math.AG]} \BibitemShut {NoStop}%
\bibitem [{\citenamefont {Marchesano}\ \emph {et~al.}(2024{\natexlab{a}})\citenamefont {Marchesano}, \citenamefont {Melotti},\ and\ \citenamefont {Paoloni}}]{Marchesano:2023thx}%
  \BibitemOpen
  \bibfield  {author} {\bibinfo {author} {\bibfnamefont {F.}~\bibnamefont {Marchesano}}, \bibinfo {author} {\bibfnamefont {L.}~\bibnamefont {Melotti}}, \ and\ \bibinfo {author} {\bibfnamefont {L.}~\bibnamefont {Paoloni}},\ }\href {\doibase 10.1007/JHEP02(2024)103} {\bibfield  {journal} {\bibinfo  {journal} {JHEP}\ }\textbf {\bibinfo {volume} {02}},\ \bibinfo {pages} {103} (\bibinfo {year} {2024}{\natexlab{a}})},\ \Eprint {http://arxiv.org/abs/2311.07979} {arXiv:2311.07979 [hep-th]} \BibitemShut {NoStop}%
\bibitem [{\citenamefont {Raman}\ and\ \citenamefont {Vafa}(2024)}]{Raman:2024fcv}%
  \BibitemOpen
  \bibfield  {author} {\bibinfo {author} {\bibfnamefont {S.}~\bibnamefont {Raman}}\ and\ \bibinfo {author} {\bibfnamefont {C.}~\bibnamefont {Vafa}},\ }\href@noop {} {\  (\bibinfo {year} {2024})},\ \Eprint {http://arxiv.org/abs/2405.11611} {arXiv:2405.11611 [hep-th]} \BibitemShut {NoStop}%
\bibitem [{\citenamefont {Sonner}\ and\ \citenamefont {Townsend}(2006)}]{Sonner:2006yn}%
  \BibitemOpen
  \bibfield  {author} {\bibinfo {author} {\bibfnamefont {J.}~\bibnamefont {Sonner}}\ and\ \bibinfo {author} {\bibfnamefont {P.~K.}\ \bibnamefont {Townsend}},\ }\href {\doibase 10.1103/PhysRevD.74.103508} {\bibfield  {journal} {\bibinfo  {journal} {Phys. Rev. D}\ }\textbf {\bibinfo {volume} {74}},\ \bibinfo {pages} {103508} (\bibinfo {year} {2006})},\ \Eprint {http://arxiv.org/abs/hep-th/0608068} {arXiv:hep-th/0608068} \BibitemShut {NoStop}%
\bibitem [{\citenamefont {Cicoli}\ \emph {et~al.}(2020{\natexlab{b}})\citenamefont {Cicoli}, \citenamefont {Dibitetto},\ and\ \citenamefont {Pedro}}]{Cicoli:2020noz}%
  \BibitemOpen
  \bibfield  {author} {\bibinfo {author} {\bibfnamefont {M.}~\bibnamefont {Cicoli}}, \bibinfo {author} {\bibfnamefont {G.}~\bibnamefont {Dibitetto}}, \ and\ \bibinfo {author} {\bibfnamefont {F.~G.}\ \bibnamefont {Pedro}},\ }\href {\doibase 10.1007/JHEP10(2020)035} {\bibfield  {journal} {\bibinfo  {journal} {JHEP}\ }\textbf {\bibinfo {volume} {10}},\ \bibinfo {pages} {035} (\bibinfo {year} {2020}{\natexlab{b}})},\ \Eprint {http://arxiv.org/abs/2007.11011} {arXiv:2007.11011 [hep-th]} \BibitemShut {NoStop}%
\bibitem [{\citenamefont {Russo}\ and\ \citenamefont {Townsend}(2022)}]{Russo:2022pgo}%
  \BibitemOpen
  \bibfield  {author} {\bibinfo {author} {\bibfnamefont {J.~G.}\ \bibnamefont {Russo}}\ and\ \bibinfo {author} {\bibfnamefont {P.~K.}\ \bibnamefont {Townsend}},\ }\href {\doibase 10.1007/JHEP06(2022)001} {\bibfield  {journal} {\bibinfo  {journal} {JHEP}\ }\textbf {\bibinfo {volume} {06}},\ \bibinfo {pages} {001} (\bibinfo {year} {2022})},\ \Eprint {http://arxiv.org/abs/2203.09398} {arXiv:2203.09398 [hep-th]} \BibitemShut {NoStop}%
\bibitem [{\citenamefont {Brinkmann}\ \emph {et~al.}(2022)\citenamefont {Brinkmann}, \citenamefont {Cicoli}, \citenamefont {Dibitetto},\ and\ \citenamefont {Pedro}}]{Brinkmann:2022oxy}%
  \BibitemOpen
  \bibfield  {author} {\bibinfo {author} {\bibfnamefont {M.}~\bibnamefont {Brinkmann}}, \bibinfo {author} {\bibfnamefont {M.}~\bibnamefont {Cicoli}}, \bibinfo {author} {\bibfnamefont {G.}~\bibnamefont {Dibitetto}}, \ and\ \bibinfo {author} {\bibfnamefont {F.~G.}\ \bibnamefont {Pedro}},\ }\href {\doibase 10.1007/JHEP11(2022)044} {\bibfield  {journal} {\bibinfo  {journal} {JHEP}\ }\textbf {\bibinfo {volume} {11}},\ \bibinfo {pages} {044} (\bibinfo {year} {2022})},\ \Eprint {http://arxiv.org/abs/2206.10649} {arXiv:2206.10649 [hep-th]} \BibitemShut {NoStop}%
\bibitem [{\citenamefont {Revello}(2024)}]{Revello:2023hro}%
  \BibitemOpen
  \bibfield  {author} {\bibinfo {author} {\bibfnamefont {F.}~\bibnamefont {Revello}},\ }\href {\doibase 10.1007/JHEP05(2024)037} {\bibfield  {journal} {\bibinfo  {journal} {JHEP}\ }\textbf {\bibinfo {volume} {05}},\ \bibinfo {pages} {037} (\bibinfo {year} {2024})},\ \Eprint {http://arxiv.org/abs/2311.12429} {arXiv:2311.12429 [hep-th]} \BibitemShut {NoStop}%
\bibitem [{\citenamefont {Seo}(2024)}]{Seo:2024qzf}%
  \BibitemOpen
  \bibfield  {author} {\bibinfo {author} {\bibfnamefont {M.-S.}\ \bibnamefont {Seo}},\ }\href@noop {} {\  (\bibinfo {year} {2024})},\ \Eprint {http://arxiv.org/abs/2403.07307} {arXiv:2403.07307 [hep-th]} \BibitemShut {NoStop}%
\bibitem [{\citenamefont {Conlon}\ and\ \citenamefont {Revello}(2022)}]{Conlon:2022pnx}%
  \BibitemOpen
  \bibfield  {author} {\bibinfo {author} {\bibfnamefont {J.~P.}\ \bibnamefont {Conlon}}\ and\ \bibinfo {author} {\bibfnamefont {F.}~\bibnamefont {Revello}},\ }\href {\doibase 10.1007/JHEP11(2022)155} {\bibfield  {journal} {\bibinfo  {journal} {JHEP}\ }\textbf {\bibinfo {volume} {11}},\ \bibinfo {pages} {155} (\bibinfo {year} {2022})},\ \Eprint {http://arxiv.org/abs/2207.00567} {arXiv:2207.00567 [hep-th]} \BibitemShut {NoStop}%
\bibitem [{\citenamefont {Apers}\ \emph {et~al.}(2023)\citenamefont {Apers}, \citenamefont {Conlon}, \citenamefont {Mosny},\ and\ \citenamefont {Revello}}]{Apers:2022cyl}%
  \BibitemOpen
  \bibfield  {author} {\bibinfo {author} {\bibfnamefont {F.}~\bibnamefont {Apers}}, \bibinfo {author} {\bibfnamefont {J.~P.}\ \bibnamefont {Conlon}}, \bibinfo {author} {\bibfnamefont {M.}~\bibnamefont {Mosny}}, \ and\ \bibinfo {author} {\bibfnamefont {F.}~\bibnamefont {Revello}},\ }\href {\doibase 10.1007/JHEP08(2023)156} {\bibfield  {journal} {\bibinfo  {journal} {JHEP}\ }\textbf {\bibinfo {volume} {08}},\ \bibinfo {pages} {156} (\bibinfo {year} {2023})},\ \Eprint {http://arxiv.org/abs/2212.10293} {arXiv:2212.10293 [hep-th]} \BibitemShut {NoStop}%
\bibitem [{\citenamefont {Hebecker}\ \emph {et~al.}(2023)\citenamefont {Hebecker}, \citenamefont {Schreyer},\ and\ \citenamefont {Venken}}]{Hebecker:2023qke}%
  \BibitemOpen
  \bibfield  {author} {\bibinfo {author} {\bibfnamefont {A.}~\bibnamefont {Hebecker}}, \bibinfo {author} {\bibfnamefont {S.}~\bibnamefont {Schreyer}}, \ and\ \bibinfo {author} {\bibfnamefont {G.}~\bibnamefont {Venken}},\ }\href {\doibase 10.1007/JHEP11(2023)173} {\bibfield  {journal} {\bibinfo  {journal} {JHEP}\ }\textbf {\bibinfo {volume} {11}},\ \bibinfo {pages} {173} (\bibinfo {year} {2023})},\ \Eprint {http://arxiv.org/abs/2306.17213} {arXiv:2306.17213 [hep-th]} \BibitemShut {NoStop}%
\bibitem [{\citenamefont {Apers}\ \emph {et~al.}(2024)\citenamefont {Apers}, \citenamefont {Conlon}, \citenamefont {Copeland}, \citenamefont {Mosny},\ and\ \citenamefont {Revello}}]{Apers:2024ffe}%
  \BibitemOpen
  \bibfield  {author} {\bibinfo {author} {\bibfnamefont {F.}~\bibnamefont {Apers}}, \bibinfo {author} {\bibfnamefont {J.~P.}\ \bibnamefont {Conlon}}, \bibinfo {author} {\bibfnamefont {E.~J.}\ \bibnamefont {Copeland}}, \bibinfo {author} {\bibfnamefont {M.}~\bibnamefont {Mosny}}, \ and\ \bibinfo {author} {\bibfnamefont {F.}~\bibnamefont {Revello}},\ }\href@noop {} {\  (\bibinfo {year} {2024})},\ \Eprint {http://arxiv.org/abs/2401.04064} {arXiv:2401.04064 [hep-th]} \BibitemShut {NoStop}%
\bibitem [{\citenamefont {Shiu}\ \emph {et~al.}(2024)\citenamefont {Shiu}, \citenamefont {Tonioni},\ and\ \citenamefont {Tran}}]{Shiu:2023yzt}%
  \BibitemOpen
  \bibfield  {author} {\bibinfo {author} {\bibfnamefont {G.}~\bibnamefont {Shiu}}, \bibinfo {author} {\bibfnamefont {F.}~\bibnamefont {Tonioni}}, \ and\ \bibinfo {author} {\bibfnamefont {H.~V.}\ \bibnamefont {Tran}},\ }\href {\doibase 10.1088/1475-7516/2024/05/124} {\bibfield  {journal} {\bibinfo  {journal} {JCAP}\ }\textbf {\bibinfo {volume} {05}},\ \bibinfo {pages} {124} (\bibinfo {year} {2024})},\ \Eprint {http://arxiv.org/abs/2312.06772} {arXiv:2312.06772 [gr-qc]} \BibitemShut {NoStop}%
\bibitem [{\citenamefont {Li}(2013)}]{Li:2013hga}%
  \BibitemOpen
  \bibfield  {author} {\bibinfo {author} {\bibfnamefont {M.}~\bibnamefont {Li}},\ }\href {\doibase 10.1016/j.physletb.2013.06.035} {\bibfield  {journal} {\bibinfo  {journal} {Phys. Lett. B}\ }\textbf {\bibinfo {volume} {724}},\ \bibinfo {pages} {192} (\bibinfo {year} {2013})},\ \Eprint {http://arxiv.org/abs/1306.0191} {arXiv:1306.0191 [hep-th]} \BibitemShut {NoStop}%
\bibitem [{\citenamefont {Fertig}\ \emph {et~al.}(2014)\citenamefont {Fertig}, \citenamefont {Lehners},\ and\ \citenamefont {Mallwitz}}]{Fertig:2013kwa}%
  \BibitemOpen
  \bibfield  {author} {\bibinfo {author} {\bibfnamefont {A.}~\bibnamefont {Fertig}}, \bibinfo {author} {\bibfnamefont {J.-L.}\ \bibnamefont {Lehners}}, \ and\ \bibinfo {author} {\bibfnamefont {E.}~\bibnamefont {Mallwitz}},\ }\href {\doibase 10.1103/PhysRevD.89.103537} {\bibfield  {journal} {\bibinfo  {journal} {Phys. Rev. D}\ }\textbf {\bibinfo {volume} {89}},\ \bibinfo {pages} {103537} (\bibinfo {year} {2014})},\ \Eprint {http://arxiv.org/abs/1310.8133} {arXiv:1310.8133 [hep-th]} \BibitemShut {NoStop}%
\bibitem [{\citenamefont {Li}(2014)}]{Li:2014qwa}%
  \BibitemOpen
  \bibfield  {author} {\bibinfo {author} {\bibfnamefont {M.}~\bibnamefont {Li}},\ }\href {\doibase 10.1016/j.physletb.2014.08.008} {\bibfield  {journal} {\bibinfo  {journal} {Phys. Lett. B}\ }\textbf {\bibinfo {volume} {736}},\ \bibinfo {pages} {488} (\bibinfo {year} {2014})},\ \bibinfo {note} {[Erratum: Phys.Lett.B 747, 562--563 (2015)]},\ \Eprint {http://arxiv.org/abs/1405.0211} {arXiv:1405.0211 [hep-th]} \BibitemShut {NoStop}%
\bibitem [{\citenamefont {Levy}\ \emph {et~al.}(2015)\citenamefont {Levy}, \citenamefont {Ijjas},\ and\ \citenamefont {Steinhardt}}]{Levy:2015awa}%
  \BibitemOpen
  \bibfield  {author} {\bibinfo {author} {\bibfnamefont {A.~M.}\ \bibnamefont {Levy}}, \bibinfo {author} {\bibfnamefont {A.}~\bibnamefont {Ijjas}}, \ and\ \bibinfo {author} {\bibfnamefont {P.~J.}\ \bibnamefont {Steinhardt}},\ }\href {\doibase 10.1103/PhysRevD.92.063524} {\bibfield  {journal} {\bibinfo  {journal} {Phys. Rev. D}\ }\textbf {\bibinfo {volume} {92}},\ \bibinfo {pages} {063524} (\bibinfo {year} {2015})},\ \Eprint {http://arxiv.org/abs/1506.01011} {arXiv:1506.01011 [astro-ph.CO]} \BibitemShut {NoStop}%
\bibitem [{\citenamefont {Ijjas}\ \emph {et~al.}(2021)\citenamefont {Ijjas}, \citenamefont {Pretorius}, \citenamefont {Steinhardt},\ and\ \citenamefont {Garfinkle}}]{Ijjas:2021zyf}%
  \BibitemOpen
  \bibfield  {author} {\bibinfo {author} {\bibfnamefont {A.}~\bibnamefont {Ijjas}}, \bibinfo {author} {\bibfnamefont {F.}~\bibnamefont {Pretorius}}, \bibinfo {author} {\bibfnamefont {P.~J.}\ \bibnamefont {Steinhardt}}, \ and\ \bibinfo {author} {\bibfnamefont {D.}~\bibnamefont {Garfinkle}},\ }\href {\doibase 10.1088/1475-7516/2021/12/030} {\bibfield  {journal} {\bibinfo  {journal} {JCAP}\ }\textbf {\bibinfo {volume} {12}},\ \bibinfo {pages} {030} (\bibinfo {year} {2021})},\ \Eprint {http://arxiv.org/abs/2109.09768} {arXiv:2109.09768 [gr-qc]} \BibitemShut {NoStop}%
\bibitem [{\citenamefont {Quintin}\ \emph {et~al.}(2024)\citenamefont {Quintin}, \citenamefont {Chen},\ and\ \citenamefont {Ebadi}}]{Quintin:2024boj}%
  \BibitemOpen
  \bibfield  {author} {\bibinfo {author} {\bibfnamefont {J.}~\bibnamefont {Quintin}}, \bibinfo {author} {\bibfnamefont {X.}~\bibnamefont {Chen}}, \ and\ \bibinfo {author} {\bibfnamefont {R.}~\bibnamefont {Ebadi}},\ }\href@noop {} {\  (\bibinfo {year} {2024})},\ \Eprint {http://arxiv.org/abs/2405.11016} {arXiv:2405.11016 [astro-ph.CO]} \BibitemShut {NoStop}%
\bibitem [{\citenamefont {Grimm}\ and\ \citenamefont {Louis}(2004)}]{Grimm:2004uq}%
  \BibitemOpen
  \bibfield  {author} {\bibinfo {author} {\bibfnamefont {T.~W.}\ \bibnamefont {Grimm}}\ and\ \bibinfo {author} {\bibfnamefont {J.}~\bibnamefont {Louis}},\ }\href {\doibase 10.1016/j.nuclphysb.2004.08.005} {\bibfield  {journal} {\bibinfo  {journal} {Nucl. Phys.}\ }\textbf {\bibinfo {volume} {B699}},\ \bibinfo {pages} {387} (\bibinfo {year} {2004})},\ \Eprint {http://arxiv.org/abs/hep-th/0403067} {arXiv:hep-th/0403067 [hep-th]} \BibitemShut {NoStop}%
\bibitem [{\citenamefont {Grimm}\ and\ \citenamefont {Louis}(2005)}]{Grimm:2004ua}%
  \BibitemOpen
  \bibfield  {author} {\bibinfo {author} {\bibfnamefont {T.~W.}\ \bibnamefont {Grimm}}\ and\ \bibinfo {author} {\bibfnamefont {J.}~\bibnamefont {Louis}},\ }\href {\doibase 10.1016/j.nuclphysb.2005.04.007} {\bibfield  {journal} {\bibinfo  {journal} {Nucl. Phys. B}\ }\textbf {\bibinfo {volume} {718}},\ \bibinfo {pages} {153} (\bibinfo {year} {2005})},\ \Eprint {http://arxiv.org/abs/hep-th/0412277} {arXiv:hep-th/0412277} \BibitemShut {NoStop}%
\bibitem [{\citenamefont {Grimm}\ \emph {et~al.}(2020)\citenamefont {Grimm}, \citenamefont {Li},\ and\ \citenamefont {Valenzuela}}]{Grimm:2019ixq}%
  \BibitemOpen
  \bibfield  {author} {\bibinfo {author} {\bibfnamefont {T.~W.}\ \bibnamefont {Grimm}}, \bibinfo {author} {\bibfnamefont {C.}~\bibnamefont {Li}}, \ and\ \bibinfo {author} {\bibfnamefont {I.}~\bibnamefont {Valenzuela}},\ }\href {\doibase 10.1007/JHEP06(2020)009} {\bibfield  {journal} {\bibinfo  {journal} {JHEP}\ }\textbf {\bibinfo {volume} {06}},\ \bibinfo {pages} {009} (\bibinfo {year} {2020})},\ \bibinfo {note} {[Erratum: JHEP 01, 007 (2021)]},\ \Eprint {http://arxiv.org/abs/1910.09549} {arXiv:1910.09549 [hep-th]} \BibitemShut {NoStop}%
\bibitem [{\citenamefont {Marchesano}\ \emph {et~al.}(2024{\natexlab{b}})\citenamefont {Marchesano}, \citenamefont {Shiu},\ and\ \citenamefont {Weigand}}]{Marchesano:2024gul}%
  \BibitemOpen
  \bibfield  {author} {\bibinfo {author} {\bibfnamefont {F.}~\bibnamefont {Marchesano}}, \bibinfo {author} {\bibfnamefont {G.}~\bibnamefont {Shiu}}, \ and\ \bibinfo {author} {\bibfnamefont {T.}~\bibnamefont {Weigand}},\ }\href {\doibase 10.1146/annurev-nucl-102622-01223} {\  (\bibinfo {year} {2024}{\natexlab{b}}),\ 10.1146/annurev-nucl-102622-01223},\ \Eprint {http://arxiv.org/abs/2401.01939} {arXiv:2401.01939 [hep-th]} \BibitemShut {NoStop}%
\bibitem [{\citenamefont {Dine}\ and\ \citenamefont {Seiberg}(1985)}]{Dine:1985he}%
  \BibitemOpen
  \bibfield  {author} {\bibinfo {author} {\bibfnamefont {M.}~\bibnamefont {Dine}}\ and\ \bibinfo {author} {\bibfnamefont {N.}~\bibnamefont {Seiberg}},\ }\href {\doibase 10.1016/0370-2693(85)90927-X} {\bibfield  {journal} {\bibinfo  {journal} {Phys. Lett. B}\ }\textbf {\bibinfo {volume} {162}},\ \bibinfo {pages} {299} (\bibinfo {year} {1985})}\BibitemShut {NoStop}%
\bibitem [{\citenamefont {Hebecker}\ and\ \citenamefont {Wrase}(2019)}]{Hebecker:2018vxz}%
  \BibitemOpen
  \bibfield  {author} {\bibinfo {author} {\bibfnamefont {A.}~\bibnamefont {Hebecker}}\ and\ \bibinfo {author} {\bibfnamefont {T.}~\bibnamefont {Wrase}},\ }\href {\doibase 10.1002/prop.201800097} {\bibfield  {journal} {\bibinfo  {journal} {Fortsch. Phys.}\ }\textbf {\bibinfo {volume} {67}},\ \bibinfo {pages} {1800097} (\bibinfo {year} {2019})},\ \Eprint {http://arxiv.org/abs/1810.08182} {arXiv:1810.08182 [hep-th]} \BibitemShut {NoStop}%
\bibitem [{\citenamefont {Beasley}\ and\ \citenamefont {Witten}(2002)}]{Beasley:2002db}%
  \BibitemOpen
  \bibfield  {author} {\bibinfo {author} {\bibfnamefont {C.}~\bibnamefont {Beasley}}\ and\ \bibinfo {author} {\bibfnamefont {E.}~\bibnamefont {Witten}},\ }\href {\doibase 10.1088/1126-6708/2002/07/046} {\bibfield  {journal} {\bibinfo  {journal} {JHEP}\ }\textbf {\bibinfo {volume} {07}},\ \bibinfo {pages} {046} (\bibinfo {year} {2002})},\ \Eprint {http://arxiv.org/abs/hep-th/0203061} {arXiv:hep-th/0203061} \BibitemShut {NoStop}%
\bibitem [{\citenamefont {Marchesano}\ \emph {et~al.}(2014)\citenamefont {Marchesano}, \citenamefont {Shiu},\ and\ \citenamefont {Uranga}}]{Marchesano:2014mla}%
  \BibitemOpen
  \bibfield  {author} {\bibinfo {author} {\bibfnamefont {F.}~\bibnamefont {Marchesano}}, \bibinfo {author} {\bibfnamefont {G.}~\bibnamefont {Shiu}}, \ and\ \bibinfo {author} {\bibfnamefont {A.~M.}\ \bibnamefont {Uranga}},\ }\href {\doibase 10.1007/JHEP09(2014)184} {\bibfield  {journal} {\bibinfo  {journal} {JHEP}\ }\textbf {\bibinfo {volume} {09}},\ \bibinfo {pages} {184} (\bibinfo {year} {2014})},\ \Eprint {http://arxiv.org/abs/1404.3040} {arXiv:1404.3040 [hep-th]} \BibitemShut {NoStop}%
\bibitem [{\citenamefont {Blumenhagen}\ and\ \citenamefont {Plauschinn}(2014)}]{Blumenhagen:2014gta}%
  \BibitemOpen
  \bibfield  {author} {\bibinfo {author} {\bibfnamefont {R.}~\bibnamefont {Blumenhagen}}\ and\ \bibinfo {author} {\bibfnamefont {E.}~\bibnamefont {Plauschinn}},\ }\href {\doibase 10.1016/j.physletb.2014.08.007} {\bibfield  {journal} {\bibinfo  {journal} {Phys. Lett. B}\ }\textbf {\bibinfo {volume} {736}},\ \bibinfo {pages} {482} (\bibinfo {year} {2014})},\ \Eprint {http://arxiv.org/abs/1404.3542} {arXiv:1404.3542 [hep-th]} \BibitemShut {NoStop}%
\bibitem [{\citenamefont {Hebecker}\ \emph {et~al.}(2014)\citenamefont {Hebecker}, \citenamefont {Kraus},\ and\ \citenamefont {Witkowski}}]{Hebecker:2014eua}%
  \BibitemOpen
  \bibfield  {author} {\bibinfo {author} {\bibfnamefont {A.}~\bibnamefont {Hebecker}}, \bibinfo {author} {\bibfnamefont {S.~C.}\ \bibnamefont {Kraus}}, \ and\ \bibinfo {author} {\bibfnamefont {L.~T.}\ \bibnamefont {Witkowski}},\ }\href {\doibase 10.1016/j.physletb.2014.08.028} {\bibfield  {journal} {\bibinfo  {journal} {Phys. Lett. B}\ }\textbf {\bibinfo {volume} {737}},\ \bibinfo {pages} {16} (\bibinfo {year} {2014})},\ \Eprint {http://arxiv.org/abs/1404.3711} {arXiv:1404.3711 [hep-th]} \BibitemShut {NoStop}%
\bibitem [{\citenamefont {McAllister}\ \emph {et~al.}(2014)\citenamefont {McAllister}, \citenamefont {Silverstein}, \citenamefont {Westphal},\ and\ \citenamefont {Wrase}}]{McAllister:2014mpa}%
  \BibitemOpen
  \bibfield  {author} {\bibinfo {author} {\bibfnamefont {L.}~\bibnamefont {McAllister}}, \bibinfo {author} {\bibfnamefont {E.}~\bibnamefont {Silverstein}}, \bibinfo {author} {\bibfnamefont {A.}~\bibnamefont {Westphal}}, \ and\ \bibinfo {author} {\bibfnamefont {T.}~\bibnamefont {Wrase}},\ }\href {\doibase 10.1007/JHEP09(2014)123} {\bibfield  {journal} {\bibinfo  {journal} {JHEP}\ }\textbf {\bibinfo {volume} {09}},\ \bibinfo {pages} {123} (\bibinfo {year} {2014})},\ \Eprint {http://arxiv.org/abs/1405.3652} {arXiv:1405.3652 [hep-th]} \BibitemShut {NoStop}%
\bibitem [{\citenamefont {Liddle}\ \emph {et~al.}(1998)\citenamefont {Liddle}, \citenamefont {Mazumdar},\ and\ \citenamefont {Schunck}}]{Liddle:1998jc}%
  \BibitemOpen
  \bibfield  {author} {\bibinfo {author} {\bibfnamefont {A.~R.}\ \bibnamefont {Liddle}}, \bibinfo {author} {\bibfnamefont {A.}~\bibnamefont {Mazumdar}}, \ and\ \bibinfo {author} {\bibfnamefont {F.~E.}\ \bibnamefont {Schunck}},\ }\href {\doibase 10.1103/PhysRevD.58.061301} {\bibfield  {journal} {\bibinfo  {journal} {Phys. Rev. D}\ }\textbf {\bibinfo {volume} {58}},\ \bibinfo {pages} {061301} (\bibinfo {year} {1998})},\ \Eprint {http://arxiv.org/abs/astro-ph/9804177} {arXiv:astro-ph/9804177} \BibitemShut {NoStop}%
\bibitem [{\citenamefont {Font}\ \emph {et~al.}(1990)\citenamefont {Font}, \citenamefont {Ibanez}, \citenamefont {Lust},\ and\ \citenamefont {Quevedo}}]{Font:1990nt}%
  \BibitemOpen
  \bibfield  {author} {\bibinfo {author} {\bibfnamefont {A.}~\bibnamefont {Font}}, \bibinfo {author} {\bibfnamefont {L.~E.}\ \bibnamefont {Ibanez}}, \bibinfo {author} {\bibfnamefont {D.}~\bibnamefont {Lust}}, \ and\ \bibinfo {author} {\bibfnamefont {F.}~\bibnamefont {Quevedo}},\ }\href {\doibase 10.1016/0370-2693(90)90665-S} {\bibfield  {journal} {\bibinfo  {journal} {Phys. Lett. B}\ }\textbf {\bibinfo {volume} {245}},\ \bibinfo {pages} {401} (\bibinfo {year} {1990})}\BibitemShut {NoStop}%
\bibitem [{\citenamefont {Cicoli}\ \emph {et~al.}(2012)\citenamefont {Cicoli}, \citenamefont {Kreuzer},\ and\ \citenamefont {Mayrhofer}}]{Cicoli:2011it}%
  \BibitemOpen
  \bibfield  {author} {\bibinfo {author} {\bibfnamefont {M.}~\bibnamefont {Cicoli}}, \bibinfo {author} {\bibfnamefont {M.}~\bibnamefont {Kreuzer}}, \ and\ \bibinfo {author} {\bibfnamefont {C.}~\bibnamefont {Mayrhofer}},\ }\href {\doibase 10.1007/JHEP02(2012)002} {\bibfield  {journal} {\bibinfo  {journal} {JHEP}\ }\textbf {\bibinfo {volume} {02}},\ \bibinfo {pages} {002} (\bibinfo {year} {2012})},\ \Eprint {http://arxiv.org/abs/1107.0383} {arXiv:1107.0383 [hep-th]} \BibitemShut {NoStop}%
\bibitem [{\citenamefont {Cicoli}\ \emph {et~al.}(2016)\citenamefont {Cicoli}, \citenamefont {Muia},\ and\ \citenamefont {Shukla}}]{Cicoli:2016xae}%
  \BibitemOpen
  \bibfield  {author} {\bibinfo {author} {\bibfnamefont {M.}~\bibnamefont {Cicoli}}, \bibinfo {author} {\bibfnamefont {F.}~\bibnamefont {Muia}}, \ and\ \bibinfo {author} {\bibfnamefont {P.}~\bibnamefont {Shukla}},\ }\href {\doibase 10.1007/JHEP11(2016)182} {\bibfield  {journal} {\bibinfo  {journal} {JHEP}\ }\textbf {\bibinfo {volume} {11}},\ \bibinfo {pages} {182} (\bibinfo {year} {2016})},\ \Eprint {http://arxiv.org/abs/1611.04612} {arXiv:1611.04612 [hep-th]} \BibitemShut {NoStop}%
\bibitem [{\citenamefont {Cicoli}\ \emph {et~al.}(2017)\citenamefont {Cicoli}, \citenamefont {Ciupke}, \citenamefont {Diaz}, \citenamefont {Guidetti}, \citenamefont {Muia},\ and\ \citenamefont {Shukla}}]{Cicoli:2017axo}%
  \BibitemOpen
  \bibfield  {author} {\bibinfo {author} {\bibfnamefont {M.}~\bibnamefont {Cicoli}}, \bibinfo {author} {\bibfnamefont {D.}~\bibnamefont {Ciupke}}, \bibinfo {author} {\bibfnamefont {V.~A.}\ \bibnamefont {Diaz}}, \bibinfo {author} {\bibfnamefont {V.}~\bibnamefont {Guidetti}}, \bibinfo {author} {\bibfnamefont {F.}~\bibnamefont {Muia}}, \ and\ \bibinfo {author} {\bibfnamefont {P.}~\bibnamefont {Shukla}},\ }\href {\doibase 10.1007/JHEP11(2017)207} {\bibfield  {journal} {\bibinfo  {journal} {JHEP}\ }\textbf {\bibinfo {volume} {11}},\ \bibinfo {pages} {207} (\bibinfo {year} {2017})},\ \Eprint {http://arxiv.org/abs/1709.01518} {arXiv:1709.01518 [hep-th]} \BibitemShut {NoStop}%
\bibitem [{\citenamefont {Saltman}\ and\ \citenamefont {Silverstein}(2004)}]{Saltman:2004sn}%
  \BibitemOpen
  \bibfield  {author} {\bibinfo {author} {\bibfnamefont {A.}~\bibnamefont {Saltman}}\ and\ \bibinfo {author} {\bibfnamefont {E.}~\bibnamefont {Silverstein}},\ }\href {\doibase 10.1088/1126-6708/2004/11/066} {\bibfield  {journal} {\bibinfo  {journal} {JHEP}\ }\textbf {\bibinfo {volume} {11}},\ \bibinfo {pages} {066} (\bibinfo {year} {2004})},\ \Eprint {http://arxiv.org/abs/hep-th/0402135} {arXiv:hep-th/0402135} \BibitemShut {NoStop}%
\bibitem [{\citenamefont {Copeland}\ \emph {et~al.}(1999)\citenamefont {Copeland}, \citenamefont {Mazumdar},\ and\ \citenamefont {Nunes}}]{Copeland:1999cs}%
  \BibitemOpen
  \bibfield  {author} {\bibinfo {author} {\bibfnamefont {E.~J.}\ \bibnamefont {Copeland}}, \bibinfo {author} {\bibfnamefont {A.}~\bibnamefont {Mazumdar}}, \ and\ \bibinfo {author} {\bibfnamefont {N.~J.}\ \bibnamefont {Nunes}},\ }\href {\doibase 10.1103/PhysRevD.60.083506} {\bibfield  {journal} {\bibinfo  {journal} {Phys. Rev. D}\ }\textbf {\bibinfo {volume} {60}},\ \bibinfo {pages} {083506} (\bibinfo {year} {1999})},\ \Eprint {http://arxiv.org/abs/astro-ph/9904309} {arXiv:astro-ph/9904309} \BibitemShut {NoStop}%
\bibitem [{\citenamefont {Calder\'on-Infante}\ \emph {et~al.}(2023)\citenamefont {Calder\'on-Infante}, \citenamefont {Ruiz},\ and\ \citenamefont {Valenzuela}}]{Calderon-Infante:2022nxb}%
  \BibitemOpen
  \bibfield  {author} {\bibinfo {author} {\bibfnamefont {J.}~\bibnamefont {Calder\'on-Infante}}, \bibinfo {author} {\bibfnamefont {I.}~\bibnamefont {Ruiz}}, \ and\ \bibinfo {author} {\bibfnamefont {I.}~\bibnamefont {Valenzuela}},\ }\href {\doibase 10.1007/JHEP06(2023)129} {\bibfield  {journal} {\bibinfo  {journal} {JHEP}\ }\textbf {\bibinfo {volume} {06}},\ \bibinfo {pages} {129} (\bibinfo {year} {2023})},\ \Eprint {http://arxiv.org/abs/2209.11821} {arXiv:2209.11821 [hep-th]} \BibitemShut {NoStop}%
\bibitem [{\citenamefont {Becker}\ \emph {et~al.}(2007)\citenamefont {Becker}, \citenamefont {Becker}, \citenamefont {Vafa},\ and\ \citenamefont {Walcher}}]{Becker:2006ks}%
  \BibitemOpen
  \bibfield  {author} {\bibinfo {author} {\bibfnamefont {K.}~\bibnamefont {Becker}}, \bibinfo {author} {\bibfnamefont {M.}~\bibnamefont {Becker}}, \bibinfo {author} {\bibfnamefont {C.}~\bibnamefont {Vafa}}, \ and\ \bibinfo {author} {\bibfnamefont {J.}~\bibnamefont {Walcher}},\ }\href {\doibase 10.1016/j.nuclphysb.2007.01.034} {\bibfield  {journal} {\bibinfo  {journal} {Nucl. Phys. B}\ }\textbf {\bibinfo {volume} {770}},\ \bibinfo {pages} {1} (\bibinfo {year} {2007})},\ \Eprint {http://arxiv.org/abs/hep-th/0611001} {arXiv:hep-th/0611001} \BibitemShut {NoStop}%
\bibitem [{\citenamefont {Bardzell}\ \emph {et~al.}(2022)\citenamefont {Bardzell}, \citenamefont {Gonzalo}, \citenamefont {Rajaguru}, \citenamefont {Smith},\ and\ \citenamefont {Wrase}}]{Bardzell:2022jfh}%
  \BibitemOpen
  \bibfield  {author} {\bibinfo {author} {\bibfnamefont {J.}~\bibnamefont {Bardzell}}, \bibinfo {author} {\bibfnamefont {E.}~\bibnamefont {Gonzalo}}, \bibinfo {author} {\bibfnamefont {M.}~\bibnamefont {Rajaguru}}, \bibinfo {author} {\bibfnamefont {D.}~\bibnamefont {Smith}}, \ and\ \bibinfo {author} {\bibfnamefont {T.}~\bibnamefont {Wrase}},\ }\href {\doibase 10.1007/JHEP06(2022)166} {\bibfield  {journal} {\bibinfo  {journal} {JHEP}\ }\textbf {\bibinfo {volume} {06}},\ \bibinfo {pages} {166} (\bibinfo {year} {2022})},\ \Eprint {http://arxiv.org/abs/2203.15818} {arXiv:2203.15818 [hep-th]} \BibitemShut {NoStop}%
\bibitem [{\citenamefont {Cremonini}\ \emph {et~al.}(2023)\citenamefont {Cremonini}, \citenamefont {Gonzalo}, \citenamefont {Rajaguru}, \citenamefont {Tang},\ and\ \citenamefont {Wrase}}]{Cremonini:2023suw}%
  \BibitemOpen
  \bibfield  {author} {\bibinfo {author} {\bibfnamefont {S.}~\bibnamefont {Cremonini}}, \bibinfo {author} {\bibfnamefont {E.}~\bibnamefont {Gonzalo}}, \bibinfo {author} {\bibfnamefont {M.}~\bibnamefont {Rajaguru}}, \bibinfo {author} {\bibfnamefont {Y.}~\bibnamefont {Tang}}, \ and\ \bibinfo {author} {\bibfnamefont {T.}~\bibnamefont {Wrase}},\ }\href {\doibase 10.1007/JHEP09(2023)075} {\bibfield  {journal} {\bibinfo  {journal} {JHEP}\ }\textbf {\bibinfo {volume} {09}},\ \bibinfo {pages} {075} (\bibinfo {year} {2023})},\ \Eprint {http://arxiv.org/abs/2306.15714} {arXiv:2306.15714 [hep-th]} \BibitemShut {NoStop}%
\bibitem [{\citenamefont {Kachru}\ \emph {et~al.}(2003)\citenamefont {Kachru}, \citenamefont {Kallosh}, \citenamefont {Linde},\ and\ \citenamefont {Trivedi}}]{Kachru:2003aw}%
  \BibitemOpen
  \bibfield  {author} {\bibinfo {author} {\bibfnamefont {S.}~\bibnamefont {Kachru}}, \bibinfo {author} {\bibfnamefont {R.}~\bibnamefont {Kallosh}}, \bibinfo {author} {\bibfnamefont {A.~D.}\ \bibnamefont {Linde}}, \ and\ \bibinfo {author} {\bibfnamefont {S.~P.}\ \bibnamefont {Trivedi}},\ }\href {\doibase 10.1103/PhysRevD.68.046005} {\bibfield  {journal} {\bibinfo  {journal} {Phys. Rev.}\ }\textbf {\bibinfo {volume} {D68}},\ \bibinfo {pages} {046005} (\bibinfo {year} {2003})},\ \Eprint {http://arxiv.org/abs/hep-th/0301240} {arXiv:hep-th/0301240 [hep-th]} \BibitemShut {NoStop}%
\bibitem [{\citenamefont {Balasubramanian}\ \emph {et~al.}(2005)\citenamefont {Balasubramanian}, \citenamefont {Berglund}, \citenamefont {Conlon},\ and\ \citenamefont {Quevedo}}]{Balasubramanian:2005zx}%
  \BibitemOpen
  \bibfield  {author} {\bibinfo {author} {\bibfnamefont {V.}~\bibnamefont {Balasubramanian}}, \bibinfo {author} {\bibfnamefont {P.}~\bibnamefont {Berglund}}, \bibinfo {author} {\bibfnamefont {J.~P.}\ \bibnamefont {Conlon}}, \ and\ \bibinfo {author} {\bibfnamefont {F.}~\bibnamefont {Quevedo}},\ }\href {\doibase 10.1088/1126-6708/2005/03/007} {\bibfield  {journal} {\bibinfo  {journal} {JHEP}\ }\textbf {\bibinfo {volume} {03}},\ \bibinfo {pages} {007} (\bibinfo {year} {2005})},\ \Eprint {http://arxiv.org/abs/hep-th/0502058} {arXiv:hep-th/0502058 [hep-th]} \BibitemShut {NoStop}%
\bibitem [{\citenamefont {Shiu}\ \emph {et~al.}()\citenamefont {Shiu}, \citenamefont {Tonioni},\ and\ \citenamefont {Tran}}]{axionscalarfluidcosmologies}%
  \BibitemOpen
  \bibfield  {author} {\bibinfo {author} {\bibfnamefont {G.}~\bibnamefont {Shiu}}, \bibinfo {author} {\bibfnamefont {F.}~\bibnamefont {Tonioni}}, \ and\ \bibinfo {author} {\bibfnamefont {H.}~\bibnamefont {Tran}},\ }\href@noop {} {\emph {\bibinfo {title} {{Universal attractors in FLRW-cosmologies (to appear, 2024)}}}}\BibitemShut {NoStop}%
\bibitem [{\citenamefont {Dienes}\ \emph {et~al.}(2022)\citenamefont {Dienes}, \citenamefont {Heurtier}, \citenamefont {Huang}, \citenamefont {Kim}, \citenamefont {Tait},\ and\ \citenamefont {Thomas}}]{Dienes:2021woi}%
  \BibitemOpen
  \bibfield  {author} {\bibinfo {author} {\bibfnamefont {K.~R.}\ \bibnamefont {Dienes}}, \bibinfo {author} {\bibfnamefont {L.}~\bibnamefont {Heurtier}}, \bibinfo {author} {\bibfnamefont {F.}~\bibnamefont {Huang}}, \bibinfo {author} {\bibfnamefont {D.}~\bibnamefont {Kim}}, \bibinfo {author} {\bibfnamefont {T.~M.~P.}\ \bibnamefont {Tait}}, \ and\ \bibinfo {author} {\bibfnamefont {B.}~\bibnamefont {Thomas}},\ }\href {\doibase 10.1103/PhysRevD.105.023530} {\bibfield  {journal} {\bibinfo  {journal} {Phys. Rev. D}\ }\textbf {\bibinfo {volume} {105}},\ \bibinfo {pages} {023530} (\bibinfo {year} {2022})},\ \Eprint {http://arxiv.org/abs/2111.04753} {arXiv:2111.04753 [astro-ph.CO]} \BibitemShut {NoStop}%
\bibitem [{\citenamefont {Dienes}\ \emph {et~al.}(2024{\natexlab{a}})\citenamefont {Dienes}, \citenamefont {Heurtier}, \citenamefont {Huang}, \citenamefont {Tait},\ and\ \citenamefont {Thomas}}]{Dienes:2023ziv}%
  \BibitemOpen
  \bibfield  {author} {\bibinfo {author} {\bibfnamefont {K.~R.}\ \bibnamefont {Dienes}}, \bibinfo {author} {\bibfnamefont {L.}~\bibnamefont {Heurtier}}, \bibinfo {author} {\bibfnamefont {F.}~\bibnamefont {Huang}}, \bibinfo {author} {\bibfnamefont {T.~M.~P.}\ \bibnamefont {Tait}}, \ and\ \bibinfo {author} {\bibfnamefont {B.}~\bibnamefont {Thomas}},\ }\href {\doibase 10.1103/PhysRevD.109.083508} {\bibfield  {journal} {\bibinfo  {journal} {Phys. Rev. D}\ }\textbf {\bibinfo {volume} {109}},\ \bibinfo {pages} {083508} (\bibinfo {year} {2024}{\natexlab{a}})},\ \Eprint {http://arxiv.org/abs/2309.10345} {arXiv:2309.10345 [astro-ph.CO]} \BibitemShut {NoStop}%
\bibitem [{\citenamefont {Gomes}\ \emph {et~al.}(2023)\citenamefont {Gomes}, \citenamefont {Hardy},\ and\ \citenamefont {Parameswaran}}]{Gomes:2023dat}%
  \BibitemOpen
  \bibfield  {author} {\bibinfo {author} {\bibfnamefont {J.~M.}\ \bibnamefont {Gomes}}, \bibinfo {author} {\bibfnamefont {E.}~\bibnamefont {Hardy}}, \ and\ \bibinfo {author} {\bibfnamefont {S.}~\bibnamefont {Parameswaran}},\ }\href@noop {} {\  (\bibinfo {year} {2023})},\ \Eprint {http://arxiv.org/abs/2311.08888} {arXiv:2311.08888 [hep-ph]} \BibitemShut {NoStop}%
\bibitem [{\citenamefont {Gallego}\ \emph {et~al.}(2024)\citenamefont {Gallego}, \citenamefont {Orjuela-Quintana},\ and\ \citenamefont {Valenzuela-Toledo}}]{Gallego:2024gay}%
  \BibitemOpen
  \bibfield  {author} {\bibinfo {author} {\bibfnamefont {D.}~\bibnamefont {Gallego}}, \bibinfo {author} {\bibfnamefont {J.~B.}\ \bibnamefont {Orjuela-Quintana}}, \ and\ \bibinfo {author} {\bibfnamefont {C.~A.}\ \bibnamefont {Valenzuela-Toledo}},\ }\href {\doibase 10.1007/JHEP04(2024)131} {\bibfield  {journal} {\bibinfo  {journal} {JHEP}\ }\textbf {\bibinfo {volume} {04}},\ \bibinfo {pages} {131} (\bibinfo {year} {2024})},\ \Eprint {http://arxiv.org/abs/2402.09570} {arXiv:2402.09570 [hep-th]} \BibitemShut {NoStop}%
\bibitem [{\citenamefont {Dienes}\ \emph {et~al.}(2024{\natexlab{b}})\citenamefont {Dienes}, \citenamefont {Heurtier}, \citenamefont {Huang}, \citenamefont {Tait},\ and\ \citenamefont {Thomas}}]{Dienes:2024wnu}%
  \BibitemOpen
  \bibfield  {author} {\bibinfo {author} {\bibfnamefont {K.~R.}\ \bibnamefont {Dienes}}, \bibinfo {author} {\bibfnamefont {L.}~\bibnamefont {Heurtier}}, \bibinfo {author} {\bibfnamefont {F.}~\bibnamefont {Huang}}, \bibinfo {author} {\bibfnamefont {T.~M.~P.}\ \bibnamefont {Tait}}, \ and\ \bibinfo {author} {\bibfnamefont {B.}~\bibnamefont {Thomas}},\ }\href@noop {} {\  (\bibinfo {year} {2024}{\natexlab{b}})},\ \Eprint {http://arxiv.org/abs/2406.06830} {arXiv:2406.06830 [astro-ph.CO]} \BibitemShut {NoStop}%
\bibitem [{\citenamefont {Casas}\ \emph {et~al.}(2024)\citenamefont {Casas}, \citenamefont {Montero},\ and\ \citenamefont {Ruiz}}]{Casas:2024xqy}%
  \BibitemOpen
  \bibfield  {author} {\bibinfo {author} {\bibfnamefont {G.~F.}\ \bibnamefont {Casas}}, \bibinfo {author} {\bibfnamefont {M.}~\bibnamefont {Montero}}, \ and\ \bibinfo {author} {\bibfnamefont {I.}~\bibnamefont {Ruiz}},\ }\href@noop {} {\  (\bibinfo {year} {2024})},\ \Eprint {http://arxiv.org/abs/2406.07614} {arXiv:2406.07614 [hep-th]} \BibitemShut {NoStop}%
\bibitem [{\citenamefont {Alestas}\ \emph {et~al.}(2024)\citenamefont {Alestas}, \citenamefont {Delgado}, \citenamefont {Ruiz}, \citenamefont {Akrami}, \citenamefont {Montero},\ and\ \citenamefont {Nesseris}}]{Alestas:2024gxe}%
  \BibitemOpen
  \bibfield  {author} {\bibinfo {author} {\bibfnamefont {G.}~\bibnamefont {Alestas}}, \bibinfo {author} {\bibfnamefont {M.}~\bibnamefont {Delgado}}, \bibinfo {author} {\bibfnamefont {I.}~\bibnamefont {Ruiz}}, \bibinfo {author} {\bibfnamefont {Y.}~\bibnamefont {Akrami}}, \bibinfo {author} {\bibfnamefont {M.}~\bibnamefont {Montero}}, \ and\ \bibinfo {author} {\bibfnamefont {S.}~\bibnamefont {Nesseris}},\ }\href@noop {} {\  (\bibinfo {year} {2024})},\ \Eprint {http://arxiv.org/abs/2406.09212} {arXiv:2406.09212 [hep-th]} \BibitemShut {NoStop}%
\bibitem [{\citenamefont {Debusschere}\ \emph {et~al.}(2024)\citenamefont {Debusschere}, \citenamefont {Tonioni},\ and\ \citenamefont {Van~Riet}}]{Debusschere:2024rmi}%
  \BibitemOpen
  \bibfield  {author} {\bibinfo {author} {\bibfnamefont {C.}~\bibnamefont {Debusschere}}, \bibinfo {author} {\bibfnamefont {F.}~\bibnamefont {Tonioni}}, \ and\ \bibinfo {author} {\bibfnamefont {T.}~\bibnamefont {Van~Riet}},\ }\href@noop {} {\  (\bibinfo {year} {2024})},\ \Eprint {http://arxiv.org/abs/2407.03715} {arXiv:2407.03715 [hep-th]} \BibitemShut {NoStop}%
\bibitem [{\citenamefont {Halliwell}(1987)}]{Halliwell:1986ja}%
  \BibitemOpen
  \bibfield  {author} {\bibinfo {author} {\bibfnamefont {J.~J.}\ \bibnamefont {Halliwell}},\ }\href {\doibase 10.1016/0370-2693(87)91011-2} {\bibfield  {journal} {\bibinfo  {journal} {Phys. Lett. B}\ }\textbf {\bibinfo {volume} {185}},\ \bibinfo {pages} {341} (\bibinfo {year} {1987})}\BibitemShut {NoStop}%
\bibitem [{\citenamefont {Coley}\ and\ \citenamefont {van~den Hoogen}(2000)}]{Coley:1999mj}%
  \BibitemOpen
  \bibfield  {author} {\bibinfo {author} {\bibfnamefont {A.~A.}\ \bibnamefont {Coley}}\ and\ \bibinfo {author} {\bibfnamefont {R.~J.}\ \bibnamefont {van~den Hoogen}},\ }\href {\doibase 10.1103/PhysRevD.62.023517} {\bibfield  {journal} {\bibinfo  {journal} {Phys. Rev. D}\ }\textbf {\bibinfo {volume} {62}},\ \bibinfo {pages} {023517} (\bibinfo {year} {2000})},\ \Eprint {http://arxiv.org/abs/gr-qc/9911075} {arXiv:gr-qc/9911075} \BibitemShut {NoStop}%
\bibitem [{\citenamefont {Guo}\ \emph {et~al.}(2003)\citenamefont {Guo}, \citenamefont {Piao},\ and\ \citenamefont {Zhang}}]{Guo:2003eu}%
  \BibitemOpen
  \bibfield  {author} {\bibinfo {author} {\bibfnamefont {Z.~K.}\ \bibnamefont {Guo}}, \bibinfo {author} {\bibfnamefont {Y.-S.}\ \bibnamefont {Piao}}, \ and\ \bibinfo {author} {\bibfnamefont {Y.-Z.}\ \bibnamefont {Zhang}},\ }\href {\doibase 10.1016/j.physletb.2003.06.004} {\bibfield  {journal} {\bibinfo  {journal} {Phys. Lett. B}\ }\textbf {\bibinfo {volume} {568}},\ \bibinfo {pages} {1} (\bibinfo {year} {2003})},\ \Eprint {http://arxiv.org/abs/hep-th/0304048} {arXiv:hep-th/0304048} \BibitemShut {NoStop}%
\bibitem [{\citenamefont {Giddings}\ \emph {et~al.}(2002)\citenamefont {Giddings}, \citenamefont {Kachru},\ and\ \citenamefont {Polchinski}}]{Giddings:2001yu}%
  \BibitemOpen
  \bibfield  {author} {\bibinfo {author} {\bibfnamefont {S.~B.}\ \bibnamefont {Giddings}}, \bibinfo {author} {\bibfnamefont {S.}~\bibnamefont {Kachru}}, \ and\ \bibinfo {author} {\bibfnamefont {J.}~\bibnamefont {Polchinski}},\ }\href {\doibase 10.1103/PhysRevD.66.106006} {\bibfield  {journal} {\bibinfo  {journal} {Phys. Rev.}\ }\textbf {\bibinfo {volume} {D66}},\ \bibinfo {pages} {106006} (\bibinfo {year} {2002})},\ \Eprint {http://arxiv.org/abs/hep-th/0105097} {arXiv:hep-th/0105097 [hep-th]} \BibitemShut {NoStop}%
\end{thebibliography}%

\end{document}